\newcommand{\tuple}[1]{\langle #1 \rangle}
\newcommand{\Catc}{\mathscr{C}_c}
\mathchardef\mhyphen="2D 
\tikzset{
  small dot/.style={
    circle, inner sep=0pt, 
    minimum size=1mm, fill=black
  }
}
\tikzset{
  big dot/.style={
    circle, inner sep=0pt, 
    minimum size=2mm, fill=black
  }
}
\newlist{enumaa}{enumerate*}{5}
\setlist[enumaa]{label={(\emph{\alph*})}}
\newcommand{\leftrarrows}{\mathrel{\raise.75ex\hbox{\oalign{%
  $\scriptstyle\leftarrow$\cr
  \vrule width0pt height.5ex$\hfil\scriptstyle\relbar$\cr}}}}
\newcommand{\lrightarrows}{\mathrel{\raise.75ex\hbox{\oalign{%
  $\scriptstyle\relbar$\hfil\cr
  $\scriptstyle\vrule width0pt height.5ex\smash\rightarrow$\cr}}}}
\newcommand{\Rrelbar}{\mathrel{\raise.75ex\hbox{\oalign{%
  $\scriptstyle\relbar$\cr
  \vrule width0pt height.5ex$\scriptstyle\relbar$}}}}
\newcommand{\longleftrightarrows}{\leftrarrows\joinrel\Rrelbar\joinrel\lrightarrows}
\begin{document}
\title{Reasoning in the Description Logic $\mathcal{ALC}$ under Category Semantics}
\author{Ludovic Brieulle  \and Chan {Le Duc} \and Pascal Vaillant }

\institute{Universit\'e Sorbonne Paris Nord, LIMICS, INSERM, U1142, F-93000, Bobigny, France\\
}

\maketitle
\begin{abstract}
 We present in this paper a reformulation of the usual set-theoretical semantics of the description logic  $\mathcal{ALC}$ with general TBoxes  by using categorical language. In this setting, $\mathcal{ALC}$ concepts are represented as objects, concept subsumptions as  arrows, and memberships as logical quantifiers over objects and arrows of  categories. 
Such a category-based semantics provides a more modular representation of the semantics of $\mathcal{ALC}$. This feature allows us to define a sublogic of $\mathcal{ALC}$ by dropping the interaction between existential and universal restrictions, which would be responsible for an exponential complexity in space. Such a sublogic is undefinable in the usual set-theoretical semantics.  We show that  this sublogic  is {\sc{PSPACE}}  by proposing a deterministic algorithm for checking concept satisfiability which runs  in polynomial space.
\end{abstract}

\section{Introduction} \label{sec:intro}
Languages based on Description Logics (DLs) \cite{baa10} such as OWL \cite{pat04}, OWL2 \cite{grau08}, are widely used to represent  ontologies   in semantics-based applications. $\mathcal{ALC}$ is the smallest DL involving roles which is closed under negation. It is a suitable logic for a first attempt to replace the usual set-theoretical semantics by another one.  A pioneer work by Lawvere \cite{law64} provided  an appropriate axiomatization of the category of sets by replacing set membership with the composition of functions. However, it was not indicated whether the categorical axioms are ``semantically"  equivalent to the axioms based on set membership. As pointed out by Goldblatt \cite{gol06}, this may lead to a very different semantics for negation.   In this paper,  we propose a rewriting of  the usual set-theoretical  semantics of $\mathcal{ALC}$ by using objects and arrows, the two fundamental  elements of category theory, to represent concepts and subsumptions respectively.

It is well known that  $\mathcal{ALC}$ with general TBoxes is {\sc{EXPTIME}}-complete \cite{schild1991,baader2017} while $\mathcal{ALC}$ without   TBox is {\sc{PSPACE}}-complete \cite{Schmidt1991}. Hence, the interaction between existential and universal restrictions with general TBoxes would be responsible for an intractable space complexity if {\sc{PSPACE}}$\subsetneq ${\sc{EXPTIME}}. The main motivation of this work consists in  identifying a sublogic of $\mathcal{ALC}$ with general TBoxes which allows to reduce the reasoning complexity without losing too much of the expressive power.  

To reduce space complexity when reasoning with $\mathcal{ALC}$ under the usual set-theoretical semantics, one can restrict expressiveness of TBoxes by preventing them from having cyclic axioms~\cite{baa2008}. This restriction may be too strong for those who wish to express simple knowledge of cyclic nature such as $\mathsf{Human} \sqsubseteq \exists \mathsf{hasParent}.\mathsf{Human}$.

However, we can find a sublogic of $\mathcal{ALC}$ under category-theoretical semantics such that it allows to fully express general TBoxes and only needs to drop a part of the semantics of universal restrictions. Such a sublogic is undefinable in the usual set-theorectical semantics. Indeed, a universal restriction $\forall R.C$ in  $\mathcal{ALC}$ can be defined under category-theoretical semantics by using the following two informal properties (that will be developed in more detail below, Section~\ref{sec:cat-semantics}, Definition~\ref{def:forall}): (i) $\forall R.C$ is ``very close'' to $\neg \exists R.\neg C$; and (ii)    $\exists R.C\sqcap \forall R.D$ is ``very close'' to $\exists R.(C\sqcap  D)$.

We can observe that Property~(ii) is a (weak) representation of the interaction between universal and existential restrictions.

As will be shown below (Section~\ref{sec:reasoning}), if we may just remove this interaction from the categorical semantics of universal restriction, we obtain a new logic, namely $\mathcal{ALC}_{\overline{\forall}}$, in which reasoning will be tractable in space.

The semantic loss caused by this removal might be tolerable in certain cases. We illustrate it with the example below.

\begin{example}
  Consider for instance the following $\mathcal{ALC}$ TBox:
  
  \begin{tabular}{l}
      $\mathsf{HappyChild} \sqsubseteq \exists \mathsf{eatsFood}.\mathsf{Dessert} \sqcap  \forall \mathsf{eatsFood}. \mathsf{HotMeal}$ \\
       $\mathsf{Dessert} \sqsubseteq \neg \mathsf{HotMeal}$
\end{tabular}

\noindent  As in the usual set-theoretical semantics, nobody can be a $\mathsf{HappyChild}$ under category-theoretical semantics: the concept is unsatisfiable in this world. Indeed, according to the first axiom if somebody is a $\mathsf{HappyChild}$, then they must simultaneously (1)~have some $\mathsf{Dessert}$ to eat, and (2)~eat only $\mathsf{HotMeal}$, which contradicts the second axiom.

Now, if the second axiom is removed from the TBox, under set-theoretical semantics, the first axiom entails that if somebody is a $\mathsf{HappyChild}$, then they eat $\mathsf{HotMeal}$. In fact, the first subconcept ($\exists \mathsf{eatsFood}.\mathsf{Dessert}$) ensures that there exists at least one food item that they eat, and the second one ($\forall \mathsf{eatsFood}. \mathsf{HotMeal}$) that this food is necessarily $\mathsf{HotMeal}$.

Under the category-theoretical semantics that we will define in this paper ($\mathcal{ALC}_{\overline{\forall}}$), the first axiom alone does not allow to entail that if somebody is a $\mathsf{HappyChild}$ then they eat $\mathsf{HotMeal}$. This is due to the fact that the element of the definition of universal quantification that was there to represent Property~(ii) has been dropped, and that unlike set-theoretical semantics, we no longer have a set of individuals providing an extensional ``support'' for the second subconcept.
\end{example}

The paper is organized as follows. We  begin by translating  semantic constraints related to each $\mathcal{ALC}$ constructor into arrows between objects. Then, we check whether the obtained arrows allow to restore usual properties. If it is not the case, we add  new arrows and objects to capture the missing properties without going beyond the set-theoretical semantics. For instance, it is not sufficient to define category-theoretical semantics of negation $\neg C$ by stating $C \sqcap \neg C \longrightarrow \bot$ and $\top \longrightarrow C \sqcup \neg C$ because it is not possible to obtain  arrows such as  $C \longleftrightarrows \neg \neg C$ from this definition \cite{leduc2021}. In Section~\ref{sec:reasoning}, we begin by identifying a sublogic of $\mathcal{ALC}$, namely $\mathcal{ALC}_{\overline{\forall}}$, by dropping a property from the categorical semantics of universal restriction which would lead to an intractable complexity in space. We show that this sublogic is strictly different from  $\mathcal{ALC}$ and propose a {\sc{PSPACE}} deterministic algorithm for checking concept satisfiability in  $\mathcal{ALC}_{\overline{\forall}}$.

\section{Related Work}\label{sec:tableau-connection}
In this section, we discuss some results on {\sc{PSPACE}} algorithms for Description Logics that are slightly more expressive than $\mathcal{ALC}$.  A {\sc{PSPACE}} algorithm \cite{hor99} was presented for the logic  $\mathcal{SI}$ ($\mathcal{ALC}$ with  inverse and transitive roles) without TBox. Tableau method is used in this algorithm to build a tree to represent a model. Since TBox is not allowed, the depth of such trees is bounded by a polynomial function in the size of input. When extending this method to a general TBox, the depth of such trees may be exponential. A more recent work \cite{baa2008} proposed an automata-based algorithm for $\mathcal{SI}$ with an acyclic TBox. This algorithm tries to build a \emph{tree automaton} to represent a model.  In this construction, the acyclicity of a TBox prevents the algorithm from building a tree automaton with an exponential depth.         

A common issue of these algorithms is that they need to store backtracking points in order to deal with nondeterminism. When the depth is exponential, they have to use an exponential memory to store backtracking points. 

There have been very few works on connections between category theory and DLs. Spivak et al.~\cite{spivak2012} used  category theory to define a high-level graphical language comparable with OWL (based on DLs) for knowledge representation rather than a foundational formalism for reasoning.
Codescu \textit{et al.}~\cite{CodescuMK17} introduced a categorical framework to formalize a network of aligned ontologies. The formalism on which  the framework is based is independent from the logic used in ontologies. It is shown that all global reasoning over such a network can be reduced to local reasoning for a given DL used in the ontologies involved in the network. As a consequence, the semantics of DLs employed in the ontologies continue to rely on set theory.

\section{Syntax and set-theoretical semantics of $\mathcal{ALC}$} \label{sec:alc}

We present syntax and semantics of the Description Logic $\mathcal{ALC}$ \cite{baa10} with  TBoxes and some basic inference  problems.

\begin{definition}[Syntax and set-theoretical  semantics]\label{def:alc}
Let  $\mathbf{R}$ and  $\mathbf{C}$   be non-empty sets of \emph{role names} and \emph{concept names}    respectively. 
The set of $\mathcal{ALC}$-concepts is defined as the smallest set containing all concept names  in $\mathbf{C}$ with  $\top$, $\bot$ and complex concepts that are inductively defined as follows: $C\sqcap D$, $C\sqcup D$, $\neg C$, $\exists R.C$, $\forall R.C$ where  $C$ and $D$ are $\mathcal{ALC}$-concepts, and $R$ is a role name in $\mathbf{R}$. An axiom $C\sqsubseteq D$ is called a general concept inclusion (GCI) where $C,D$ are (possibly complex) $\mathcal{ALC}$-concepts. An  $\mathcal{ALC}$ ontology  $\mathcal{O}$ (or general TBox) is a finite set of GCIs. 
 
 An interpretation ${\cal I}=\langle\Delta^{\cal{I}},\cdot^{\cal{I}}\rangle$ consists of a non-empty set $\Delta^{\cal{I}}$ ({\em domain}), and a function $\cdot^{\cal{I}}$ ({\em interpretation function}) which  associates a subset of $\Delta^{\cal{I}}$ to each concept name, an element  in $\Delta^\mathcal{I}$ to each individual, and a subset of $\Delta^{\cal{I}}\times \Delta^{\cal{I}}$ to each role name such that  
\begin{center}
\begin{tabular}{@{}l@{}ll@{}} 
$\top^{\cal{I}}$ & $=$ & $\Delta^{\cal{I}}$\\ 
$\bot^{\cal I}$ &$=$ & $\emptyset$\\
$~(C\sqcap D)^{\cal I}$ &$=$ & $C^{\cal I}\cap D^{\cal I}$\\   
$(C\sqcup D)^{\cal I}$ &$=$ & $C^{\cal I}\cup D^{\cal I}$\\
$(\neg C)^{\cal I}$ &$=$ & $\Delta^{\cal{I}}\setminus C^{\cal I}$\\
$(\exists R.C)^{\cal I}$ &$=$ & $\{x\!\in\!\Delta^{\cal{I}}\!\mid\!\exists y\!\in\!\Delta^{\cal{I}},\!\langle x,y\rangle\!\in\!{R^{\cal I}}\!\wedge\!y\!\in\! C^{\cal I}\}$\\
$(\forall R.C)^{\cal I}$ &$=$ & $\{x\!\in\!\Delta^{\cal{I}}\!\mid\! \forall y\!\in\!\Delta^{\cal{I}},\langle x,y\rangle\!\in\! {R^{\cal I}}\!\implies \! y\!\in\! C^{\cal I}\}$\\
\end{tabular}  
\end{center}

An interpretation $\mathcal{I}$ satisfies  a GCI $C\sqsubseteq D$  if $C^{\cal I}\subseteq D^{\cal I}$.
$\mathcal{I}$ is a model of  $\mathcal{O}$, written $\mathcal{I}\models  \mathcal{O}$, if   $\mathcal{I}$ satisfies  each GCI of $\mathcal{O}$.   In this case, we say that $\mathcal{O}$ is {\em set-theoretically consistent}, and {\em set-theoretically inconsistent} otherwise. A concept $C$ is {\em set-theoretically satisfiable} with respect to $\mathcal{O}$ if there is a model $\mathcal{I}$ of $\mathcal{O}$ such that $C^\mathcal{I}\neq \emptyset$, and {\em set-theoretically unsatisfiable} otherwise. We say that a GCI $C\sqsubseteq D$ is {\em set-theoretically entailed} by $\mathcal{O}$, written $\mathcal{O}\models C\sqsubseteq D$, if $C^\mathcal{I}\subseteq D^\mathcal{I}$ for all models $\mathcal{I}$ of $\mathcal{O}$. The pair $\langle\mathbf{C},  \mathbf{R}\rangle$  is called the signature of $\mathcal{O}$.
\end{definition}

We finish this section by introducing notations  that will be used in the paper. We use $\lvert S \rvert$ to denote the cardinality of a set $S$. Given an $\mathcal{ALC}$ ontology $\mathcal{O}$, we denote by $\mathsf{sub}(\mathcal{O})$ the set of all sub-concepts occurring in   $\mathcal{O}$. The size of an ontology $\mathcal{O}$, denoted by $\norm{\mathcal{O}}$,  is   the size of all GCIs. Analogously, we use $\norm{C}$ to denote the size of a concept $C$. It holds that $\lvert \mathsf{sub}(\mathcal{O})\lvert$ is polynomially bounded by $\norm{ \mathcal{O}}$ since if a concept is represented as string then a sub-concept is a substring.

\section{Category-theoretical semantics of $\mathcal{ALC}$} \label{sec:cat-semantics}

We can observe that the set-theoretical semantics of $\mathcal{ALC}$ is based on  set membership relationships, while ontology inferences, such as consistency or concept subsumption, involve set inclusions. This explains why inference algorithms developed in this setting often have to build sets of individuals connected in some way for representing a model.

In this section, we use some basic notions in category theory to characterize the semantics of $\mathcal{ALC}$. Instead of set membership, in this categorical language, we use ``objects'' and ``arrows'' to define semantics of a given object.




Although the present paper is self-contained, we refer the readers to textbooks   \cite{gol06,saunders92} on category theory for further information.

\begin{definition}[Syntax categories]\label{def:syntax-cat}
Let  $\mathbf{R}$ and  $\mathbf{C}$   be non-empty sets of \emph{role names} and \emph{concept names}    respectively. We define a \emph{concept syntax category} $\mathscr{C}_c$ and a \emph{role syntax category} $\mathscr{C}_r$ from the signature $\langle\mathbf{C},  \mathbf{R}\rangle$  as follows:
 
\begin{enumerate}
   
\item\label{def:syntax-cat:1} Each role name $R$ in  $\mathbf{R}$ is an object $R$  of   $\mathscr{C}_r$. In particular, there are initial and terminal objects $R_\bot$ and $R_\top$ in     $\mathscr{C}_r$ with arrows $R\longrightarrow R_\top$ and $R_\bot\longrightarrow R$   for all object $R$ of  $\mathscr{C}_r$. There is also an identity arrow $R\longrightarrow R$ for each object $R$  of  $\mathscr{C}_r$.
 
    \item\label{def:syntax-cat:2} Each concept name in $\mathbf{C}$  is an object of $\mathscr{C}_c$. In particular, $\bot$ and $\top$  are respectively initial and terminal objects, i.e. there are arrows $C\longrightarrow \top$ and $\bot \longrightarrow C$ for each object $C$ of  $\mathscr{C}_c$. Furthermore, for each object  $C$  of  $\mathscr{C}_c$   there is an identity arrow $C\longrightarrow C$, 
    and for each object  $R$  of  $\mathscr{C}_r$ there  are two objects of $\mathscr{C}_c$, namely $\mathsf{dom}(R)$ and $\mathsf{cod}(R)$.

\item\label{def:syntax-cat:3} If there are arrows $E\longrightarrow F$ and $F\longrightarrow G$ in $\mathscr{C}_c$ (resp. $\mathscr{C}_r$), then there is an arrow $E\longrightarrow G$ in $\mathscr{C}_c$ (resp. $\mathscr{C}_r$).

\item \label{def:syntax-cat:4}There are two functors $\mathsf{dom}$ and $\mathsf{cod}$ from  $\mathscr{C}_r$ to $\mathscr{C}_c$, i.e.  they associate two objects $\mathsf{dom}(R)$ and $\mathsf{cod}(R)$ of $\mathscr{C}_c$ to each object $R$ of $\mathscr{C}_r$ such that  
\begin{enumerate}
\item $\mathsf{dom}(R_\top)=\top$, $\mathsf{cod}(R_\top)=\top$, $\mathsf{dom}(R_\bot)=\bot$ and $\mathsf{cod}(R_\bot)=\bot$.
    \item 
if there is an arrow $R\longrightarrow R'$ in $\mathscr{C}_r$ then there are arrows $\mathsf{dom}(R)\longrightarrow \mathsf{dom}(R')$ and $\mathsf{cod}(R)\longrightarrow \mathsf{cod}(R')$.
\item if there are arrows $R\longrightarrow R'\longrightarrow R''$ in $\mathscr{C}_r$ then there are arrows $\mathsf{dom}(R)\longrightarrow \mathsf{dom}(R'')$ and $\mathsf{cod}(R)\longrightarrow \mathsf{cod}(R'')$.

\item if there is an arrow $\mathsf{dom}(R)\longrightarrow \bot$ or  $\mathsf{cod}(R)\longrightarrow \bot$ in $\mathscr{C}_c$, then there is  an arrow $R\longrightarrow R_\bot$ in $\mathscr{C}_r$. 
\end{enumerate}

\end{enumerate}
For each arrow $E\longrightarrow F$ in $\mathscr{C}_c$ or $\mathscr{C}_r$, $E$ and $F$ are respectively  called \emph{domain} and \emph{codomain} of the  arrow. We  use also $\mathsf{Ob}(\mathscr{C})$ and $\mathsf{Hom}(\mathscr{C})$  to denote the collections of objects and arrows of a category $\mathscr{C}$.

\end{definition}

Definition~\ref{def:onto-cat} provides a general framework with syntax elements and necessary properties  from category theory. We need to ``instantiate" it to obtain  categories which capture  semantic constraints  coming from axioms. The following definition extends syntax categories in such a way that they admit the axioms of an $\mathcal{ALC}$  ontology as  arrows.

\begin{definition}[Ontology categories]\label{def:onto-cat}
Let $C$ be an $\mathcal{ALC}$ concept and $\mathcal{O}$ an $\mathcal{ALC}$ ontology from a  signature $\langle \mathbf{C}, \mathbf{R}\rangle$.  We define a \emph{concept ontology category} $\mathscr{C}_c\langle C, \mathcal{O}\rangle$ and a \emph{role ontology category} $\mathscr{C}_r\langle C, \mathcal{O}\rangle$  as follows:
\begin{enumerate}

\item $\mathscr{C}_c\langle C, \mathcal{O}\rangle$ and  $\mathscr{C}_r\langle C, \mathcal{O}\rangle$ are syntax categories from  $\langle \mathbf{C}, \mathbf{R}\rangle$.

\item $C$ is an object of $\mathscr{C}_c\langle C, \mathcal{O}\rangle$. 
 
\item If $E \sqsubseteq F$ is an axiom of $\mathcal{O}$, then $E, F$ are objects and $E\longrightarrow F$ is an arrow of  $\mathscr{C}_c\langle C, \mathcal{O}\rangle$.


\end{enumerate}
\end{definition}

In this paper, an object of $\mathscr{C}_c\langle C, \mathcal{O}\rangle$ and $\mathscr{C}_r\langle C, \mathcal{O}\rangle$ is called concept and role object respectively. We transfer  the vocabulary used in Description Logics to  categories as follows. A concept object $C\sqcup D$, $C\sqcap D$ or $\neg C$ is respevtively called disjunction, conjunction and negation object. For an existential  restriction object $\exists R.C$ or universal  restriction object $\forall R.C$, $C$ is called the \emph{filler} of $\exists R.C$ and $\forall R.C$.

In the sequel,  we introduce category-theoretical semantics of  disjunction,  conjunction, negation, existential and universal restriction objects if they appear in $\mathscr{C}_c\langle C, \mathcal{O}\rangle$.
 Some of them require more \emph{explicit} properties than those needed for the set-theoretical semantics. This is due to the fact that set membership is translated into arrows in a syntax category.  Since semantics of an object in a category depends to relationships with another ones, the following definitions need to add to $\mathscr{C}_c\langle C, \mathcal{O}\rangle$ and $\mathscr{C}_r\langle C, \mathcal{O}\rangle$ new objects and arrows. 

\begin{definition}[Category-theoretical semantics of  disjunction]\label{def:disj}
Let $C, D, C \sqcup D$ be   concept objects of $\mathscr{C}_c\langle C, \mathcal{O}\rangle$.  Category-theoretical semantics of $C \sqcup D$ is defined by using arrows in $\mathscr{C}_c\langle C, \mathcal{O}\rangle$ as follows. There are arrows $i,j$ from $C$ and $D$ to $C\sqcup D$, and if there is an object $X$ and arrows $f,g$ from $C,D$ to $X$, then there is  an arrow $k$ such that  the following diagram commutes  :

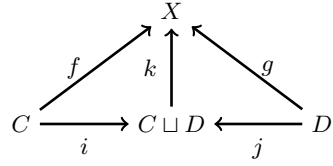
\begin{figure}[!htbp]
\centering
\begin{tikzpicture}[line width=1pt] 
\node[] (AB) at (0,-1) [label={[xshift=-0.3cm, yshift=-0.6cm]}] {$C \sqcup D$};
\node[] (A) at (-2,-1) [label={[xshift=-0.3cm, yshift=-0.6cm]}] {$C$};
\node[] (B) at (2,-1) [label={[xshift=-0.3cm, yshift=-0.6cm]}] {$D$};
\node[] (X) at (0,0.5) [label={[xshift=-0.3cm, yshift=-0.6cm]}] {$X$};

 \draw[->] (A) to[]node[below=2pt]{${i}$} (AB);
 \draw[->] (B) to[]node[below=2pt]{${j}$} (AB);
 \draw[->] (A) to[]node[left=2pt]{${f}$} (X);
 \draw[->] (B) to[]node[right=2pt]{${g}$} (X);
 \draw[->] (AB) to[]node[left=2pt]{${k}$} (X); 
\end{tikzpicture}
\caption{Commutative diagram for disjunction}\label{fig:disj}
\end{figure}

The diagram in Figure~\ref{fig:disj} can be rephrased  as follows:
\begin{align}
&C \longrightarrow C \sqcup D \text{ and } D  \longrightarrow  C \sqcup D   \label{disj01} \\
& \forall X, C    \longrightarrow  X \text{ and } D \longrightarrow  X  \Longrightarrow  C \sqcup D  \longrightarrow X \label{disj02}  
\end{align} 
\end{definition}
 
Intuitively speaking, Properties~(\ref{disj01}) and (\ref{disj02}) tell us that  $C\sqcup D$ is the ``smallest" object which is ``greater" than $C$ and $D$. The following lemma establishes the connection between the usual set semantics of disjunction and the category-theoretical one given in Definition~\ref{def:disj}. In this lemma, Properties~(\ref{disj1}) and (\ref{disj2}) are rewritings of Properties~(\ref{disj01}) and (\ref{disj02}) in set theory. 

\begin{lemma}\label{lem:disj} The category-theorectical semantics of $C\sqcup D$ characterized by  Definition~\ref{def:disj} is compatible with the set-theoretical semantics of $C\sqcup D$, that means
if  $\langle \Delta^\mathcal{I}, \cdot^\mathcal{I}\rangle$ is an interpretation   under  set-theoretical semantics, then the following holds:

$(C\sqcup D)^\mathcal{I}=C^\mathcal{I}\cup D^\mathcal{I}$  iff
\begin{align}
&C^\mathcal{I} \subseteq (C  \sqcup D)^\mathcal{I  }, D^\mathcal{I}  \subseteq  (C  \sqcup D)^\mathcal{I}  \label{disj1}   \\
& \forall X\subseteq \Delta^\mathcal{I}, C^\mathcal{I}    \subseteq  X,   D^\mathcal{I}  \subseteq  X  \Longrightarrow  (C \sqcup D)^\mathcal{I}  \subseteq X   \label{disj2}  
\end{align}
\end{lemma}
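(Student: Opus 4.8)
The plan is to prove the biconditional in two directions, treating the set-theoretical identity $(C\sqcup D)^\mathcal{I}=C^\mathcal{I}\cup D^\mathcal{I}$ as the "reference" semantics and Properties~(\ref{disj1})--(\ref{disj2}) as the categorical universal property transcribed into set inclusions. The forward direction is the routine one: assuming $(C\sqcup D)^\mathcal{I}=C^\mathcal{I}\cup D^\mathcal{I}$, Property~(\ref{disj1}) follows immediately because $C^\mathcal{I}\subseteq C^\mathcal{I}\cup D^\mathcal{I}$ and $D^\mathcal{I}\subseteq C^\mathcal{I}\cup D^\mathcal{I}$; and Property~(\ref{disj2}) follows because if $C^\mathcal{I}\subseteq X$ and $D^\mathcal{I}\subseteq X$ for some $X\subseteq\Delta^\mathcal{I}$, then $C^\mathcal{I}\cup D^\mathcal{I}\subseteq X$, i.e. $(C\sqcup D)^\mathcal{I}\subseteq X$. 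In other words, this direction just restates that a union is a least upper bound in the poset of subsets of $\Delta^\mathcal{I}$ ordered by inclusion.

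The converse direction is the one requiring a small argument: assume (\ref{disj1}) and (\ref{disj2}) hold, and show $(C\sqcup D)^\mathcal{I}=C^\mathcal{I}\cup D^\mathcal{I}$. The inclusion $C^\mathcal{I}\cup D^\mathcal{I}\subseteq(C\sqcup D)^\mathcal{I}$ is immediate from (\ref{disj1}). For the reverse inclusion $(C\sqcup D)^\mathcal{I}\subseteq C^\mathcal{I}\cup D^\mathcal{I}$, the trick is to \emph{instantiate} the universally quantified $X$ in (\ref{disj2}) with the witness $X := C^\mathcal{I}\cup D^\mathcal{I}$. Then $C^\mathcal{I}\subseteq X$ and $D^\mathcal{I}\subseteq X$ hold trivially, so (\ref{disj2}) yields $(C\sqcup D)^\mathcal{I}\subseteq C^\mathcal{I}\cup D^\mathcal{I}$. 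Combining the two inclusions gives equality. This is the standard "uniqueness of universal constructions up to the relevant notion of sameness" argument, here degenerating to genuine set equality because the ambient category is a poset.

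The main (and only mild) obstacle is making precise that Properties~(\ref{disj1}) and (\ref{disj2}) are the faithful set-theoretic transcriptions of the categorical Properties~(\ref{disj01}) and (\ref{disj02}): an arrow $E\longrightarrow F$ in $\mathscr{C}_c\langle C,\mathcal{O}\rangle$ corresponds to $E^\mathcal{I}\subseteq F^\mathcal{I}$ under an interpretation, and the object $X$ ranges over arbitrary subsets of $\Delta^\mathcal{I}$ (since in the set model every subset is the extension of some object). Once this correspondence is stated, both directions are a two-line verification. I would therefore structure the proof as: (1) briefly recall the arrow-to-inclusion correspondence; (2) prove $\Rightarrow$ by noting union is an upper bound and a least one; (3) prove $\Leftarrow$ by the instantiation $X := C^\mathcal{I}\cup D^\mathcal{I}$; (4) conclude by antisymmetry of $\subseteq$.
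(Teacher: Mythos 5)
Your proposal is correct and follows essentially the same route as the paper: the forward direction is the routine verification that a union is a least upper bound, and the converse direction hinges on instantiating the universally quantified $X$ in Property~(\ref{disj2}) with $C^\mathcal{I}\cup D^\mathcal{I}$, exactly as the paper does. No gaps.
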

\begin{proof}
 \noindent ``$\Longleftarrow$". Due to (\ref{disj1}) we have $C^\mathcal{I}\cup D^\mathcal{I} \subseteq (C\sqcup D)^\mathcal{I}$. Let   $X=C^\mathcal{I}\cup D^\mathcal{I}$. Due to (\ref{disj2}) we have  $(C\sqcup D)^\mathcal{I} \subseteq X=C^\mathcal{I}\cup D^\mathcal{I}$. 
 
 \noindent ``$\Longrightarrow$".  From $(C\sqcup D)^\mathcal{I}=C^\mathcal{I}\cup D^\mathcal{I}$, we have (\ref{disj1}).  Let $x\in (C\sqcup D)^\mathcal{I}$. Due to $(C\sqcup D)^\mathcal{I}=C^\mathcal{I}\cup D^\mathcal{I}$, we have  $x\in C^\mathcal{I}$ or $x\in D^\mathcal{I}$. Hence, $x\in X$  since $C^\mathcal{I}    \subseteq  X$ and  $D^\mathcal{I}  \subseteq  X$. 
\end{proof}
 
At first glance, one can follow the same idea used in Definition~\ref{def:disj}   to define  category-theoretical semantics of $C\sqcap D$ as described in Figure~\ref{fig:conj}. They tell us that   $C\sqcap D$ is the ``greatest" object which is ``smaller" than $C$ and $D$.

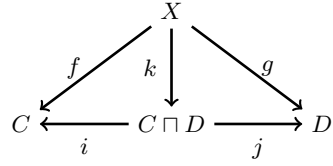
\begin{figure}[!htbp]
\centering
\begin{tikzpicture}[line width=1pt] 
\node[] (AB) at (0,-1) [label={[xshift=-0.3cm, yshift=-0.6cm]}] {$C \sqcap D$};
\node[] (A) at (-2,-1) [label={[xshift=-0.3cm, yshift=-0.6cm]}] {$C$};
\node[] (B) at (2,-1) [label={[xshift=-0.3cm, yshift=-0.6cm]}] {$D$};
\node[] (X) at (0,0.5) [label={[xshift=-0.3cm, yshift=-0.6cm]}] {$X$};

 \draw[->] (AB) to[]node[below=2pt]{${i}$} (A);
 \draw[->] (AB) to[]node[below=2pt]{${j}$} (B);
 \draw[->] (X) to[]node[left=2pt]{${f}$} (A);
 \draw[->] (X) to[]node[right=2pt]{${g}$} (B);
 \draw[->] (X) to[]node[left=2pt]{${k}$} (AB); 
\end{tikzpicture}
\caption{Commutative diagram for a  weak conjunction}\label{fig:conj}
\end{figure}

 However, this definition is not strong enough to entail the
distributive property  of conjunction over disjunction. Hence, we need a stronger definition for conjunction.

\begin{definition}[Category-theoretical semantics of  conjunction]\label{def:conj}
 
Let $C, D, E, C \sqcap D, C \sqcap E$, $D \sqcup E$, $C\sqcap (D \sqcup E)$  and $(C \sqcap D)\sqcup  (C \sqcap E)$ be   objects of $\mathscr{C}_c\langle C, \mathcal{O}\rangle$. Category-theoretical semantics of $C \sqcap D$ is defined by using the following properties in $\mathscr{C}_c\langle C, \mathcal{O}\rangle$. 
\begin{align}
&C \sqcap D\longrightarrow  C \text{ and } C \sqcap D  \longrightarrow  D   \label{conj01} \\
& \forall X, X  \longrightarrow  C \text{ and } X \longrightarrow  D    \Longrightarrow  X  \longrightarrow C \sqcap D \label{conj1}\\
& C \sqcap (D \sqcup E) \longrightarrow (C\sqcap D)\sqcup  (C\sqcap E)\label{conj2} 
\end{align} 
\end{definition}

\begin{lemma}
\label{lem:comas}
The following properties hold:
\begin{align}
    &C\sqcup D\longleftrightarrows D\sqcup C \label{com01}\\
    &C\sqcap D\longleftrightarrows D\sqcap C \label{com02}\\
    &(C\sqcup D)\sqcup E\longleftrightarrows C\sqcup(D\sqcup E) \label{asso01}\\
    &(C\sqcap D)\sqcap E\longleftrightarrows C\sqcap(D\sqcap E) \label{asso02}
\end{align}
In other words, $\sqcup$ and $\sqcap$ are commutative and associative with relation to
the objects.
\end{lemma}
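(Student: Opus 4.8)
The plan is to treat the four claims as the familiar statements ``coproducts are commutative and associative up to canonical arrows'' (for $\sqcup$) and ``products are commutative and associative up to canonical arrows'' (for $\sqcap$), where in the present setting ``up to isomorphism'' is weakened to ``there are arrows in both directions'', i.e.\ $\longleftrightarrows$. Note first that only the universal-property fragments of the definitions are needed: Properties~(\ref{disj01})--(\ref{disj02}) for $\sqcup$ and Properties~(\ref{conj01})--(\ref{conj1}) for $\sqcap$; the distributivity axiom~(\ref{conj2}) plays no role here. Note also that we only ever need the \emph{existence} of an arrow in each direction, not commutativity or uniqueness of the mediating arrow, which makes all the diagram chases purely a matter of composing arrows via Definition~\ref{def:syntax-cat}(\ref{def:syntax-cat:3}). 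Throughout, I would make explicit the standing assumption — implicit in the statement, since the displayed compound concepts occur — that every compound object appearing below ($D\sqcup C$, $D\sqcap C$, $D\sqcup E$, $C\sqcup D$, etc.) is an object of $\mathscr{C}_c\langle C,\mathcal{O}\rangle$ and hence carries the disjunction/conjunction semantics of Definitions~\ref{def:disj}--\ref{def:conj}.

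For~(\ref{com01}): applying~(\ref{disj01}) to $D\sqcup C$ gives arrows $C\longrightarrow D\sqcup C$ and $D\longrightarrow D\sqcup C$; feeding these into~(\ref{disj02}) for $C\sqcup D$ with $X=D\sqcup C$ yields $C\sqcup D\longrightarrow D\sqcup C$. Swapping the roles of $C$ and $D$ gives the reverse arrow, hence $C\sqcup D\longleftrightarrows D\sqcup C$. Property~(\ref{com02}) is the dual: by~(\ref{conj01}) there are arrows $D\sqcap C\longrightarrow C$ and $D\sqcap C\longrightarrow D$, so~(\ref{conj1}) applied with $X=D\sqcap C$ gives $D\sqcap C\longrightarrow C\sqcap D$, and symmetrically the reverse arrow.

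For associativity~(\ref{asso01}): to build $(C\sqcup D)\sqcup E\longrightarrow C\sqcup(D\sqcup E)$, first obtain $E\longrightarrow D\sqcup E\longrightarrow C\sqcup(D\sqcup E)$ by composing instances of~(\ref{disj01}); next obtain $C\longrightarrow C\sqcup(D\sqcup E)$ directly from~(\ref{disj01}) and $D\longrightarrow D\sqcup E\longrightarrow C\sqcup(D\sqcup E)$ by composition, and feed these two into~(\ref{disj02}) for $C\sqcup D$ to get $C\sqcup D\longrightarrow C\sqcup(D\sqcup E)$; finally apply~(\ref{disj02}) to $(C\sqcup D)\sqcup E$ with $X=C\sqcup(D\sqcup E)$. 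The reverse arrow is obtained by the mirror-image argument, and~(\ref{asso02}) follows by dualising every step (projections instead of injections, $\sqcap$ and~(\ref{conj01})--(\ref{conj1}) instead of $\sqcup$ and~(\ref{disj01})--(\ref{disj02})).

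The argument is essentially routine; the only thing that needs care is the bookkeeping in the associativity cases, namely making sure that each intermediate disjunction (resp.\ conjunction) object one passes through is indeed present in the category so that its universal arrows are available — which is why I would spell out the standing assumption above rather than leave it tacit. No single step is a genuine obstacle.
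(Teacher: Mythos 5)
Your proof is correct and follows essentially the same route as the paper's own argument: commutativity via the universal properties (\ref{disj01})--(\ref{disj02}) and (\ref{conj01})--(\ref{conj1}) with $X$ instantiated to the swapped object, and associativity by composing the injection/projection arrows through the intermediate disjunction or conjunction before invoking the mediating-arrow property. The only addition is your explicit remark that each intermediate compound object must be present in the category, which the paper handles implicitly through the hypotheses of Definitions~\ref{def:disj} and~\ref{def:conj}.
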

\begin{proof}
These properties are direct consequences of the previous definitions.

\begin{enumerate}[wide, labelwidth=!, labelindent=0pt]
\item We start by proving Property (\ref{com01}). Consider object $D\sqcup C$: from Property (\ref{disj01}), we have ${D\longrightarrow D\sqcup C}$ and ${C\longrightarrow D\sqcup C}$. Then, recall Property (\ref{disj02}) above, which states that for all object $X$ such that $C \longrightarrow  X$ and $D \longrightarrow  X$ then  $C \sqcup D\longrightarrow X$. Replace $X$ by $D\sqcup C$ to obtain ${C\sqcup
D\longrightarrow D\sqcup C}$. We obtain arrow ${D\sqcup C\longrightarrow C\sqcup D}$ by 
switching the roles of $D\sqcup C$ and ${C\sqcup D}$ from the above reasoning.

\item Property (\ref{com02}): use the first part of the conjunction definition (\ref{conj01}) on $D\sqcap C$ to obtain ${D\sqcap C\longrightarrow D}$ and ${D\sqcap C\longrightarrow C}$.  From Property~(\ref{conj1}), we know that for all object $X$ such that ${X\longrightarrow C}$ and ${X\longrightarrow D}$ then ${X\longrightarrow C\sqcap D}$. Thus replace $X$ by $D\sqcap C$ to get ${D\sqcap C\longrightarrow C\sqcap D}$. As above, the inverse arrow is obtained by swapping the roles of $D\sqcap C$ and $C\sqcap D$.

\item Property (\ref{asso01}): We start by proving that
\[
(C\sqcup D)\sqcup E\longrightarrow C\sqcup(D\sqcup E).
\]
Let $X = C\sqcup(D\sqcup E)$, we have $C\longrightarrow X$ by (\ref{disj01}),
as well as ${D\longrightarrow D\sqcup E\longrightarrow X}$, then by 
(\ref{disj02}), we get ${C\sqcup D \longrightarrow X}$. Noting we also have 
${E\longrightarrow D\sqcup E\longrightarrow X}$ by definition, we can conclude that 
\[
(C\sqcup D)\sqcup E\longrightarrow X.
\]
The other direction can be proven with no loss of generality by changing the order by which we apply the arrows.

\item Property (\ref{asso02}): As above, proving one direction is virtually the same as proving the other one by swapping the role of $C\sqcap D$ and $D\sqcap E$. Thus we just need to prove the following property:
\[
    C\sqcap(D\sqcap E)\longrightarrow (C\sqcap D)\sqcap E
\]
Let $X = C\sqcap(D\sqcap E)$. From (\ref{conj01}), we get 
${X\longrightarrow C}$ and ${X\longrightarrow D\sqcap E\longrightarrow D}$.
Applying (\ref{conj1}), we obtain $X\longrightarrow C\sqcap D$.
Property~(\ref{conj01}) also gives us ${X\longrightarrow D\sqcap 
E\longrightarrow E}$; so, applying (\ref{conj1}) once again, we end up
with:
\[
X\longrightarrow (C\sqcap D)\sqcap E
\]
which concludes the proof.
\end{enumerate}
\end{proof}

\begin{lemma}
\label{lem:trans}
Assume that we have $D\longrightarrow E$, then the following properties hold:
\begin{align}
    C\sqcup D&\longrightarrow C\sqcup E \label{trans01}\\
    C\sqcap D&\longrightarrow C\sqcap E. \label{trans02}
\end{align}
\end{lemma}
\begin{proof}
By hypothesis the arrow $D\longrightarrow E$ exists. If we let $X=C\sqcup E$,
then by (\ref{disj01}) we have ${C\longrightarrow X}$ and 
${E\longrightarrow X}$, since ${D\longrightarrow E}$, by definition we have 
${D\longrightarrow X}$ and then by (\ref{disj02}), 
${C\sqcup D\longrightarrow X}$.

Let $X = C\sqcap D$, then by hypothesis and (\ref{conj01}), we have 
${X\longrightarrow C}$ and ${X\longrightarrow D\longrightarrow E}$. Using 
(\ref{conj1}), we get $X\longrightarrow C\sqcap E$.
\end{proof}

 Note that under the set-theoretical semantics  the distributive property of disjunction over conjunction is not \emph{independent}, i.e. it is a consequence of the definitions of disjunction and conjunction. However, this does not hold under the category-theoretical semantics.  The following lemma provides the connection between the usual set semantics of conjunction and the category-theoretical one given in Definition~\ref{def:conj}. In this lemma, Properties~(\ref{conj001}-\ref{conj003}) 
  are rewritings of Properties~(\ref{conj01}-\ref{conj2}) in set theory.  

\begin{lemma}\label{lem:conj}
The category-theoretical semantics of $C\sqcap D$ characterized by  Definition~\ref{def:conj}
is compatible with the set-theoretical semantics of $C\sqcap D$, that means
if $\langle \Delta^\mathcal{I}, \cdot^\mathcal{I}\rangle$ is an interpretation under set-theoretical semantics, then the following holds:  

$(U\sqcap V)^\mathcal{I}=U^\mathcal{I}\cap V^\mathcal{I}$ for all concepts $U,V$  iff
\begin{align}
&(C \sqcap D)^\mathcal{I} \subseteq  C^\mathcal{I},    (C \sqcap D)^\mathcal{I} \subseteq  D^\mathcal{I}   \label{conj001} \\
& \forall X\subseteq \Delta^\mathcal{I}, X \subseteq  C^\mathcal{I}, X  \subseteq  D^\mathcal{I} \Longrightarrow 
X  \subseteq (C \sqcap D)^\mathcal{I}  \label{conj002}\\
&(C \sqcap (D \sqcup E))^\mathcal{I} \subseteq  ((C \sqcap D)\sqcup (C \sqcap E))^\mathcal{I}  \label{conj003}\\
&\hspace{1cm}\text{for all concepts } C,D \text{ and } E.\nonumber
\end{align}
\end{lemma}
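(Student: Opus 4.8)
The plan is to mirror the structure of the proof of Lemma~\ref{lem:disj}, proving the two directions of the biconditional separately, but with extra care for the distributivity clause~(\ref{conj003}), which is the non-routine part. For the direction ``$\Longrightarrow$'', I would assume $(U\sqcap V)^\mathcal{I}=U^\mathcal{I}\cap V^\mathcal{I}$ for all concepts $U,V$. Then~(\ref{conj001}) is immediate since $C^\mathcal{I}\cap D^\mathcal{I}\subseteq C^\mathcal{I}$ and $\subseteq D^\mathcal{I}$; and~(\ref{conj002}) follows because $X\subseteq C^\mathcal{I}$ and $X\subseteq D^\mathcal{I}$ give $X\subseteq C^\mathcal{I}\cap D^\mathcal{I}=(C\sqcap D)^\mathcal{I}$. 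For~(\ref{conj003}), I would unfold both sides using the hypothesis applied to the relevant pairs of concepts: the left side becomes $C^\mathcal{I}\cap(D^\mathcal{I}\cup E^\mathcal{I})$ (using also that~(\ref{disj1}),(\ref{disj2}) — or rather the set-theoretic reading — force $(D\sqcup E)^\mathcal{I}=D^\mathcal{I}\cup E^\mathcal{I}$; but note the lemma statement only hypothesizes the conjunction identity, so I must be careful here — see below), and the right side becomes $(C^\mathcal{I}\cap D^\mathcal{I})\cup(C^\mathcal{I}\cap E^\mathcal{I})$, and the inclusion is just ordinary set-theoretic distributivity (in fact an equality).

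For the direction ``$\Longleftarrow$'', I would fix arbitrary concepts $U,V$ and show $(U\sqcap V)^\mathcal{I}=U^\mathcal{I}\cap V^\mathcal{I}$. The inclusion $(U\sqcap V)^\mathcal{I}\subseteq U^\mathcal{I}\cap V^\mathcal{I}$ comes directly from~(\ref{conj001}) (instantiated with $C:=U$, $D:=V$). For the reverse inclusion, set $X:=U^\mathcal{I}\cap V^\mathcal{I}$; then $X\subseteq U^\mathcal{I}$ and $X\subseteq V^\mathcal{I}$, so by~(\ref{conj002}) we get $X\subseteq(U\sqcap V)^\mathcal{I}$, i.e. $U^\mathcal{I}\cap V^\mathcal{I}\subseteq(U\sqcap V)^\mathcal{I}$. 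Combining the two inclusions yields the identity. Notice that in this direction clause~(\ref{conj003}) is not even needed — it is subsumed once the conjunction identity holds for all concepts — which is worth remarking, since it explains why~(\ref{conj003}) is genuinely a separate requirement only in the categorical setting (where there is no ambient powerset to make distributivity automatic), matching the remark preceding the lemma.

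The main obstacle I anticipate is a mismatch in hypotheses for clause~(\ref{conj003}) in the ``$\Longrightarrow$'' direction: to rewrite $(C\sqcap(D\sqcup E))^\mathcal{I}$ and $((C\sqcap D)\sqcup(C\sqcap E))^\mathcal{I}$ I need both that $\cdot^\mathcal{I}$ commutes with $\sqcap$ (given) and that it commutes with $\sqcup$. The clean fix is to invoke Lemma~\ref{lem:disj}: since $\mathcal{I}$ is an interpretation and~(\ref{disj1}),(\ref{disj2}) are validated by any set-theoretic interpretation via the standard reading of $\sqcup$, we may assume $(U\sqcup V)^\mathcal{I}=U^\mathcal{I}\cup V^\mathcal{I}$ holds as well — or, more simply, the intended convention here is that $\cdot^\mathcal{I}$ is a genuine set-theoretic interpretation function in the sense of Definition~\ref{def:alc} restricted to the constructors already treated, so the disjunction identity is available. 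With that in hand, clause~(\ref{conj003}) reduces to the tautology $C^\mathcal{I}\cap(D^\mathcal{I}\cup E^\mathcal{I})\subseteq(C^\mathcal{I}\cap D^\mathcal{I})\cup(C^\mathcal{I}\cap E^\mathcal{I})$, and the proof is complete. I would state this dependency explicitly at the start of the ``$\Longrightarrow$'' argument to keep the proof honest.
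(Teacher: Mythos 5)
Your proposal is correct and follows essentially the same route as the paper's proof: both directions are handled exactly as in Lemma~\ref{lem:disj}, with~(\ref{conj003}) unused in the ``$\Longleftarrow$'' direction and reduced to set-theoretic distributivity in the ``$\Longrightarrow$'' direction. The dependency you flag for~(\ref{conj003}) is resolved in the paper the same way you propose, namely by appealing to the usual set-theoretical semantics of $\sqcup$ (i.e.\ $(D\sqcup E)^\mathcal{I}=D^\mathcal{I}\cup E^\mathcal{I}$) alongside the conjunction hypothesis; the paper simply states this without the explicit justification you supply.
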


\begin{proof}
 \noindent ``$\Longleftarrow$". Due to (\ref{conj001}) we have $(U\sqcap V)^\mathcal{I}  \subseteq U^\mathcal{I}\cap V^\mathcal{I}$. Let $X\subseteq \Delta^\mathcal{I}$   such that $X =U^\mathcal{I}\cap V^\mathcal{I}$. This implies that $X \subseteq U^\mathcal{I}$ and $X \subseteq V^\mathcal{I}$.   From (\ref{conj002}), we have  $X\subseteq (U\sqcap V)^\mathcal{I}$.  
 
 \noindent ``$\Longrightarrow$".  From $(C\sqcap D)^\mathcal{I}=C^\mathcal{I}\cap D^\mathcal{I}$   we obtain (\ref{conj001}).    Moreover, if   $X\subseteq C^\mathcal{I}$ and $X\subseteq D^\mathcal{I}$ then $X\subseteq C^\mathcal{I}\cap D^\mathcal{I}=(C\sqcap D)^\mathcal{I}$ by the hypothesis. Thus, (\ref{conj002}) is proved. To prove (\ref{conj003}), we use the hypothesis and the usual set-theoretical semantics as follows:  $(C  \sqcap (D\sqcup E))^\mathcal{I}=C^\mathcal{I} \cap (D\sqcup E)^\mathcal{I}=C^\mathcal{I} \cap (D^\mathcal{I} \cup E^\mathcal{I})=(C^\mathcal{I} \cap D^\mathcal{I}) \cup (C^\mathcal{I} \cap E^\mathcal{I})=(C \sqcap D)^\mathcal{I} \cup (C \sqcap E)^\mathcal{I}=((C \sqcap D)  \sqcup (C \sqcap E))^\mathcal{I}$. 
\end{proof}

 With disjunction and conjunction, one could use the arrows $C\sqcap \neg C\longrightarrow \bot$ and $\top \longrightarrow C\sqcup \neg C$ to define  category-theoretical semantics of  negation. However, such a definition does not allow to entail useful properties (cf. Lemma~\ref{lem:neg-prop}) which are available    under set-theoretical semantics.   Therefore, it is required to use more properties to characterize negation in category-theoretical semantics. 

\begin{definition}[Category-theoretical semantics of  negation]\label{def:neg}

Let $C, \neg C, C \sqcap \neg C , C \sqcup \neg C$ be objects  of $\mathscr{C}_c\langle C, \mathcal{O}\rangle$.  Category-theoretical semantics of $\neg C$ is defined by using the following arrows in $\mathscr{C}_c\langle C, \mathcal{O}\rangle$. 
\begin{align}
& C \sqcap \neg C \longrightarrow \bot \label{neg-bot} \\
& \top\longrightarrow C \sqcup \neg C\label{neg-top}  \\
& \forall X, C \sqcap X \longrightarrow \bot \Longrightarrow X \longrightarrow \neg C \label{neg-max} \\
& \forall X, \top \longrightarrow C \sqcup X  \Longrightarrow \neg C \longrightarrow X \label{neg-min}
\end{align}
\end{definition}
  
Informally, Property~(\ref{neg-max}) tells us that $\neg C$ is the ``greatest" object satisfying (\ref{neg-bot}) while Property~(\ref{neg-min}) tells us that $\neg C$ is the ``smallest" object satisfying (\ref{neg-top}). We now provide the connection between set-theoretical semantics of negation and the category-theoretical one given in Definition~\ref{def:neg}. 

\begin{lemma}\label{lem:neg}
The category-theoretical semantics of $\neg C$ characterized by Definition~\ref{def:neg} is compatible with the set-theoretical semantics of $\neg C$, that means if $\langle \Delta^\mathcal{I}, \cdot^\mathcal{I}\rangle$ is an interpretation      under set-theoretical semantics, then the following holds:  
 
  $C^\mathcal{I} \cap \neg C^\mathcal{I}\subseteq \bot^\mathcal{I}$ and $\top^\mathcal{I} \subseteq  C^\mathcal{I} \cup \neg C^\mathcal{I}$  imply
\begin{align}
& \forall X, (C \sqcap X)^\mathcal{I} \subseteq \bot^\mathcal{I} \Longrightarrow X^\mathcal{I} \subseteq (\neg C)^\mathcal{I} \label{neg3} \\
& \forall X, \top^\mathcal{I} \subseteq (C \sqcup X)^\mathcal{I}  \Longrightarrow (\neg C)^\mathcal{I} \subseteq X^\mathcal{I} \label{neg4}
\end{align}

\end{lemma}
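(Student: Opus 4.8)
The plan is to unfold the set-theoretical semantics of the Boolean constructors given in Definition~\ref{def:alc} and reduce both implications to elementary facts about complements in the domain $\Delta^\mathcal{I}$. Recall that $\bot^\mathcal{I}=\emptyset$, $\top^\mathcal{I}=\Delta^\mathcal{I}$, and that for any concepts $C,X$ one has $(C\sqcap X)^\mathcal{I}=C^\mathcal{I}\cap X^\mathcal{I}$ and $(C\sqcup X)^\mathcal{I}=C^\mathcal{I}\cup X^\mathcal{I}$. Hence the two hypotheses of the lemma rewrite to $C^\mathcal{I}\cap(\neg C)^\mathcal{I}=\emptyset$ and $C^\mathcal{I}\cup(\neg C)^\mathcal{I}=\Delta^\mathcal{I}$; that is, $(\neg C)^\mathcal{I}$ is a complement of $C^\mathcal{I}$ relative to $\Delta^\mathcal{I}$. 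Note that the proof will only need these two containments, not the stronger identity $(\neg C)^\mathcal{I}=\Delta^\mathcal{I}\setminus C^\mathcal{I}$.

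For Property~(\ref{neg3}), I would assume $(C\sqcap X)^\mathcal{I}\subseteq\bot^\mathcal{I}$, which by the above becomes $C^\mathcal{I}\cap X^\mathcal{I}=\emptyset$. Taking an arbitrary $x\in X^\mathcal{I}$, we get $x\notin C^\mathcal{I}$; since $x\in\Delta^\mathcal{I}=\top^\mathcal{I}\subseteq C^\mathcal{I}\cup(\neg C)^\mathcal{I}$, it follows that $x\in(\neg C)^\mathcal{I}$, whence $X^\mathcal{I}\subseteq(\neg C)^\mathcal{I}$. For Property~(\ref{neg4}), I would assume $\top^\mathcal{I}\subseteq(C\sqcup X)^\mathcal{I}$, i.e. $\Delta^\mathcal{I}\subseteq C^\mathcal{I}\cup X^\mathcal{I}$. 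Taking an arbitrary $x\in(\neg C)^\mathcal{I}$, the first hypothesis $C^\mathcal{I}\cap(\neg C)^\mathcal{I}=\emptyset$ gives $x\notin C^\mathcal{I}$, and from $x\in\Delta^\mathcal{I}\subseteq C^\mathcal{I}\cup X^\mathcal{I}$ we obtain $x\in X^\mathcal{I}$; thus $(\neg C)^\mathcal{I}\subseteq X^\mathcal{I}$.

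The argument is entirely routine element-chasing, so there is no real obstacle here; the only point worth flagging is structural rather than technical. Unlike Lemmas~\ref{lem:disj} and~\ref{lem:conj}, this compatibility statement is phrased as a one-directional implication: the arrow-level clauses (\ref{neg-bot})--(\ref{neg-min}) translate into the set-theoretical statements (\ref{neg3})--(\ref{neg4}), and that is all that is needed in the sequel, so no converse is claimed or proved.
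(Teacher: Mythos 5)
Your proof is correct and follows essentially the same route as the paper's: unfold $(C\sqcap X)^\mathcal{I}$ and $(C\sqcup X)^\mathcal{I}$ as intersection and union, then element-chase using the two hypotheses on $(\neg C)^\mathcal{I}$. You are slightly more explicit than the paper in flagging exactly which of the two hypotheses justifies the step $x\notin C^\mathcal{I}\Rightarrow x\in(\neg C)^\mathcal{I}$ (and its converse), and in noting that the lemma is one-directional, but the substance is identical.
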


\begin{proof}
 
\noindent 
Let $x\in X^\mathcal{I}$ with $(C\sqcap X)^\mathcal{I}\subseteq \bot^\mathcal{I}$.  Due to Lemma~\ref{lem:conj}, we have $(C\sqcap X)^\mathcal{I}=C^\mathcal{I} \cap X^\mathcal{I}$. Therefore, $x\notin C^\mathcal{I}$, and thus $x\in (\neg C)^\mathcal{I}$. Let $x\in (\neg C)^\mathcal{I}$ with $\top^\mathcal{I} \subseteq (C\sqcup X)^\mathcal{I} $. It follows that $x\notin C^\mathcal{I}$. Due to Lemma~\ref{lem:disj}, we have $(C\sqcup X)^\mathcal{I}=C^\mathcal{I} \cup X^\mathcal{I}$. Therefore,   $x\in X^\mathcal{I}$. 
\end{proof}

From Properties~(\ref{neg-bot}-\ref{neg-min}), we obtain De Morgan's laws and other properties for the category-theoretical semantics as follows.

\begin{lemma}\label{lem:neg-prop} The following properties hold.
\begin{align}
&C \longleftrightarrows  \neg \neg C \label{neg-double}\\ 
&C\longrightarrow \neg D \Longleftrightarrow   D\longrightarrow \neg C \label{neg-dual}\\
&C \sqcap D \longrightarrow \bot \Longleftrightarrow C\longrightarrow \neg D \label{neg-disjoint}\\
& \top\longrightarrow C\sqcup D\Longleftrightarrow\neg C\longrightarrow D\label{neg-conjoint}\\
&\neg (C \sqcap D) \longleftrightarrows   \neg C \sqcup  \neg D \label{neg-conj}\\
&\neg (C \sqcup D) \longleftrightarrows   \neg C \sqcap  \neg D \label{neg-disj}
\end{align}
\end{lemma}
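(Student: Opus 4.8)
The plan is to prove the six arrows/equivalences in roughly the order listed, treating~(\ref{neg-disjoint}) and~(\ref{neg-conjoint}) as the two ``adjunction-style'' workhorses from which everything else follows by routine diagram chasing, using the commutativity/associativity arrows of Lemma~\ref{lem:comas} and the monotonicity of $\sqcup,\sqcap$ from Lemma~\ref{lem:trans}. Throughout I assume that every auxiliary object appearing in the statement (for instance $\neg\neg C$, $\neg C\sqcap\neg D$, $\neg C\sqcup\neg D$) is present in $\mathscr{C}_c\langle C,\mathcal{O}\rangle$, so that the clauses of Definition~\ref{def:neg} apply to it.

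\emph{First block: (\ref{neg-disjoint}), (\ref{neg-conjoint}), (\ref{neg-dual}).} For~(\ref{neg-disjoint}), the implication $C\sqcap D\longrightarrow\bot\Rightarrow C\longrightarrow\neg D$ is Property~(\ref{neg-max}) instantiated with $X:=C$, after rewriting $C\sqcap D$ as $D\sqcap C$ by~(\ref{com02}); conversely, from $C\longrightarrow\neg D$ Lemma~\ref{lem:trans}, Property~(\ref{trans02}), gives $D\sqcap C\longrightarrow D\sqcap\neg D$, and $D\sqcap\neg D\longrightarrow\bot$ is an instance of~(\ref{neg-bot}), so $C\sqcap D\longrightarrow\bot$ by~(\ref{com02}) again. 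Property~(\ref{neg-conjoint}) is the exact mirror image: one direction is Property~(\ref{neg-min}) with $X:=D$, the other uses~(\ref{trans01}) and~(\ref{neg-top}). Then~(\ref{neg-dual}) is immediate: by~(\ref{neg-disjoint}) both $C\longrightarrow\neg D$ and $D\longrightarrow\neg C$ are equivalent to $C\sqcap D\longrightarrow\bot$ (using~(\ref{com02})).

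\emph{Second block: antitonicity and double negation.} From~(\ref{neg-disjoint}) I first derive the key principle that negation is order-reversing: if $C\longrightarrow D$ then, by~(\ref{trans02}), $\neg D\sqcap C\longrightarrow\neg D\sqcap D\longrightarrow\bot$, hence $\neg D\longrightarrow\neg C$ by~(\ref{neg-disjoint}); consequently $C\longleftrightarrows D$ implies $\neg C\longleftrightarrows\neg D$. For~(\ref{neg-double}): the arrow $C\longrightarrow\neg\neg C$ is~(\ref{neg-disjoint}) applied to $C\sqcap\neg C\longrightarrow\bot$ (which is~(\ref{neg-bot})), and $\neg\neg C\longrightarrow C$ is~(\ref{neg-conjoint}) applied to $\top\longrightarrow\neg C\sqcup C$, which holds by~(\ref{neg-top}) and~(\ref{com01}). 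For~(\ref{neg-disj}): the arrow $\neg(C\sqcup D)\longrightarrow\neg C\sqcap\neg D$ follows from antitonicity applied to $C\longrightarrow C\sqcup D$ and $D\longrightarrow C\sqcup D$ followed by~(\ref{conj1}); for the converse, by~(\ref{neg-disjoint}) it suffices to show $(\neg C\sqcap\neg D)\sqcap(C\sqcup D)\longrightarrow\bot$, which follows by distributing with~(\ref{conj2}) and noting that each of $(\neg C\sqcap\neg D)\sqcap C$ and $(\neg C\sqcap\neg D)\sqcap D$ factors through $C\sqcap\neg C\longrightarrow\bot$ (resp. $D\sqcap\neg D\longrightarrow\bot$) by~(\ref{conj01}) and~(\ref{conj1}), so their disjunction maps to $\bot$ by~(\ref{disj02}) with $X:=\bot$. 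Finally~(\ref{neg-conj}) I will not prove directly but deduce from~(\ref{neg-disj}): instantiating~(\ref{neg-disj}) at $\neg C,\neg D$ gives $\neg(\neg C\sqcup\neg D)\longleftrightarrows\neg\neg C\sqcap\neg\neg D$, which equals $C\sqcap D$ up to $\longleftrightarrows$ by~(\ref{neg-double}) and~(\ref{trans02}); negating both sides (using that negation preserves $\longleftrightarrows$) and applying~(\ref{neg-double}) once more yields $\neg C\sqcup\neg D\longleftrightarrows\neg(C\sqcap D)$.

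I expect the only genuine obstacle to be the ``hard'' direction of~(\ref{neg-disj}) (equivalently of~(\ref{neg-conj})), namely $\neg C\sqcap\neg D\longrightarrow\neg(C\sqcup D)$: this is the one place where the extra strength of Definition~\ref{def:conj} is actually used, through the distributivity arrow~(\ref{conj2}); the weak conjunction of Figure~\ref{fig:conj} alone would not deliver it. Every other step is a short chase through~(\ref{neg-disjoint}), (\ref{neg-conjoint}), the clauses of Definition~\ref{def:neg}, and the commutativity/associativity arrows of Lemma~\ref{lem:comas}.
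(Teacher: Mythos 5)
Your proof is correct, but it is organized differently from the paper's and, in one place, uses a genuinely different (and heavier) tool. The paper proves (\ref{neg-double}) first, directly from (\ref{neg-bot})/(\ref{neg-max}) and (\ref{neg-top})/(\ref{neg-min}), and then treats (\ref{neg-dual}) and (\ref{neg-disjoint}) essentially as you do; your reorganization around (\ref{neg-disjoint}) and (\ref{neg-conjoint}) as adjunction-style workhorses, with (\ref{neg-dual}) and antitonicity as corollaries, is cleaner and makes the dependency structure more transparent (the paper's appeals to ``(\ref{neg-dual})'' for antitone steps like $\neg(C\sqcup D)\longrightarrow\neg C$ silently use double negation, which you make explicit). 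The substantive divergence is in the De Morgan laws: for the direction $\neg C\sqcap\neg D\longrightarrow\neg(C\sqcup D)$ you invoke the distributivity arrow~(\ref{conj2}) to reduce to $(\neg C\sqcap\neg D)\sqcap(C\sqcup D)\longrightarrow\bot$, whereas the paper derives the same arrow using only (\ref{neg-dual}) and the universal properties of $\sqcup$ and $\sqcap$: from $\neg C\sqcap\neg D\longrightarrow\neg C$ one gets $C\longrightarrow\neg(\neg C\sqcap\neg D)$ by (\ref{neg-dual}), hence $C\sqcup D\longrightarrow\neg(\neg C\sqcap\neg D)$ by (\ref{disj02}), hence the claim by (\ref{neg-dual}) again. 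So your closing remark --- that this step is where the extra strength of Definition~\ref{def:conj} beyond the weak conjunction of Figure~\ref{fig:conj} is genuinely needed --- is not borne out: the paper's argument shows all six properties of the lemma follow without (\ref{conj2}) (distributivity earns its keep elsewhere, e.g.\ in the proof of Lemma~\ref{lem:set2cat}). Your derivation of (\ref{neg-conj}) from (\ref{neg-disj}) by substituting $\neg C,\neg D$ and cancelling double negations is also a valid shortcut that the paper does not take; it buys brevity at the cost of relying on negation preserving $\longleftrightarrows$, which your antitonicity step does establish.
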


\begin{proof}
\begin{enumerate}[wide, labelwidth=!, labelindent=0pt]
\item By Properties~(\ref{neg-bot}) and (\ref{neg-max})  where $C$ gets $\neg C$ and $X$ gets $C$, we have $C\longrightarrow \neg \neg C$. Analogously,  by Properties~(\ref{neg-top}) and (\ref{neg-min}), we obtain  $\neg \neg C \longrightarrow C$.  Hence, (\ref{neg-double}) is proved. 

\item To prove (\ref{neg-dual}), we need $C\sqcap D\longrightarrow \bot$ which follows from $C\sqcap D \longrightarrow C\longrightarrow \neg D$ and $C\sqcap D \longrightarrow D$, and thus   $C\sqcap D \longrightarrow D\sqcap \neg D\longrightarrow \bot$. Conversely, if $D\longrightarrow \neg C$, then we have ${C\sqcap D\longrightarrow D\longrightarrow \neg C}$ and ${C\sqcap D\longrightarrow C}$ thus ${C\sqcap D\longrightarrow\bot}$ and we have finished the proof.

\item To prove (\ref{neg-disjoint}), we start by proving    $C\longrightarrow \neg D  \Longrightarrow C \sqcap D \longrightarrow \bot$. We have $C\sqcap D \longrightarrow D$ and $C\sqcap D \longrightarrow C \longrightarrow \neg D$. By definition of conjunction, we obtain $C\sqcap D \longrightarrow D \sqcap \neg D\longrightarrow \bot$. To prove the other direction, we use (\ref{neg-max}) with $X=D$ or $X=C$.

\item The first direction of (\ref{neg-conjoint}) is a direct consequence of 
Definition~\ref{def:neg}, Property~(\ref{neg-min}) with $X = D$. To prove the 
other direction, we start by using (\ref{neg-top}), which gives us 
$\top\longrightarrow E\sqcup\neg E$. Since there is an arrow $E\longrightarrow F$, 
according to Property~(\ref{trans01}) of Lemma~\ref{lem:trans} there is also an
arrow $\top\longrightarrow F\sqcup\neg E$ - Property~(\ref{com01}) of 
Lemma~\ref{lem:comas} implies the existence of an arrow $\top\longrightarrow\neg 
E\sqcup F$.

\item To prove (\ref{neg-conj}), we need the definitions of conjunction, disjunction and negation. We have $C\sqcap D\longrightarrow C$ and $C\sqcap D\longrightarrow D$. Due to (\ref{neg-dual}), we obtain $\neg C \longrightarrow \neg(C\sqcap D)$ and $\neg D \longrightarrow \neg(C\sqcap D)$. By the definition of disjunction, we obtain $\neg C \sqcup \neg D \longrightarrow \neg(C \sqcap D)$. To prove the inverse, we take arrows $\neg C \longrightarrow \neg C  \sqcup  \neg D$ and $\neg D \longrightarrow \neg C  \sqcup  \neg D$ from the definition of disjunction. Due to (\ref{neg-dual}), it follows that $\neg (\neg C  \sqcup  \neg D) \longrightarrow C$ and $\neg (\neg C  \sqcup  \neg D) \longrightarrow D$. By definition, we obtain $\neg(\neg C \sqcup \neg D) \longrightarrow  C \sqcap D$. Due to (\ref{neg-dual}) and (\ref{neg-double}), it follows that $\neg(C \sqcap D) \longrightarrow  \neg C \sqcup \neg D$.

\item Analogously, we can prove (\ref{neg-disj}) by starting with arrows  $\neg C  \sqcap  \neg D  \longrightarrow \neg C$ and $\neg C  \sqcap  \neg D \longrightarrow  \neg D$ from the definition of conjunction. Due to (\ref{neg-dual}), we have $C\longrightarrow \neg(\neg C  \sqcap  \neg D)$ and $D \longrightarrow  \neg(\neg C  \sqcap  \neg D)$. By the definition of disjunction, we have $C \sqcup D \longrightarrow \neg(\neg C  \sqcap  \neg D)$, and by (\ref{neg-dual}) we obtain  $(\neg C  \sqcap  \neg D) \longrightarrow \neg(C \sqcup D)$. To prove the inverse, we take arrows $C \longrightarrow C  \sqcup  D$ and $D \longrightarrow C  \sqcup  D$ obtained from the definition of disjunction. Due to  (\ref{neg-dual}), we have $\neg(C  \sqcup  D) \longrightarrow \neg C$ and $\neg(C  \sqcup  D) \longrightarrow  \neg D$. By definition, we have $\neg(C  \sqcup  D) \longrightarrow \neg C \sqcap \neg D$.  
\end{enumerate}
\end{proof}

In order to define category-theoretical semantics of existential restrictions, we need to introduce new objects and arrows  to $\mathscr{C}_c\langle C, \mathcal{O}\rangle$ and  $\mathscr{C}_r\langle C, \mathcal{O}\rangle$,  and use the functors $\mathsf{dom}$ and  $\mathsf{cod}$  as described in Figure~\ref{fig:exist}.

\begin{figure}[!htbp]
\centering
\begin{tikzpicture}[line width=1pt,scale=1] 
\node[] (C) at (0,-1) [label={[xshift=-0.3cm, yshift=-0.6cm]}] {$C$};
\node[] (codRP) at (-2,-1) [label={[xshift=-0.3cm, yshift=-0.6cm]}] {$\mathsf{cod}(R')$};
\node[] (codRE) at (2,-1) [label={[xshift=-0.3cm, yshift=-0.6cm]}] {$\mathsf{cod}(R_{(\exists R.C)})$};

\node[] (domRP) at (-2,-2.5) [label={[xshift=-0.3cm, yshift=-0.6cm]}] {$\mathsf{dom}(R')$};
\node[] (domRE) at (2,-2.5) [label={[xshift=-0.3cm, yshift=-0.6cm]}] {$\mathsf{dom}(R_{(\exists R.C)})$};

\node[] (RP) at (-2,1) [label={[xshift=-0.3cm, yshift=-0.6cm]}] {$R'$};
\node[] (RE) at (2,1) [label={[xshift=-0.3cm, yshift=-0.6cm]}] {$R_{(\exists R.C)}$};

\node[] (R) at (0,1) [label={[xshift=-0.3cm, yshift=-0.6cm]}] {$R$};

 \node[] (X) at (0,1) [label={[xshift=-0.3cm, yshift=-0.6cm]}] { };

 \draw[->] (codRP) to[]node[below=2pt]{${i}$} (C);
 
 \draw[->] (codRE) to[]node[below=2pt]{${j}$} (C);
 \draw[dashed,->] (RP) to[]node[right=2pt]{$\mathsf{cod}$} (codRP);
 
 \draw[dashed,->] (RE) to[]node[left=2pt]{$\mathsf{cod} $} (codRE);
 
 \draw[->] (RP) to[]node[above=2pt]{${k}$} (X); 
 \draw[->] (RE) to[]node[above=2pt]{${l}$} (X); 
  
 \path [->,dotted,out=210,in=135] (RP) edge node[left]{$\mathsf{dom}$} (domRP);
  \path [->,dotted,out=-30,in=45] (RE) edge node[right]{$\mathsf{dom}$} (domRE);
 \draw[->] (domRP) to[]node[above=2pt]{$m$} (domRE);
\end{tikzpicture}
\caption{Category-theoretical semantics for existential restriction}\label{fig:exist}
\end{figure}
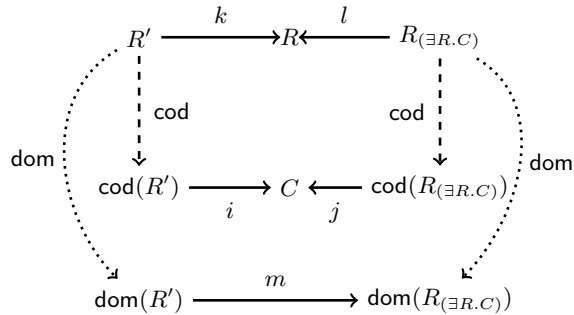

\begin{definition}[Category-theoretical semantics of  existential restriction]\label{def:exist}
Let $\exists R.C, C$ be objects of $\mathscr{C}_c\langle C, \mathcal{O}\rangle$, and    
$R, R_{(\exists R.C)}$ be objects of  $\mathscr{C}_r\langle C, \mathcal{O}\rangle$.
Category-theoretical semantics of $\exists R.C$ is defined by using arrows in $\mathscr{C}_c\langle C, \mathcal{O}\rangle$ and $\mathscr{C}_r\langle C, \mathcal{O}\rangle$ as follows.  
\begin{align}
&R_{(\exists R.C)}  \longrightarrow R, \mathsf{cod}(R_{(\exists R.C)})  \longrightarrow C\label{ex-arrow01}\\ &\mathsf{dom}(R_{(\exists R.C)})\longleftrightarrows \exists R.C \label{ex-arrow02}\\
&\forall R', R'\longrightarrow R, \mathsf{cod}(R') \longrightarrow C\Longrightarrow \label{ex-arrow03} \\ &\hspace{2cm}\mathsf{dom}(R')\longrightarrow \mathsf{dom}(R_{(\exists R.C)})  \nonumber
\end{align}
\end{definition}
 The following lemma  establishes the connection between set-theoretical semantics of existential restriction and the category-theoretical one given in Definition~\ref{def:exist}. 

\begin{lemma}\label{lem:exist}
The category-theoretical semantics of $\exists R.C$ characterized by Definition~\ref{def:exist} is compatible with the set-theoretical semantics of $\exists R.C$, that means
if $\langle \Delta^\mathcal{I}, \cdot^\mathcal{I}\rangle$ is an interpretation      under set-theoretical semantics such that  
\begin{align}
&R_{(\exists R.C)}^\mathcal{I} \subseteq R^\mathcal{I} \label{ex-arrow1}\\
&\mathsf{cod}(R_{(\exists R.C)})^\mathcal{I} \subseteq C^\mathcal{I} \label{ex-arrow2}
\end{align}
\noindent then the following holds:

\noindent  $\mathsf{dom}(R_{(\exists R.C)})^\mathcal{I}=\{x\in \Delta^\mathcal{I}\mid \exists y\in \Delta^\mathcal{I} : \langle x,y\rangle\in R^\mathcal{I} \wedge y\in C^\mathcal{I}\}$ iff
\begin{align}
&\forall R'\subseteq \Delta^\mathcal{I}\times \Delta^\mathcal{I}, {R'}\subseteq R^\mathcal{I}, \mathsf{cod}(R') \subseteq C^\mathcal{I}\Longrightarrow \label{ex-arrow3}\\ &\hspace{3cm}\mathsf{dom}(R')\subseteq \mathsf{dom}(R_{(\exists R.C)})^\mathcal{I}   \nonumber
\end{align}
\end{lemma}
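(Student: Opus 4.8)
The plan is to mirror the structure of the proofs of Lemma~\ref{lem:disj} and Lemma~\ref{lem:conj}: the biconditional to be established says that the equality ``$\mathsf{dom}(R_{(\exists R.C)})^\mathcal{I}$ is the set-theoretical extension of $\exists R.C$'' is equivalent, given the two containment hypotheses~(\ref{ex-arrow1}) and~(\ref{ex-arrow2}), to the maximality-of-domain property~(\ref{ex-arrow3}). Throughout I read $\mathsf{dom}(R')$ and $\mathsf{cod}(R')$, for a binary relation $R'\subseteq\Delta^\mathcal{I}\times\Delta^\mathcal{I}$, as the projections $\{x\mid\exists y,\langle x,y\rangle\in R'\}$ and $\{y\mid\exists x,\langle x,y\rangle\in R'\}$, and I abbreviate the set $\{x\in\Delta^\mathcal{I}\mid\exists y\in\Delta^\mathcal{I}:\langle x,y\rangle\in R^\mathcal{I}\wedge y\in C^\mathcal{I}\}$ by $S$.

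For the ``$\Longrightarrow$'' direction I would assume $\mathsf{dom}(R_{(\exists R.C)})^\mathcal{I}=S$ and take an arbitrary $R'\subseteq R^\mathcal{I}$ with $\mathsf{cod}(R')\subseteq C^\mathcal{I}$. Given $x\in\mathsf{dom}(R')$, pick a witness $y$ with $\langle x,y\rangle\in R'$; then $\langle x,y\rangle\in R^\mathcal{I}$ since $R'\subseteq R^\mathcal{I}$, and $y\in C^\mathcal{I}$ since $y\in\mathsf{cod}(R')\subseteq C^\mathcal{I}$, so $x\in S=\mathsf{dom}(R_{(\exists R.C)})^\mathcal{I}$. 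This yields~(\ref{ex-arrow3}), and it uses only the assumed equality.

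For the ``$\Longleftarrow$'' direction I would prove the two inclusions separately. The inclusion $\mathsf{dom}(R_{(\exists R.C)})^\mathcal{I}\subseteq S$ is where the hypotheses come in: for $x$ in the left-hand side a witness $y$ with $\langle x,y\rangle\in R_{(\exists R.C)}^\mathcal{I}$ satisfies $\langle x,y\rangle\in R^\mathcal{I}$ by~(\ref{ex-arrow1}) and $y\in\mathsf{cod}(R_{(\exists R.C)})^\mathcal{I}\subseteq C^\mathcal{I}$ by~(\ref{ex-arrow2}), hence $x\in S$. For the reverse inclusion $S\subseteq\mathsf{dom}(R_{(\exists R.C)})^\mathcal{I}$, I would instantiate~(\ref{ex-arrow3}) with the concrete relation $R'=R^\mathcal{I}\cap(\Delta^\mathcal{I}\times C^\mathcal{I})$ (equivalently, for each $x\in S$, with the singleton $\{\langle x,y\rangle\}$ for a chosen witness $y$): this $R'$ satisfies $R'\subseteq R^\mathcal{I}$ and $\mathsf{cod}(R')\subseteq C^\mathcal{I}$ by construction, and $\mathsf{dom}(R')=S$, so~(\ref{ex-arrow3}) gives $S=\mathsf{dom}(R')\subseteq\mathsf{dom}(R_{(\exists R.C)})^\mathcal{I}$.

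There is no serious obstacle here; the only point requiring care is bookkeeping about which hypotheses are actually used (namely~(\ref{ex-arrow1}) and~(\ref{ex-arrow2}), and only for the first inclusion of the ``$\Longleftarrow$'' direction) and keeping the relational operators $\mathsf{dom},\mathsf{cod}$ on subsets of $\Delta^\mathcal{I}\times\Delta^\mathcal{I}$ conceptually distinct from the category-level functors $\mathsf{dom},\mathsf{cod}$ of Definition~\ref{def:syntax-cat}, which the statement deliberately overloads. The choice of the witnessing relation $R^\mathcal{I}\cap(\Delta^\mathcal{I}\times C^\mathcal{I})$ is the analogue of the choice $X=C^\mathcal{I}\cup D^\mathcal{I}$ in Lemma~\ref{lem:disj}, so the argument is genuinely routine.
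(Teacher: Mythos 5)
Your proposal is correct and follows essentially the same route as the paper's proof: the ``$\Longrightarrow$'' direction and the first inclusion of the ``$\Longleftarrow$'' direction are argued identically, and for the second inclusion the paper also instantiates~(\ref{ex-arrow3}) with a witnessing subrelation (it picks, for each $x'$, a relation containing the pair $\langle x',y\rangle$, which is your singleton variant). Your explicit choice $R'=R^\mathcal{I}\cap(\Delta^\mathcal{I}\times C^\mathcal{I})$ is a slightly cleaner packaging of the same idea.
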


\begin{proof}
 \noindent ``$\Longleftarrow$". 
Let $x'\in \mathsf{dom}(R_{(\exists R.C)})^\mathcal{I}$.   There is an element $y\in   \mathsf{cod}(R_{(\exists R.C)})^\mathcal{I}$ such that $\langle x',y\rangle \in R_{(\exists R.C)}^\mathcal{I}$ by definition. Due to (\ref{ex-arrow2}),   we have $y\in   C^\mathcal{I}$. Analogously, due to (\ref{ex-arrow1}),     we have $\langle x',y\rangle \in R^\mathcal{I}$. Thus, $x'\in  \{x\in \Delta^\mathcal{I}\mid \exists y\in \Delta^\mathcal{I} : \langle x,y\rangle\in R^\mathcal{I} \wedge y\in C^\mathcal{I}\}$.

  Let $x'\in  \{x\in \Delta^\mathcal{I}\mid \exists y\in \Delta^\mathcal{I} : \langle x,y\rangle\in R^\mathcal{I} \wedge y\in C^\mathcal{I}\}$.  Thus, there is an element $y\in   C^\mathcal{I}$ such that $\langle x',y\rangle \in R^\mathcal{I}$.  Take an $R'\subseteq \Delta^\mathcal{I} \times \Delta^\mathcal{I}$  such that $R'\subseteq R^\mathcal{I}$,  $\mathsf{cod}(R') \subseteq C^\mathcal{I}$ with   $x'\in \mathsf{dom}(R')$ and  $y\in \mathsf{cod}(R')$.   Due to (\ref{ex-arrow3}), we have $x'\in \mathsf{dom}(R_{(\exists R.C)})^\mathcal{I}$.
  
  \noindent ``$\Longrightarrow$". Let $R'\subseteq \Delta^\mathcal{I} \times \Delta^\mathcal{I}$   such that $R'\subseteq R^\mathcal{I}$, $\mathsf{cod}(R') \subseteq C^\mathcal{I}$. Let $x'\in \mathsf{dom}(R')$. This implies that there is some $y\in \mathsf{cod}(R')$ such that $\langle x',y\rangle\in  R'$. Hence, $y\in C^\mathcal{I}$ and $\langle x',y\rangle\in  R^\mathcal{I}$. Therefore, $x'\in  \{x\in \Delta^\mathcal{I}\mid \exists y\in \Delta^\mathcal{I} : \langle x,y\rangle\in R^\mathcal{I} \wedge y\in C^\mathcal{I}\}$ $=$ $\mathsf{dom}(R_{(\exists R.C)})^\mathcal{I}$. 
\end{proof}

\begin{lemma}\label{lem:exist-prop} The following properties hold:
\begin{align}
&C \longrightarrow \bot \Longrightarrow \exists R.C  \longrightarrow \bot \label{exist-empty}\\
&C \longrightarrow D \Longrightarrow \exists R.C  \longrightarrow \exists R.D  \label{exist-sub}
\end{align}
\end{lemma}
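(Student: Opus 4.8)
The plan is to derive both implications purely by chaining arrows, with the object $R_{(\exists R.C)}$ serving as the key witness and with the clause of Definition~\ref{def:syntax-cat} (item~\ref{def:syntax-cat:4}) that an arrow $\mathsf{cod}(R)\longrightarrow\bot$ forces $R\longrightarrow R_\bot$ acting as the bridge between the role and concept categories. Throughout we assume, as is implicit in the statement, that all mentioned objects — in particular $\exists R.D$ together with its associated role object $R_{(\exists R.D)}$ — already belong to $\mathscr{C}_c\langle C,\mathcal{O}\rangle$ and $\mathscr{C}_r\langle C,\mathcal{O}\rangle$, so that Definition~\ref{def:exist} applies to them.

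For Property~(\ref{exist-empty}): assume $C\longrightarrow\bot$. By~(\ref{ex-arrow01}) we have $\mathsf{cod}(R_{(\exists R.C)})\longrightarrow C$, and composing with the hypothesis gives $\mathsf{cod}(R_{(\exists R.C)})\longrightarrow\bot$ in $\mathscr{C}_c\langle C,\mathcal{O}\rangle$. The clause of Definition~\ref{def:syntax-cat} mentioned above then yields an arrow $R_{(\exists R.C)}\longrightarrow R_\bot$ in $\mathscr{C}_r\langle C,\mathcal{O}\rangle$; applying the functor $\mathsf{dom}$ (item~\ref{def:syntax-cat:4} of Definition~\ref{def:syntax-cat}) and using $\mathsf{dom}(R_\bot)=\bot$ gives $\mathsf{dom}(R_{(\exists R.C)})\longrightarrow\bot$. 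Finally $\exists R.C\longrightarrow\mathsf{dom}(R_{(\exists R.C)})$ by~(\ref{ex-arrow02}), so composition produces $\exists R.C\longrightarrow\bot$. The only non-routine point here is recognizing that this clause of Definition~\ref{def:syntax-cat} is exactly the device that propagates emptiness of the filler back onto the role object.

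For Property~(\ref{exist-sub}): assume $C\longrightarrow D$, and take $R_{(\exists R.C)}$ as a candidate witness for the universal property~(\ref{ex-arrow03}) attached to $\exists R.D$. By~(\ref{ex-arrow01}) we have $R_{(\exists R.C)}\longrightarrow R$ and $\mathsf{cod}(R_{(\exists R.C)})\longrightarrow C\longrightarrow D$, which are precisely the two premises of~(\ref{ex-arrow03}) instantiated with $R':=R_{(\exists R.C)}$ and filler $D$. Hence~(\ref{ex-arrow03}) gives $\mathsf{dom}(R_{(\exists R.C)})\longrightarrow\mathsf{dom}(R_{(\exists R.D)})$. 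Bracketing this with the two arrows of~(\ref{ex-arrow02}), namely $\exists R.C\longrightarrow\mathsf{dom}(R_{(\exists R.C)})$ and $\mathsf{dom}(R_{(\exists R.D)})\longrightarrow\exists R.D$, and composing, we obtain $\exists R.C\longrightarrow\exists R.D$.

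The main obstacle, such as it is, lies in choosing the right witness $R':=R_{(\exists R.C)}$ in~(\ref{ex-arrow03}); once that choice is made, both statements reduce to arrow composition together with the functoriality of $\mathsf{dom}$ and $\mathsf{cod}$.
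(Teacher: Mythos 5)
Your proof is correct and follows essentially the same route as the paper: for (\ref{exist-empty}) it propagates $\mathsf{cod}(R_{(\exists R.C)})\longrightarrow\bot$ through Definition~\ref{def:syntax-cat} to get $R_{(\exists R.C)}\longrightarrow R_\bot$ and hence $\mathsf{dom}(R_{(\exists R.C)})\longrightarrow\bot$, and for (\ref{exist-sub}) it instantiates the universal property~(\ref{ex-arrow03}) of $\exists R.D$ with the witness $R'=R_{(\exists R.C)}$. You are in fact slightly more explicit than the paper about the final compositions with the arrows of~(\ref{ex-arrow02}), which the paper leaves implicit.
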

\begin{proof}
\begin{enumerate}
    \item We have $\mathsf{cod}(R_{(\exists R.C)})\longrightarrow C$ by definition. By hypothesis $C\longrightarrow \bot$, we have $\mathsf{cod}(R_{(\exists R.C)})\longrightarrow \bot$ due Definition~\ref{def:syntax-cat} of $\mathsf{cod}$. Again,  due Definition~\ref{def:syntax-cat} of $\mathsf{cod}$, we have  $R_{(\exists R.C)} \longrightarrow R_\bot$. Moreover, we have $\mathsf{dom}(R_{(\exists R.C)})  \longrightarrow \bot$  due Definition~\ref{def:syntax-cat} of $\mathsf{dom}$. By Definition~\ref{def:exist}, we obtain $\exists R.C\longrightarrow \bot$.
    
    \item To prove (\ref{exist-sub}) we consider two objects $R_{(\exists R.C)}$ and $R_{(\exists R.D)}$ in $\mathscr{C}_r$. We have $R_{(\exists R.C)}\longrightarrow R$ and  $\mathsf{cod}(R_{(\exists R.C)})\longrightarrow C \longrightarrow D$. By definition, we obtain $\mathsf{dom}(R_{(\exists R.C)})\longrightarrow \mathsf{dom}(R_{(\exists R.D)})$.
    \end{enumerate}
\end{proof} 
 
The set-theoretical semantics of universal restriction can be defined as  negation of existential restriction. However,  $\forall R.C \longleftrightarrows  \neg  \exists R.\neg C$ does not allow to obtain  usual connections between existential and universal restrictions such as $\exists R.D \sqcap \forall R.C \longrightarrow  \exists R.(D \sqcap  C)$  under the category-theoretical semantics (the ``property (ii)'' mentioned in our introduction).   Therefore, we need more arrows to define category-theoretical semantics of universal restriction as follows.

\begin{definition}[Category-theoretical semantics of universal restriction]\label{def:forall}
Let $\forall R.C$, $C$  be objects of $\mathscr{C}_c\langle C, \mathcal{O}\rangle$,  and    
$R$ be an object of  $\mathscr{C}_r\langle C, \mathcal{O}\rangle$.
   Category-theoretical semantics of $\forall R.C$ is defined by using arrows in $\mathscr{C}_c\langle C, \mathcal{O}\rangle$ and $\mathscr{C}_r\langle C, \mathcal{O}\rangle$ as follows.
\begin{align}
&\forall R.C \longleftrightarrows  \neg  \exists R.\neg C\label{all-arrow1} \\
&\forall R', R'\longrightarrow R, \mathsf{dom}(R') \longrightarrow \forall R.C \Longrightarrow \mathsf{cod}(R')\longrightarrow C \label{all-arrow2}  
\end{align}
\end{definition}

We formulate and prove the connection between the usual set-theoretical semantics of universal restriction and the category-theoretical one given in Definition~\ref{def:forall}.

\begin{lemma}\label{lem:forall}
The category-theoretical semantics of $\forall R.C$ characterized by Definition~\ref{def:forall}  is compatible with the set-theoretical semantics of $\forall R.C$, that means
if $\langle \Delta^\mathcal{I}, \cdot^\mathcal{I}\rangle$   is an interpretation   under set-theoretical semantics, then the following holds:   
  
  $\forall R.C^\mathcal{I}=\{x\in \Delta^\mathcal{I}\mid  \langle x,y\rangle\in R^\mathcal{I}\Longrightarrow y\in C^\mathcal{I}\}$ iff
\begin{align}
&\forall R.C^\mathcal{I}= (\neg  \exists R.\neg C)^\mathcal{I} \label{all-arrow4}\\
&\forall R'\subseteq \Delta^\mathcal{I}\times \Delta^\mathcal{I}, R' \subseteq R^\mathcal{I}, \mathsf{dom}(R')  \subseteq  \forall R.C^\mathcal{I}   \Longrightarrow  \label{all-arrow5}\\  &\hspace{4cm}\mathsf{cod}(R')^\mathcal{I}\subseteq C^\mathcal{I}  \nonumber
\end{align}
\end{lemma}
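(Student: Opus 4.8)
The plan is to prove the two implications of the ``iff'' separately, in each case unfolding the set-theoretical semantics of $\neg$ and $\exists$ as fixed by Definition~\ref{def:alc} and then performing a De Morgan rewriting. The only genuinely new object appearing here is a subrelation $R'\subseteq\Delta^\mathcal{I}\times\Delta^\mathcal{I}$, for which $\mathsf{dom}(R')$ and $\mathsf{cod}(R')$ denote its projections onto the first and second coordinates respectively; I would state this explicitly at the outset since the witness argument below depends on it.

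For the ``$\Longrightarrow$'' direction, assume $\forall R.C^\mathcal{I}=\{x\in\Delta^\mathcal{I}\mid \langle x,y\rangle\in R^\mathcal{I}\Longrightarrow y\in C^\mathcal{I}\}$. To get (\ref{all-arrow4}), I would compute $(\neg\exists R.\neg C)^\mathcal{I}=\Delta^\mathcal{I}\setminus(\exists R.\neg C)^\mathcal{I}=\Delta^\mathcal{I}\setminus\{x\mid\exists y,\ \langle x,y\rangle\in R^\mathcal{I}\wedge y\notin C^\mathcal{I}\}$ using Definition~\ref{def:alc} for $\neg$ and $\exists$ (and $y\in(\neg C)^\mathcal{I}\iff y\notin C^\mathcal{I}$), and then recognise the complemented set, by pushing the negation through the existential, as $\{x\mid\forall y,\ \langle x,y\rangle\in R^\mathcal{I}\Rightarrow y\in C^\mathcal{I}\}$, which by hypothesis is $\forall R.C^\mathcal{I}$. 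For (\ref{all-arrow5}), take $R'\subseteq R^\mathcal{I}$ with $\mathsf{dom}(R')\subseteq\forall R.C^\mathcal{I}$ and pick any $y\in\mathsf{cod}(R')$; then there is $x$ with $\langle x,y\rangle\in R'\subseteq R^\mathcal{I}$ and $x\in\mathsf{dom}(R')\subseteq\forall R.C^\mathcal{I}$, so the set-theoretical characterisation of $\forall R.C^\mathcal{I}$ yields $y\in C^\mathcal{I}$; hence $\mathsf{cod}(R')\subseteq C^\mathcal{I}$.

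For the ``$\Longleftarrow$'' direction, assume (\ref{all-arrow4}) and (\ref{all-arrow5}). In fact only (\ref{all-arrow4}) is needed: expanding its right-hand side by Definition~\ref{def:alc} gives $\forall R.C^\mathcal{I}=\Delta^\mathcal{I}\setminus\{x\mid\exists y,\ \langle x,y\rangle\in R^\mathcal{I}\wedge y\notin C^\mathcal{I}\}$, and the same De Morgan step as above rewrites this complement as $\{x\mid\forall y,\ \langle x,y\rangle\in R^\mathcal{I}\Rightarrow y\in C^\mathcal{I}\}$, i.e. the usual set semantics of the universal restriction. I do not expect any real obstacle; the only point worth flagging is that the equivalence is driven entirely by (\ref{all-arrow4}), whereas (\ref{all-arrow5}) is a derived consequence of the set semantics, which is precisely why it must still be listed on the right-hand side (the ``$\Longrightarrow$'' direction is obliged to produce it) even though it is never used to recover the set semantics in the ``$\Longleftarrow$'' direction.
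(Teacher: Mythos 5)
Your proposal is correct and follows essentially the same route as the paper's own proof: the ``$\Longleftarrow$'' direction is driven solely by (\ref{all-arrow4}) via the complement/De~Morgan computation, and the ``$\Longrightarrow$'' direction establishes (\ref{all-arrow5}) by the same witness argument (pick $y\in\mathsf{cod}(R')$, obtain $x$ with $\langle x,y\rangle\in R'\subseteq R^\mathcal{I}$ and $x\in\mathsf{dom}(R')\subseteq\forall R.C^\mathcal{I}$). Your version is in fact slightly more complete, since you explicitly verify (\ref{all-arrow4}) in the forward direction and make the projection reading of $\mathsf{dom}(R')$ and $\mathsf{cod}(R')$ explicit, both of which the paper leaves implicit.
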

\begin{proof}
 \noindent ``$\Longleftarrow$". We have $\{x\in \Delta^\mathcal{I}\mid  \langle x,y\rangle\in R^\mathcal{I}\Longrightarrow y\in C^\mathcal{I}\}$ is the complement of  $\{x\in \Delta^\mathcal{I}\mid \exists y\in \Delta^\mathcal{I} : \langle x,y\rangle\in R^\mathcal{I}\wedge y\in \neg C^\mathcal{I}\} = \exists R.\neg C^\mathcal{I}$. Hence, we have  $\forall R.C^\mathcal{I}=\{x\in \Delta^\mathcal{I}\mid  \langle x,y\rangle\in R^\mathcal{I}\Longrightarrow y\in C^\mathcal{I}\}$ from (\ref{all-arrow4}).
 
 \noindent ``$\Longrightarrow$". Let  $R'\subseteq R^\mathcal{I}$ with $\mathsf{dom}(R') \subseteq \forall R.C^\mathcal{I}$. Let $y\in \mathsf{cod}(R')$. There is some $x'\in \mathsf{dom}(R')$ such that $\langle x', y\rangle \in  {R'}^\mathcal{I}$. Since $R'\subseteq R^\mathcal{I}$ and  $\mathsf{dom}(R') \subseteq \forall R.C^\mathcal{I}$, we have $x\in \forall R.C^\mathcal{I}$ and $\langle x', y\rangle \in  {R}^\mathcal{I}$. By hypothesis, $y\in C^\mathcal{I}$. 
\end{proof}

\begin{lemma}\label{lem:forall-prop} The following properties hold. 
\begin{align}
&\forall R.C \sqcap \exists  R.\neg C \longrightarrow  \bot \label{forall-exists-bot}\\
&C \longrightarrow D \Longrightarrow \forall R.C \longrightarrow  \forall R.D  \label{forall-sub}\\
&\exists R.D \sqcap \forall R.C \longrightarrow  \exists R.(D \sqcap  C) \label{forall-exist}\\
&\exists R.C \sqcup \exists R.D  \longleftrightarrows  \exists R.(C \sqcup D) \label{exists-disj}
\end{align}
\end{lemma}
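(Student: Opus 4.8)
\end{lemma}

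\begin{proof}
(A sketch of the intended argument.) The plan is to treat the four items in increasing order of difficulty, re-using the arrows supplied by Definitions~\ref{def:neg}, \ref{def:exist} and~\ref{def:forall} together with Lemmas~\ref{lem:comas}--\ref{lem:exist-prop}. Property~(\ref{forall-exists-bot}) is almost immediate: by~(\ref{all-arrow1}) there is an arrow $\forall R.C\longrightarrow\neg\exists R.\neg C$, so the ``$\Leftarrow$'' half of~(\ref{neg-disjoint}), read with the two concepts taken to be $\forall R.C$ and $\exists R.\neg C$, already gives $\forall R.C\sqcap\exists R.\neg C\longrightarrow\bot$. For~(\ref{forall-sub}), assume $C\longrightarrow D$; composing with $D\longrightarrow\neg\neg D$ from~(\ref{neg-double}) and then invoking~(\ref{neg-dual}) produces the contravariant arrow $\neg D\longrightarrow\neg C$, whence $\exists R.\neg D\longrightarrow\exists R.\neg C$ by~(\ref{exist-sub}) and, applying contravariance of negation once more, $\neg\exists R.\neg C\longrightarrow\neg\exists R.\neg D$; sandwiching this between the two isomorphisms of~(\ref{all-arrow1}) yields $\forall R.C\longrightarrow\forall R.D$.

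The direction $\exists R.C\sqcup\exists R.D\longrightarrow\exists R.(C\sqcup D)$ of~(\ref{exists-disj}) is easy: apply~(\ref{exist-sub}) to the injections $C\longrightarrow C\sqcup D$ and $D\longrightarrow C\sqcup D$, then invoke the universal property~(\ref{disj02}) of the disjunction object. The real content lies in the converse inclusion in~(\ref{exists-disj}) and in~(\ref{forall-exist}), and this is where the extra arrow~(\ref{all-arrow2}) of Definition~\ref{def:forall} becomes indispensable. I would prove~(\ref{forall-exist}) first, as the categorical transcription of the set-theoretic argument underlying Lemma~\ref{lem:forall}: start from the role object $R_{(\exists R.D)}$ of Definition~\ref{def:exist}, which satisfies $R_{(\exists R.D)}\longrightarrow R$, $\mathsf{cod}(R_{(\exists R.D)})\longrightarrow D$ and $\mathsf{dom}(R_{(\exists R.D)})\longleftrightarrows\exists R.D$, and pass to the role object $R'$ obtained by cutting its domain down to the part that lies in $\exists R.D\sqcap\forall R.C$, so that $R'\longrightarrow R$, $\mathsf{cod}(R')\longrightarrow D$ and $\mathsf{dom}(R')\longleftrightarrows\exists R.D\sqcap\forall R.C$. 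Since $\mathsf{dom}(R')\longrightarrow\forall R.C$ by~(\ref{conj01}), property~(\ref{all-arrow2}) forces $\mathsf{cod}(R')\longrightarrow C$; combining this with $\mathsf{cod}(R')\longrightarrow D$ through the universal property~(\ref{conj1}) of conjunction gives $\mathsf{cod}(R')\longrightarrow D\sqcap C$. Now $R'\longrightarrow R$ and $\mathsf{cod}(R')\longrightarrow D\sqcap C$ let me apply~(\ref{ex-arrow03}) to the object $\exists R.(D\sqcap C)$, producing $\mathsf{dom}(R')\longrightarrow\mathsf{dom}(R_{(\exists R.(D\sqcap C))})$, which by~(\ref{ex-arrow02}) is isomorphic to $\exists R.(D\sqcap C)$; since $\mathsf{dom}(R')\longleftrightarrows\exists R.D\sqcap\forall R.C$, this is precisely~(\ref{forall-exist}).

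The converse inclusion in~(\ref{exists-disj}) I would then deduce from~(\ref{forall-exist}). Rewriting $\neg(\exists R.C\sqcup\exists R.D)$ as $\forall R.\neg C\sqcap\forall R.\neg D$ via De Morgan~(\ref{neg-disj}), double negation, (\ref{exist-sub}) and~(\ref{all-arrow1}), and using that $X\longrightarrow Y$ is equivalent to $X\sqcap\neg Y\longrightarrow\bot$ (a consequence of~(\ref{neg-disjoint}) and~(\ref{neg-double})), it suffices to show $\exists R.(C\sqcup D)\sqcap\forall R.\neg C\sqcap\forall R.\neg D\longrightarrow\bot$. Two applications of~(\ref{forall-exist}) send the left-hand side into $\exists R.((C\sqcup D)\sqcap\neg C\sqcap\neg D)$; since $(C\sqcup D)\sqcap\neg C\sqcap\neg D\longrightarrow\bot$ by De Morgan~(\ref{neg-disj}) and~(\ref{neg-bot}), property~(\ref{exist-empty}) closes the argument.

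The main obstacle is the single step in the proof of~(\ref{forall-exist}) that forms the restricted role object $R'$ with $\mathsf{dom}(R')\longleftrightarrows\exists R.D\sqcap\forall R.C$ and feeds it into~(\ref{all-arrow2}): this is the one place where the ``interaction'' arrow of the universal-restriction semantics is genuinely used, and making $\mathsf{dom}(R')$ coincide (up to isomorphism) with $\exists R.D\sqcap\forall R.C$ --- rather than merely mapping into, or out of, it --- is what lets the chain $\exists R.D\sqcap\forall R.C\longleftrightarrows\mathsf{dom}(R')\longrightarrow\exists R.(D\sqcap C)$ close up. Everything else is bookkeeping with the universal properties already recorded in Lemmas~\ref{lem:comas}--\ref{lem:exist-prop}.
\end{proof}
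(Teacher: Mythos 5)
Your proposal is correct and follows essentially the same route as the paper's proof: (\ref{forall-exists-bot}) and (\ref{forall-sub}) via (\ref{all-arrow1}) and contravariance of negation, (\ref{forall-exist}) by introducing a fresh role object whose domain is isomorphic to $\exists R.D\sqcap\forall R.C$ and feeding it into (\ref{all-arrow2}), and the hard direction of (\ref{exists-disj}) by reduction to (\ref{forall-exist}) through negation and De Morgan. The only (harmless) deviation is at the very end, where you collapse $(C\sqcup D)\sqcap\neg C\sqcap\neg D$ to $\bot$ directly via (\ref{neg-disj}) and (\ref{neg-bot}) instead of the paper's detour through the distributive law (\ref{conj2}); this is if anything slightly cleaner.
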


\begin{proof}
\begin{enumerate}[wide, labelwidth=!, labelindent=0pt]
\item  (\ref{forall-exists-bot}) is a consequence of (\ref{all-arrow1}) and (\ref{neg-bot}).

    \item Due to (\ref{neg-dual}), $C\longrightarrow D$ implies $\neg D \longrightarrow \neg C$. By (\ref{exist-sub}), we obtain $\exists R.\neg D\longrightarrow \exists R.\neg C$. Again, due to (\ref{neg-dual}), we have $\neg \exists R.\neg C\longrightarrow \neg \exists R.\neg D$. By definition with (\ref{all-arrow1}), it follows $\forall R.C \longrightarrow \forall R.D$.

    \item To prove (\ref{forall-exist}) we need category-theoretical semantics of existential and universal restrictions. We define an object $R_X$ in $\mathscr{C}_r$ such that $R_X\longrightarrow R_{(\exists R.D)}$ and  $\mathsf{dom}(R_X)\longleftrightarrows \exists R.D \sqcap \forall R.C$. This implies that $\mathsf{cod}(R_X)\longrightarrow  \mathsf{cod}(R_{(\exists R.D)})$ due to the functor $\mathsf{cod}$ from $\mathscr{C}_r$ to $\mathscr{C}_c$, and  $\mathsf{dom}(R_X)\longrightarrow   \forall R.C$. By definition, we have $\mathsf{cod}(R_{(\exists R.D)})\longrightarrow D$. Thus, $\mathsf{cod}(R_X)\longrightarrow D$.    Due to Arrow~(\ref{all-arrow2}) and $\mathsf{dom}(R_X)\longrightarrow   \forall R.C$, we obtain $\mathsf{cod}(R_X)\longrightarrow  C$. Hence, $\mathsf{cod}(R_X)\longrightarrow  C\sqcap D$. By the definition of existential restrictions, we obtain $\mathsf{dom}(R_X)\longrightarrow   \exists  R.(D \sqcap C)$. 
    
    \item The arrow from left to right of (\ref{exists-disj}) 
    is a direct consequence of the 
    the disjunction arrows~(\ref{disj01}), (\ref{disj02}) and the existential 
    property arrow~(\ref{exist-sub}). To prove the other direction, we make extensive use of 
    the negation and its properties. Let us write $U$ the right part of the arrow
    and $V$ the left part, we need to prove that the arrow $U\sqcap\neg 
    V\longrightarrow\bot$ exists. Indeed, if $U\sqcap\neg V\longrightarrow\bot$ holds, it follows that $U\longrightarrow
    \neg\neg V$ by the implication (\ref{neg-max}) in Definition \ref{def:neg}. Thus, we obtain $U\longrightarrow  V$ due to Property~(\ref{neg-double}) in Lemma \ref{lem:neg-prop}.
    
    For that, we need to rewrite $V$. First, apply rule~(\ref{neg-double}), 
    then apply our version of De Morgan's rules~(\ref{neg-disj}). Apply~(\ref{all-arrow1}) to each member of the conjunction, then (\ref{neg-double}) again:
    \[
    \exists R.C\sqcup\exists R.D\longleftrightarrows\neg(\forall R.\neg 
    C\sqcap\forall R.\neg D) 
    \]
    
    From there, consider the concept object $U\sqcap\neg V$ defined by 
    \[
    \exists R.(C\sqcap D)\sqcap\left(\forall R.\neg C\sqcap\forall R.\neg D\right)
    \]
    Using the implied associative nature of $\sqcap$, and Property~(\ref{forall-exist}) proved above, we can write
    \begin{align*}
        \exists R.(C\sqcup D)\sqcap&(\forall R.\neg C\sqcap\forall R.\neg D)\\
        &\longrightarrow\exists R.\left(\big((C\sqcup D)\sqcap\neg C\big)\sqcap\neg D\right)
    \end{align*}
    The distributive property~(\ref{conj2}) of $\sqcap$ and $\sqcup$, the disjunction property~(\ref{disj02}), and Definition~\ref{def:syntax-cat} of $\bot$, allow us to make the following reduction
    \begin{align*}
        (C\sqcup D)\sqcap\neg C&\longrightarrow (C\sqcap\neg C)\sqcup(D\sqcap\neg C)\\
        &\longrightarrow\bot\sqcup(D\sqcap\neg C)\\
        &\longrightarrow D\sqcap\neg C
    \end{align*}
    Thus, thanks to the commutative nature of $\sqcap$ and (\ref{conj01}) again, we have 
    $(D\sqcap\neg C\sqcap\neg D)\longrightarrow(\bot\sqcap C)\longrightarrow\bot$.
    Applying Properties~(\ref{exist-empty}) and (\ref{exist-sub}) of Lemma \ref{lem:exist-prop}
    to the previous existential arrow, we obtain 
    \[
    \exists R.(C\sqcup D)\sqcap(\forall R.\neg C\sqcap\forall R.\neg D)\longrightarrow\bot
    \]
    which concludes the proof.
\end{enumerate}
\end{proof}

Note that both $\mathscr{C}_c\langle C, \mathcal{O}\rangle$ and $\mathscr{C}_r\langle C, \mathcal{O}\rangle$ may consist of more  objects. However, new arrows should be derived from those existing or by using the properties given in Definitions~(\ref{def:syntax-cat}-\ref{def:forall}). Adding to $\mathscr{C}_c\langle C, \mathcal{O}\rangle$ a new arrow that is \emph{independent} from those existing leads to a semantic change of the ontology.   Since all properties in Lemma~ \ref{lem:neg-prop}, \ref{lem:exist-prop} and \ref{lem:forall-prop} are consequences of those given in these definitions, they can be used to obtain \emph{derived}  arrows (i.e. not independent ones).

\begin{theorem}[Arrow and subsumption]\label{thm:arrow-entail} Let $C_0$ be an $\mathcal{ALC}$ concept and   $\mathcal{O}$ an $\mathcal{ALC}$ ontology. Let  $\mathscr{C}_c\langle C_0,\mathcal{O} \rangle$ be an ontology  category. It holds that $\langle C_0,\mathcal{O}\rangle \models X\sqsubseteq Y$ (under set-theoretical semantics) if $X\longrightarrow Y$ is an arrow in $\mathscr{C}_c\langle C_0, \mathcal{O}\rangle$.
\end{theorem}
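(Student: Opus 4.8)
The plan is to prove soundness semantically: fix an arbitrary set-theoretical model $\mathcal{I} \models \mathcal{O}$ and show that every arrow of $\mathscr{C}_c\langle C_0,\mathcal{O}\rangle$ is witnessed by a set inclusion in a suitable extension of $\mathcal{I}$. First I would extend $\mathcal{I}$ to an interpretation $\mathcal{I}^\ast$ of \emph{all} objects of $\mathscr{C}_c\langle C_0,\mathcal{O}\rangle$ and $\mathscr{C}_r\langle C_0,\mathcal{O}\rangle$: $\mathcal{I}^\ast$ agrees with $\mathcal{I}$ on concept names and role names, interprets compound $\mathcal{ALC}$-concepts by the recursive clauses of Definition~\ref{def:alc}, interprets each auxiliary role object $R_{(\exists R.C)}$ as $\{\langle x,y\rangle \in R^{\mathcal{I}^\ast} \mid y \in C^{\mathcal{I}^\ast}\}$, and interprets each $\mathsf{dom}(R)$ (resp. $\mathsf{cod}(R)$) as the first (resp. second) projection of $R^{\mathcal{I}^\ast}$. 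Since this extension does not alter the interpretation of genuine $\mathcal{ALC}$-concepts, a routine induction on concept structure gives $E^{\mathcal{I}^\ast} = E^{\mathcal{I}}$ for every $\mathcal{ALC}$-concept $E$, so it suffices to prove $X^{\mathcal{I}^\ast} \subseteq Y^{\mathcal{I}^\ast}$ whenever $X \longrightarrow Y$ is an arrow.

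The core is then an induction on the derivation of $X \longrightarrow Y$ from the rules in Definitions~\ref{def:syntax-cat}--\ref{def:forall}. For the \emph{generating} arrows — identity arrows, $\bot \longrightarrow E$ and $E \longrightarrow \top$, $R_\bot \longrightarrow R \longrightarrow R_\top$, the $\mathsf{dom}/\mathsf{cod}$ arrows for $R_\top, R_\bot$, the GCI arrows $E \longrightarrow F$ with $E \sqsubseteq F \in \mathcal{O}$, and the flat arrows (\ref{disj01}), (\ref{conj01})--(\ref{conj2}), (\ref{neg-bot}), (\ref{neg-top}), (\ref{ex-arrow01}), (\ref{ex-arrow02}), (\ref{all-arrow1}) — the required inclusion is either immediate from the construction of $\mathcal{I}^\ast$, holds because $\mathcal{I}$ is a model of $\mathcal{O}$, or has already been verified in Lemmas~\ref{lem:disj}, \ref{lem:conj}, \ref{lem:neg}, \ref{lem:exist}, \ref{lem:forall}, which were phrased precisely so that a standard set interpretation validates the categorical clauses. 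For the \emph{inference-rule} steps — composition (Definition~\ref{def:syntax-cat}(\ref{def:syntax-cat:3})), functoriality of $\mathsf{dom}/\mathsf{cod}$ (Definition~\ref{def:syntax-cat}(\ref{def:syntax-cat:4})), the collapse $\mathsf{dom}(R)\!\longrightarrow\!\bot \Rightarrow R\!\longrightarrow\! R_\bot$ (and its $\mathsf{cod}$ variant), and the conditional arrows (\ref{disj02}), (\ref{conj1}), (\ref{neg-max}), (\ref{neg-min}), (\ref{ex-arrow03}), (\ref{all-arrow2}) — I apply the induction hypothesis to the premise arrows and then invoke, respectively, transitivity of $\subseteq$, functoriality of set projections, emptiness propagation, and the implications supplied by Lemmas~\ref{lem:disj}--\ref{lem:forall} (since $\mathcal{I}^\ast$ interprets all the relevant objects in the standard way, the hypotheses of those lemmas hold). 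Chaining these, $X^{\mathcal{I}^\ast} \subseteq Y^{\mathcal{I}^\ast}$, hence $X^{\mathcal{I}} \subseteq Y^{\mathcal{I}}$; as $\mathcal{I}$ was arbitrary, $\langle C_0,\mathcal{O}\rangle \models X \sqsubseteq Y$.

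The step I expect to be the main obstacle is the bookkeeping around the auxiliary role objects $R_{(\exists R.C)}$ and the $\mathsf{dom}/\mathsf{cod}$ objects: one must check that the chosen $\mathcal{I}^\ast$ genuinely validates \emph{all} the structural arrows these objects are forced to carry — in particular that $\mathsf{dom}(R_{(\exists R.C)})^{\mathcal{I}^\ast} = (\exists R.C)^{\mathcal{I}^\ast}$, that the $\mathsf{dom}/\mathsf{cod}$ monotonicity arrows of Definition~\ref{def:syntax-cat}(\ref{def:syntax-cat:4}) hold, and that $\mathsf{cod}(R) \longrightarrow \bot \Rightarrow R \longrightarrow R_\bot$ is sound for it — which amounts to verifying consistency of the extension with every clause of Definitions~\ref{def:syntax-cat}--\ref{def:forall}. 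A second point worth making explicit is that the statement concerns arrows that are \emph{derivable} via these rules (equivalently, arrows of the freely generated ontology category): as noted just before the theorem, inserting an \emph{independent} arrow into an ontology category changes the semantics and would break soundness. Once the extension $\mathcal{I}^\ast$ is pinned down, the remaining verifications are routine.
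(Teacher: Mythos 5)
Your proposal is correct, and its core is the same as the paper's: reduce the soundness of each arrow to the compatibility lemmas (Lemmas~\ref{lem:disj}, \ref{lem:conj}, \ref{lem:neg}, \ref{lem:exist}, \ref{lem:forall}) plus the observation that GCI arrows are satisfied in any model of $\mathcal{O}$. Where you genuinely diverge is in how the argument is organized, and your version fills real gaps. The paper's proof is a flat case split on which definition introduced the arrow and does not say how $\models X\sqsubseteq Y$ is to be read when $X$ or $Y$ is an auxiliary object such as $\mathsf{dom}(R_{(\exists R.C)})$ or $\mathsf{cod}(R')$; your explicit extension $\mathcal{I}^\ast$ (interpreting $R_{(\exists R.C)}$ as the restriction of $R^{\mathcal{I}^\ast}$ to pairs with second component in $C^{\mathcal{I}^\ast}$, and $\mathsf{dom}/\mathsf{cod}$ as projections) is exactly what is needed to make the statement meaningful for those objects, and it is the tacit reading behind Lemmas~\ref{lem:exist} and \ref{lem:forall}. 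Your induction on the derivation also handles composition (Definition~\ref{def:syntax-cat}.\ref{def:syntax-cat:3}) and the conditional arrows such as (\ref{disj02}), (\ref{conj1}), (\ref{neg-max}), (\ref{ex-arrow03}) uniformly, whereas the paper defers transitivity to case (iii) of the proof of Theorem~\ref{thm:cat-set} and leaves the conditional cases implicit in the lemma citations. Both arguments rest on the same assumption, which you state and the paper only gestures at in the remark preceding the theorem: the claim holds for arrows \emph{derivable} from Definitions~\ref{def:syntax-cat}--\ref{def:forall}, i.e.\ for the freely generated ontology category, since an independent arrow would indeed break soundness. The only small items left implicit in your write-up are the interpretations $R_\bot^{\mathcal{I}^\ast}=\emptyset$ and $R_\top^{\mathcal{I}^\ast}=\Delta^{\mathcal{I}}\times\Delta^{\mathcal{I}}$, which are forced by Definition~\ref{def:syntax-cat}.4(a) and are routine to check.
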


\begin{proof} 
For every axiom $E\sqsubseteq F$ in $\mathcal{O}$, we have $E\longrightarrow F$
in $\Catc\tuple{C_0,\mathcal{O}}$ according to Definition~\ref{def:onto-cat}. 
Property~(\ref{neg-conjoint}) of Lemma~\ref{lem:neg-prop} and Lemma~\ref{lem:neg}, 
we can conclude that $\tuple{C_0,\mathcal{O}}\models E\sqsubseteq F$.
For every other arrows $X\longrightarrow Y$ introduced to $\Catc\tuple{C_0, 
\mathcal{O}}$ from Definitions~\ref{def:disj}, \ref{def:conj}, \ref{def:neg}, 
\ref{def:exist},  \ref{def:forall}.(\ref{all-arrow1}), we have 
Lemmas~\ref{lem:disj}, \ref{lem:conj}, \ref{lem:neg}, \ref{lem:exist} and 
\ref{lem:forall}.(\ref{all-arrow4}) which give us $\langle C_0,\mathcal{O}\rangle  
\models X\sqsubseteq Y$.$\hfill\square$
\end{proof}

 We now  introduce  category-theoretical satisfiability of an  $\mathcal{ALC}$ concept  with respect to an $\mathcal{ALC}$  ontology.

\begin{definition}\label{def:cat-satisfiability}
Let $C_0$ be an $\mathcal{ALC}$ concept, $\mathcal{O}$ an $\mathcal{ALC}$ ontology.  $C$ is category-theoretically unsatifiable with respect to $\mathcal{O}$ if  there is an ontology category $\mathscr{C}_c\langle C_0, \mathcal{O}\rangle$ which has an arrow $C_0\longrightarrow \bot$.
\end{definition}

Since an ontology category $\mathscr{C}_c\langle C_0, \mathcal{O}\rangle$  may consist of objects arbitrarily built from the signature, Definition~\ref{def:cat-satisfiability} offers possibilities to build a larger ontology category  from which  new arrows can be discovered by applying arrows given in Definitions~(\ref{def:disj}-\ref{def:forall}).

\begin{theorem}\label{thm:cat-set} Let $\mathcal{O}$  be an $\mathcal{ALC}$ ontology and $C_0$ an  $\mathcal{ALC}$ concept. $C_0$ is category-theoreti\-cally unsatifiable with respect to $\mathcal{O}$ iff $C_0$ is set-theoretically unsatifiable.
\end{theorem}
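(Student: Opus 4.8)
The plan is to prove the two directions of the equivalence separately, treating the easy direction first and then doing the real work for the converse.

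\medskip

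\noindent\textbf{Direction ``$\Leftarrow$'' (set-theoretically unsatisfiable $\Rightarrow$ category-theoretically unsatisfiable).}
Actually I expect the easier direction to be the \emph{contrapositive} of ``$\Rightarrow$'', so let me organize it as: if $C_0$ is category-theoretically unsatisfiable, then there is an ontology category $\mathscr{C}_c\langle C_0,\mathcal{O}\rangle$ with an arrow $C_0\longrightarrow\bot$. By Theorem~\ref{thm:arrow-entail} this yields $\langle C_0,\mathcal{O}\rangle\models C_0\sqsubseteq\bot$ under set-theoretical semantics, i.e.\ $C_0^\mathcal{I}\subseteq\bot^\mathcal{I}=\emptyset$ for every model $\mathcal{I}$ of $\mathcal{O}$, so $C_0$ is set-theoretically unsatisfiable. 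Hence set-theoretic satisfiability implies category-theoretic satisfiability, which is the ``$\Leftarrow$'' of the stated iff. This step is essentially immediate given Theorem~\ref{thm:arrow-entail}.

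\medskip

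\noindent\textbf{Direction ``$\Rightarrow$'' (category-theoretically unsatisfiable $\Rightarrow$ set-theoretically unsatisfiable) --- equivalently, set-theoretically satisfiable $\Rightarrow$ category-theoretically satisfiable.}
Here I would argue the contrapositive: assume $C_0$ is \emph{not} category-theoretically unsatisfiable, i.e.\ no ontology category $\mathscr{C}_c\langle C_0,\mathcal{O}\rangle$ has an arrow $C_0\longrightarrow\bot$, and build a set-theoretic model $\mathcal{I}$ of $\mathcal{O}$ with $C_0^\mathcal{I}\neq\emptyset$. The natural candidate is a \emph{canonical / term model} whose domain is built out of the objects of a suitably saturated ontology category. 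Concretely, I would fix the ontology category $\mathscr{C}_c\langle C_0,\mathcal{O}\rangle$ generated by closing under all the arrow-introduction rules of Definitions~\ref{def:disj}--\ref{def:forall} (together with all derived arrows from Lemmas~\ref{lem:neg-prop}, \ref{lem:exist-prop}, \ref{lem:forall-prop}), and then interpret a concept name $A$ by the set of objects $X$ such that $X\longrightarrow A$ does \emph{not} force $X\longrightarrow\bot$ --- more precisely, mimic a Lindenbaum-style construction: elements of $\Delta^\mathcal{I}$ are objects $X$ with no arrow $X\longrightarrow\bot$, role $R$ relates $X$ to $Y$ when the category has an object $R'$ with $R'\longrightarrow R$, $\mathsf{dom}(R')\longleftrightarrows X$, $\mathsf{cod}(R')\longleftrightarrows Y$, and $A^\mathcal{I}=\{X : X\longrightarrow A\}$. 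One then verifies by induction on concept structure, using exactly the compatibility lemmas (Lemmas~\ref{lem:disj}, \ref{lem:conj}, \ref{lem:neg}, \ref{lem:exist}, \ref{lem:forall}) and the derived properties, that $X^\mathcal{I}=\{Y : Y\longrightarrow X\}$ for every concept object $X$, that every GCI of $\mathcal{O}$ is satisfied (immediate from Definition~\ref{def:onto-cat}), and that $C_0\in C_0^\mathcal{I}$ since $C_0\longrightarrow C_0$ always holds while $C_0\longrightarrow\bot$ does not by assumption. This gives $C_0^\mathcal{I}\neq\emptyset$, contradicting set-theoretic unsatisfiability.

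\medskip

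\noindent\textbf{Main obstacle.}
The delicate point is the \emph{truth lemma} for the canonical model --- showing $X^\mathcal{I}=\{Y : Y\longrightarrow X\}$ simultaneously for all concept objects in a category that is only as large as we choose it to be. The subtlety is twofold: (i) for existential restrictions one must guarantee that the witness object $R_{(\exists R.C)}$ and its $\mathsf{dom}/\mathsf{cod}$ actually exist in the chosen category and behave correctly, which forces the category to be closed under a potentially infinite saturation process; (ii) for negation and universal restriction, the ``greatest/smallest object'' clauses (\ref{neg-max}), (\ref{neg-min}), (\ref{all-arrow2}) are not Horn-like, so one must be careful that the saturated category does not collapse (i.e.\ does not spuriously derive $C_0\longrightarrow\bot$) --- this is exactly where the hypothesis ``no arrow $C_0\longrightarrow\bot$ in \emph{any} $\mathscr{C}_c\langle C_0,\mathcal{O}\rangle$'' must be used, presumably by taking a maximal consistent such category. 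I would expect the proof to either invoke a fixpoint/Zorn argument to obtain this maximal category, or to proceed more cleverly by reducing category-theoretic unsatisfiability to an already-known tableau or type-elimination procedure for $\mathcal{ALC}$ and then citing its correctness; the former is more self-contained and is the route I would attempt first.
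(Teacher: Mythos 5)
Your first direction (category-theoretic unsatisfiability implies set-theoretic unsatisfiability via Theorem~\ref{thm:arrow-entail}) is exactly the paper's argument and is fine. The problem is the converse, where you depart from the paper and the departure does not go through as sketched. You propose a canonical/term model whose domain is the set of objects $X$ with no arrow $X\longrightarrow\bot$ and whose truth lemma is $X^\mathcal{I}=\{Y : Y\longrightarrow X\}$. That truth lemma is false already for disjunction: the object $A\sqcup B$ satisfies $A\sqcup B\longrightarrow A\sqcup B$, hence lies in $(A\sqcup B)^\mathcal{I}$, but in a consistent saturated category there need be no arrow $A\sqcup B\longrightarrow A$ nor $A\sqcup B\longrightarrow B$, so $(A\sqcup B)^\mathcal{I}\neq A^\mathcal{I}\cup B^\mathcal{I}$. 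The same failure occurs for negation: $(\neg C)^\mathcal{I}=\Delta^\mathcal{I}\setminus C^\mathcal{I}$ requires every domain element to carry an arrow to $C$ or to $\neg C$, which raw objects do not. Repairing this forces you to replace objects by maximal consistent sets of subconcepts (types/Hintikka sets), prove an existence lemma for existential witnesses, and handle the non-Horn clauses (\ref{neg-max}), (\ref{neg-min}) during saturation --- you correctly flag all of this as ``the main obstacle,'' but you do not resolve it, so the hard direction is not actually proved.

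The paper avoids the canonical-model construction entirely by proving the converse in the other logical direction: Lemma~\ref{lem:set2cat} shows directly that set-theoretic unsatisfiability implies category-theoretic unsatisfiability. It runs the standard $\mathcal{ALC}$ tableau algorithm, uses the known fact that every complete completion tree then contains a clash, and translates the refutation into arrows: each clashing leaf $y$ yields $\bigsqcap_{Y\in L(y)} Y\longrightarrow\bot$, the clash is pushed across role edges using (\ref{exist-empty}) and (\ref{forall-exist}), and across disjunction branch points using the distributivity arrow (\ref{conj2}), until one obtains $C_0\longrightarrow\bot$ in a concretely constructed ontology category. This is the ``reduce to an already-known tableau procedure'' alternative you mention but set aside; if you want a self-contained argument in your chosen direction, you would essentially have to redo the completeness proof for $\mathcal{ALC}$ with a type-based domain, which is substantially more work than the sketch suggests.
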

To prove this theorem, we need the following preliminary result.

\begin{lemma}\label{lem:set2cat}
Let $\mathcal{O}$  be an $\mathcal{ALC}$ ontology and $C_0$  an  $\mathcal{ALC}$ concept. If $C_0$ is set-theoretically unsatifiable with respect to $\mathcal{O}$ then $C_0$ is  category-theoretically unsatifiable. 
\end{lemma}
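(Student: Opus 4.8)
The plan is to prove the contrapositive: assuming $C_0$ is \emph{not} category-theoretically unsatisfiable with respect to $\mathcal{O}$, construct a set-theoretical model $\mathcal{I}$ of $\mathcal{O}$ with $C_0^\mathcal{I}\neq\emptyset$. By Definition~\ref{def:cat-satisfiability}, the hypothesis says that \emph{no} ontology category over $\langle C_0,\mathcal{O}\rangle$ contains an arrow $C_0\longrightarrow\bot$. First I would fix the canonical ontology category $\mathscr{C}^*$ over $\langle C_0,\mathcal{O}\rangle$ (and its role companion), generated by closing the signature under all the object constructors ($\sqcap$, $\sqcup$, $\neg$, $\exists R.(\cdot)$, $\forall R.(\cdot)$, the role objects $R_{(\exists R.C)}$, and $\mathsf{dom}$, $\mathsf{cod}$) and under all the arrow rules of Definitions~\ref{def:syntax-cat}--\ref{def:forall}. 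Being an ontology category, $\mathscr{C}^*$ contains no arrow $C_0\longrightarrow\bot$, and all the derived arrow properties proved above (Lemmas~\ref{lem:comas}, \ref{lem:trans}, \ref{lem:neg-prop}, \ref{lem:exist-prop}, \ref{lem:forall-prop}) hold in it.

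Next I would pass to a Boolean algebra. Write $X\sim Y$ for ``$X\longrightarrow Y$ and $Y\longrightarrow X$ in $\mathscr{C}^*$'' and put $B=\mathsf{Ob}(\mathscr{C}^*)/{\sim}$, ordered by $[X]\le[Y]$ iff $X\longrightarrow Y$. By (\ref{disj01})--(\ref{disj02}) the class $[C\sqcup D]$ is the join of $[C]$ and $[D]$; by (\ref{conj01})--(\ref{conj1}) the class $[C\sqcap D]$ is their meet; (\ref{conj2}) together with the reverse inequality valid in every lattice yields distributivity; and (\ref{neg-bot}), (\ref{neg-top}), (\ref{neg-double}) show $[\neg C]$ is a complement of $[C]$. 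Hence $B$ is a Boolean algebra with bottom $[\bot]$. Since $\mathscr{C}^*$ has no arrow $C_0\longrightarrow\bot$ we have $[C_0]\neq[\bot]$, so the ultrafilter lemma gives an ultrafilter $u_0$ of $B$ with $[C_0]\in u_0$.

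The core of the argument is a canonical model with a truth lemma. Let $\Delta^\mathcal{I}$ be the set of all ultrafilters of $B$; put $A^\mathcal{I}=\{u\mid[A]\in u\}$ for concept names $A$, and $\langle u,v\rangle\in R^\mathcal{I}$ iff $[Z]\in v$ whenever $[\forall R.Z]\in u$. One proves by induction on the size of a concept $X$ that $u\in X^\mathcal{I}$ iff $[X]\in u$, for all $u\in\Delta^\mathcal{I}$. The atomic and Boolean cases are the standard ultrafilter facts. For $X=\exists R.Y$ with $[\exists R.Y]\in u$: the set $G=\{[Z]\mid[\forall R.Z]\in u\}$ is a filter, upward closed by (\ref{forall-sub}) and closed under meets because $\forall R.(Z_1\sqcap Z_2)\sim\forall R.Z_1\sqcap\forall R.Z_2$ (from (\ref{forall-sub}), (\ref{all-arrow1}), (\ref{exists-disj}), (\ref{neg-conj})--(\ref{neg-disj})); and $[Y]\sqcap g\neq[\bot]$ for all $g\in G$, since $[Y]\sqcap[Z]=[\bot]$ would give $Y\longrightarrow\neg Z$ by (\ref{neg-disjoint}), hence $\exists R.Y\longrightarrow\exists R.\neg Z$ by (\ref{exist-sub}) and $\exists R.Y\sqcap\forall R.Z\longrightarrow\bot$ by (\ref{forall-exists-bot}), contradicting that $u$ properly contains $[\exists R.Y]$ and $[\forall R.Z]$. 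So $G\cup\{[Y]\}$ generates a proper filter; any ultrafilter $v$ extending it has $\langle u,v\rangle\in R^\mathcal{I}$ and $[Y]\in v$, so $v\in Y^\mathcal{I}$ by the induction hypothesis and $u\in(\exists R.Y)^\mathcal{I}$. The converse uses $\neg\exists R.Y\sim\forall R.\neg Y$ (from (\ref{all-arrow1}), (\ref{neg-double}), (\ref{exist-sub})) together with the definition of $R^\mathcal{I}$. The case $X=\forall R.Y$ is dual: one direction is immediate from the definition of $R^\mathcal{I}$; for the other, if $[\forall R.Y]\notin u$ then $[\exists R.\neg Y]\in u$ (by $\neg\forall R.Y\sim\exists R.\neg Y$, from (\ref{all-arrow1}), (\ref{neg-double})), and the same filter construction applied to $\neg Y$ produces a $v$ with $\langle u,v\rangle\in R^\mathcal{I}$ and $[Y]\notin v$, so $v\notin Y^\mathcal{I}$. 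Finally $u_0\in C_0^\mathcal{I}$ by the truth lemma, and for every axiom $E\sqsubseteq F$ of $\mathcal{O}$ we have $E\longrightarrow F$ in $\mathscr{C}^*$ by Definition~\ref{def:onto-cat}, hence $[E]\le[F]$ and $E^\mathcal{I}\subseteq F^\mathcal{I}$ by the truth lemma; so $\mathcal{I}\models\mathcal{O}$ and $C_0$ is set-theoretically satisfiable.

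I expect the decisive step to be the existential case of the truth lemma, precisely the verification that $G\cup\{[Y]\}$ generates a proper filter: this is exactly where the interaction between universal and existential restrictions is needed, in the form (\ref{forall-exists-bot}) (equivalently (\ref{forall-exist})), and it is this very use that would fail in $\mathcal{ALC}_{\overline{\forall}}$. A secondary subtlety is well-foundedness of the induction: it must be carried on concept size with $\forall R.Y$ handled directly, not by unfolding it into the strictly larger $\neg\exists R.\neg Y$, the existential witness construction being reused as a standalone fact about $B$ and $R^\mathcal{I}$ rather than as an instance of the induction hypothesis.
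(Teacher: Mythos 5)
Your proof is correct, but it takes a genuinely different route from the paper's. The paper argues the implication directly: it runs a standard tableau procedure on $\langle C_0,\mathcal{O}\rangle$, invokes the known completeness result that set-theoretic unsatisfiability forces every complete completion tree to contain a clash, and then propagates those clashes from the leaves of the tree of completion trees up to its root by translating each tableau step into arrows (conjunctions of node labels mapped to $\bot$, with the distributivity arrow~(\ref{conj2}) handling disjunction branching and (\ref{exist-empty})/(\ref{forall-exist}) handling role successors), ending with an explicit arrow $C_0\longrightarrow\bot$. You instead prove the contrapositive by a Lindenbaum--Tarski construction: quotient a saturated ontology category into a Boolean algebra, build the canonical ultrafilter model, and establish a truth lemma. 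Both arguments are sound. Yours is self-contained modulo the ultrafilter lemma, avoids importing tableau completeness from the literature, and has the virtue of isolating exactly where the $\exists$/$\forall$ interaction (\ref{forall-exists-bot})/(\ref{forall-exist}) is indispensable --- namely the properness of the filter generated by $G\cup\{[Y]\}$ --- which neatly anticipates the paper's later discussion of $\mathcal{ALC}_{\overline{\forall}}$; the paper's argument stays within finite syntactic structures and produces the witnessing arrow $C_0\longrightarrow\bot$ constructively. One point you should make explicit: the ``canonical'' category $\mathscr{C}^*$ has to be obtained as the least fixed point of the (monotone) object- and arrow-generation rules of Definitions~\ref{def:syntax-cat}--\ref{def:forall}, so that it is itself an ontology category in the sense of Definition~\ref{def:onto-cat} (and hence the hypothesis of Definition~\ref{def:cat-satisfiability} applies to it), and so that the quotient $B$ really has all the meets, joins and complements your argument uses, including for auxiliary objects such as $\mathsf{dom}(R')$ and $\mathsf{cod}(R')$.
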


In order to prove this lemma, we use a tableau algorithm to generate from an unsatisfiable  $\mathcal{ALC}$ concept with respect to an ontology a set of completion trees each of which contains a clash ($\bot$ or a pair $\{A,\neg A\}$ where $A$ is a concept name).  To ensure self-containedness of the paper, we describe   here necessary elements which allow to follow the proof of the lemma. We refer the readers to \cite{baa2000,hor07} for formal details.     

We use $\mathsf{sub}(C_0,\mathcal{O})$ to denote  a set of subconcepts in NNF (i.e. negations appear only in front of concept name) from $C_0$ and $\mathcal{O}$. This set is defined as the smallest set such that  the following conditions hold :
\begin{enumerate}
    \item $C_0 \in\mathsf{sub}(C_0,\mathcal{O})$ and $\neg E \sqcup F \in\mathsf{sub}(C_0,\mathcal{O})$ for each axiom $E\sqsubseteq F$; 
    \item if $\neg C \in\mathsf{sub}(C_0,\mathcal{O})$ then $C\in \mathsf{sub}(C_0,\mathcal{O})$; 
    \item if $C \sqcap D$ or  $C \sqcup D\in\mathsf{sub}(C_0,\mathcal{O})$ then $C, D\in \mathsf{sub}(C_0,\mathcal{O})$; \item  if $\exists R.C $ or  $\forall R.C \in\mathsf{sub}(C_0,\mathcal{O})$ then $C\in \mathsf{sub}(C_0,\mathcal{O})$. 
\end{enumerate}    
    A \emph{completion tree} $T=\langle v_0, V, E, L\rangle$ is a tree where $V$ is a set of nodes, and each node $x\in V$ is labelled with  $L(x)\subseteq \mathsf{sub}(C_0,\mathcal{O})$, and  a root $v_0\in V$ with $C\in L(v_0)$; $E$ is a set of edges, and each edge $\langle x,y\rangle\in E$ is labelled with a role $L\langle x,y\rangle=\{R\}$ and $R\in \mathbf{R}$. In a completion tree $T$, a node $x$ is \emph{blocked} by an ancestor $y$ if $L(x)=L(y)$. 

To build completion trees, a tableau algorithm starts by initializing a tree $T_0$ and  applies the following \emph{completion rules} to each \emph{clash-free} tree $T$ (i.e. no node contains a clash):
\begin{enumerate}
    \item 
\textbf{[$\sqsubseteq$-rule]} for each axiom $E\sqsubseteq F$ of $\mathcal{O}$ and each node $x$, if $\neg E\sqcup F \notin L(x)$ then  $L(x)\gets L(x) \cup \{\neg E\sqcup F\}$;
\item \textbf{[$\sqcap$-rule]} if $E \sqcap F\in L(x)$ and $\{E,F\}\nsubseteq L(x)$ then $L(x)\gets L(x) \cup \{E,F\}$;
\item \textbf{[$\sqcup$-rule]} if $E \sqcup F\in L(x)$ and $\{E,F\}\cap L(x)=\emptyset$ then it creates two copies $T_1$ and $T_2$ of the current tree $T$,  and set $L(x_1)\gets L(x_1)\cup \{E\}$, $L(x_2)\gets L(x_2)\cup \{F\}$ where $x_i$ in  $T_i$   is a copy of $x$ from $T$. 
\item \textbf{[$\exists$-rule]} if $\exists R.C\in L(x)$, $x$ is not blocked and $x$ has no  edge $\langle x,x'\rangle$ with $L(\langle x,x'\rangle)=\{R\}$ and $C\in L(x')$ then it creates a successor $x'$ of $x$, and set $L(x')\gets\{C\}$, $L(\langle x,x'\rangle\gets\{R\}$;
\item \textbf{[$\forall$-rule]} if $\forall R.C\in L(x)$ and $x$ has a successor $x'$ with    $L(\langle x,x'\rangle=\{R\}$ then  $L(x')\gets L(x')\cup \{C\}$. 
\end{enumerate}

When   \textbf{[$\sqcup$-rule]} is applied   to a node $x$ of a completion tree $T$ with $E\sqcup F\in L(x)$, it   generates two children trees $T_1$ and $T_2$  with a node $x_1$ in $T_1$ such that $E\sqcup F,E\in L(x_1)$,   and a node $x_2$ in $T_2$ such that $E\sqcup F,F\in L(x_2)$ as described above.  In this case,  we say that $T$ is parent of $T_1$ and $ T_2$ by $x$; or $T_1$ and $ T_2$ are children of $T$ by $x$;   
$x$ is called a \emph{disjunction} node by $E\sqcup F$; and $x_1,x_2$ are called  \emph{disjunct} nodes of $x$ by $E\sqcup F$.  We use $\mathbb{T}$ to denote the tree whose nodes are completion trees generated by the tableau algorithm as described above. At any moment, a complete rule is applied only to leaf trees of $\mathbb{T}$.

A completion tree is \emph{complete} if no completion rule is applicable. It was shown that if $C$ is set-theoretically unsatisfiable with respect to $\mathcal{O}$ then all \emph{complete completion trees} contain a clash (an incomplete completion tree may contain a clash)  \cite{baa2000,hor07}. Since this result does not depend on the order of applying completion rules to nodes, we can assume in this proof that the following order   \textbf{[$\sqsubseteq$-rule]}, \textbf{[$\sqcap$-rule]}, \textbf{[$\sqcup$-rule]}, \textbf{[$\exists$-rule]} and \textbf{[$\forall$-rule]} is used, and a completion rule should be applied to the most ancestor node if applicable. 
Some properties are drawn from this assumption.
\begin{enumerate}[label=(P\arabic*), leftmargin=*]
\item\label{prop:1} If a node $x$ of a completion tree $T$ contains a clash then $x$ is a leaf node of $T$.

\item\label{prop:2} For a completion tree $T$, if  a node $y$ is an ancestor of a node $x$  in  $T$ such that $x$ is a disjuntion node  and $y$ is a disjunt node of $y'$, then the children trees by $y'$ are ancestors of the children trees by $x$   in $\mathbb{T}$.
\end{enumerate}
\ref{prop:1} tells us that when a clash is discovered in a completion tree $T$, no rule is applied to any node of $T$ while \ref{prop:2} is a consequence of the order of rule applications and the fact that each completion rule is applied to the most ancestor node in a completion tree if applicable.

\noindent \textbf{Proof of Lemma~\ref{lem:set2cat}}. We define an ontology category $\mathscr{C}_c\langle C_0, \mathcal{O}\rangle$  from $\mathbb{T}$ by starting from leaf trees of   $\mathbb{T}$. By \ref{prop:1}, each leaf tree $T$ of  $\mathbb{T}$ contains a clash in a leaf node $y$. We add an object $\bigsqcap_{Y\in L(y)} Y$ and  an  arrow  $\bigsqcap_{Y\in L(y)} Y\longrightarrow \bot$  to $\mathscr{C}_c\langle C_0, \mathcal{O}\rangle$. In the sequel, we try to define a sequence of arrows started with $\bigsqcap_{Y\in L(y)} Y\longrightarrow \bot$  which makes clashes propagate from the leaves into the root of $\mathbb{T}$. This propagation has to get through two crucial kinds of passing: from a node of a completion tree to an ancestor that is a disjunct node; and from such a disjunct node in a completion tree to its disjunction node in the parent completion tree in $\mathbb{T}$.

Let $T_1$ and $T_2$ be two leaf trees of $\mathbb{T}$ whose parent is $T$, and $y_1, y_2$ two leaf nodes of $T_1,T_2$ containing a clash.  By construction, $T_i$ has  a disjunct node $x_i$  which is an ancestor of $y_i$     with  $x_i\neq y_i$, and there is no disjunct node between $x_i$ and $y_i$ (among descendants of $x_i$). For each node $z$ between $y_i$ and $x_i$ if it exists, we add objects $\bigsqcap_{Z\in L(z)} Z$   to $\mathscr{C}_c\langle C_0, \mathcal{O}\rangle$.  Let $z'$ be the parent node of $y_i$. By construction, there is  a concept $\exists R.D\in L(z')$ and $D\in L(y_i)$. In addition, if  $\forall R.D_1\in L(z')$ and \textbf{[$\forall$-rule]} is applied to $z'$ then  $D_1\in L(y_i)$. We show that $D\longrightarrow Z$ or  $D_1\longrightarrow Z$ for each concept $Z\in L(y_i)$. By construction,    \textbf{[$\sqcup$-rule]} is not applied to $y_i$ (and any node from $x_i$ to $y_i$). If $Z=\neg E \sqcup F$ comes from an axiom $E\sqsubseteq F$, then $D\longrightarrow \top \longrightarrow Z$. If $Z$ comes from  some $Z \sqcap Z'$  then we have to have already  $D\longrightarrow  Z \sqcap Z'\longrightarrow Z$ or $D_1\longrightarrow  Z \sqcap Z'\longrightarrow Z$. Hence, $\bigsqcap_{Y\in L(y_i)} Y\longrightarrow \bot$ implies $D\sqcap \bigsqcap_{\forall R.D_i\in L(z')} D_i\longrightarrow \bot$. By (\ref{exist-empty}) and  (\ref{forall-exist}), we obtain $\exists R.D\sqcap \bigsqcap_{\forall R.D_i\in L(z')} \forall R.D_i\longrightarrow \bot$, and thus $\bigsqcap_{Z\in L(z')} Z\longrightarrow \bot$. We add an object $\bigsqcap_{X\in L(z')} X$ and an arrow  $\bigsqcap_{X\in L(z')} X\longrightarrow \bot$   to $\mathscr{C}_c\langle C_0, \mathcal{O}\rangle$.  By using the same argument from $z'$ until $x_i$, we add an object $\bigsqcap_{X\in L(x_i)} X$ and an arrow  $\bigsqcap_{X\in L(x_i)} X\longrightarrow \bot$   to $\mathscr{C}_c\langle C_0, \mathcal{O}\rangle$.

We now show $\bigsqcap_{X\in L(x)} X\longrightarrow \bot$ where $x$ is the disjunction node of $x_i$, $L(x)=W\cup \{E\sqcup F\}$,  $L(x_1)=W\cup \{E\sqcup F, E\}$, $L(x_2)=W\cup \{ E\sqcup F, F\}$. Due to (\ref{conj2}), we have $\bigsqcap_{V\in W} V\sqcap (E\sqcup F) \longrightarrow (\bigsqcap_{V\in W} V\sqcap  E) \sqcup (\bigsqcap_{V\in W} V \sqcap F)$. Moreover, since $\bigsqcap_{X\in L(x_i)} X\longrightarrow \bot$, we have $ \bigsqcap_{V\in W} V\sqcap  E \longrightarrow \bigsqcap_{V\in W} V\sqcap (E\sqcup F) \sqcap E \longrightarrow  \bigsqcap_{X\in L(x_1)} X \longrightarrow\bot$ (we have $(E\sqcup F)\sqcap E\longrightarrow E$ due to (\ref{disj01}) and (\ref{conj1})), and $\bigsqcap_{V\in W} V\sqcap  F \longrightarrow \bigsqcap_{V\in W} V\sqcap (E\sqcup F) \sqcap F \longrightarrow  \bigsqcap_{X\in L(x_2)} X \longrightarrow\bot$. Therefore, if we add an object  $\bigsqcap_{V\in W} V\sqcap (E\sqcup F)$ to $\mathscr{C}_c\langle C_0, \mathcal{O}\rangle$, we obtain an arrow  $\bigsqcap_{V\in W} V\sqcap (E\sqcup F) \longrightarrow\bot$ due to (\ref{conj2}).

We can apply the same argument from $x$ to the next disjunct node  in $T$ which is an ancestor of $x$ according to \ref{prop:2},  and go upwards in $\mathbb{T}$ to find its parent and sibling. This process can continue and reach the root tree $T_0$ of $\mathbb{T}$. We get   $\bigsqcap_{X\in L(x_0)} X \longrightarrow \bot$ where $x_0$ is the root node of $T_0$. By construction, we have $\{C\}\cup \{\neg E\sqcup F\mid E\sqsubseteq F\in \mathcal{O}\}\subseteq L(x_0)$.  Let $X\in L(x_0)$. If $X=\neg E \sqcup F$ for some axiom $E  \sqsubseteq F\in \mathcal{O}$ then  $C\longrightarrow \top \longrightarrow \neg E \sqcup F$. Moreover, if $X=E'$ or $X=F'$ with $C= E' \sqcap F'$ then $C\longrightarrow E'$ and $C\longrightarrow F'$. This implies that    $C  \longrightarrow  \bigsqcap_{X\in L(x_0)} X \longrightarrow \bot$. Hence,   $C \longrightarrow  \bot$.  Therefore,  $\mathscr{C}_c\langle C_0, \mathcal{O}\rangle$ is an ontology category consisting of  $C \longrightarrow  \bot$.  This completes the proof of the lemma.


\noindent \textbf{Proof of Theorem~\ref{thm:cat-set}}.
\noindent ``$\Longleftarrow$". Let $\mathscr{C}_c\langle C_0, \mathcal{O}\rangle$ be an ontology category. Assume that there is   an arrow $C\longrightarrow \bot$ in $\mathscr{C}_c\langle C_0, \mathcal{O}\rangle$. Every
 arrow $X\longrightarrow Y$ added to $\mathscr{C}_c\langle C_0, \mathcal{O}\rangle$  must be one of following cases:
     (i) $X\sqsubseteq Y$ is an axiom of $\mathcal{O}$. Thus, $\mathcal{O}\models X\sqsubseteq Y$.
     (ii) $X\longrightarrow Y$ is added by Definitions~\ref{def:disj}, \ref{def:conj}, \ref{def:neg}, \ref{def:exist},  \ref{def:forall}. Due to  Theorem~\ref{thm:arrow-entail}, we have $\mathcal{O}\models X\sqsubseteq Y$.
     (iii) $X\longrightarrow Y$ is obtained by transitivity from  $X\longrightarrow Z$ and $Z\longrightarrow Y$. It holds that    $\mathcal{O}\models X\sqsubseteq Z$ and $\mathcal{O}\models Z\sqsubseteq Y$ imply $\mathcal{O}\models X\sqsubseteq Y$.
Hence, if $\mathscr{C}_c\langle C_0, \mathcal{O}\rangle$ consists of   an arrow $C\longrightarrow \bot$, then $\mathcal{O}\models C\sqsubseteq \bot$.

\noindent ``$\Longrightarrow$". A consequence of Lemma~\ref{lem:set2cat}.

%
\section{Reasoning in a sublogic of $\mathcal{ALC}$} \label{sec:reasoning}
 
 In this section, we identify a new sublogic of $\mathcal{ALC}$,  namely $\mathcal{ALC}_{\overline{\forall}}$, obtained from $\mathcal{ALC}$ with all the properties introduced in Definitions~(\ref{def:disj}-\ref{def:forall}), except for Property~(\ref{all-arrow2}) in Definition~\ref{def:forall}. This   sublogic cannot be defined under the usual  set-theoretical semantics since  Property~(\ref{all-arrow2})  is \emph{not independent} from Definition~\ref{def:exist} and Property~(\ref{all-arrow1}) in this setting. Indeed, for every interpretation $\mathcal{I}$ it holds that if ${R'}^\mathcal{I}\subseteq R^\mathcal{I}$, $\mathsf{dom}(R')^\mathcal{I}\subseteq \forall R.C^\mathcal{I}$ and $x\in \mathsf{cod}(R')^\mathcal{I}$, then $x\in C^\mathcal{I}$.

First, we need to show the independence of  Property~(\ref{all-arrow2}), i.e. that it is not derived from the other properties introduced in Definitions~(\ref{def:disj}-\ref{def:exist}) and from Property~(\ref{all-arrow1}) in Definition~\ref{def:forall}. For this purpose, we present in Example~\ref{ex:notforall}  a category   which verifies all properties from  Definition~\ref{def:exist}, and Property~(\ref{all-arrow1}), but not Property~(\ref{all-arrow2}).

\begin{example}\label{ex:notforall} Every ontology category $\mathscr{C}_c\tuple{\forall R.D,\emptyset}$     that has an object $\forall R.D$  has to consist at least of  the following arrows : $\forall R.D \longleftrightarrows\neg \exists R.\neg D$, $\exists R.\neg D\longleftrightarrows \mathsf{dom}(R_{(\exists R.\neg D)})$  where $R$ and $R_{(\exists R.\neg D)}$ are objects of $\mathscr{C}_r\tuple{\forall R.D,\emptyset}$. By applying Definitions~\ref{def:syntax-cat}-\ref{def:exist}, we can add to  $\mathscr{C}_c\tuple{\forall R.D,\emptyset}$ other arrows such as $\exists R.\neg D \sqcap \neg \exists R.\neg D\longrightarrow \bot$, $\top \longrightarrow  \exists R.\neg D \sqcup \neg \exists R.\neg D$ and $\mathsf{cod}(R_{(\exists R.\neg D)})\longrightarrow \neg D$. 

Let $R'$ be some object in $\mathscr{C}_c\tuple{\forall R.D,\emptyset}$   such that $R'\longrightarrow R$ and $\mathsf{dom}(R')\longrightarrow \forall R.D$. If we  apply exhaustively Definitions~\ref{def:syntax-cat}-\ref{def:exist} and Property~(\ref{all-arrow1}) in Definition~\ref{def:forall}, they never add  to $\mathscr{C}_c\tuple{\forall R.D,\emptyset}$ an arrow   $\mathsf{cod}(R')\longrightarrow  D$, which should be derived by  Property~(\ref{all-arrow2}) in Definition~\ref{def:forall}. 
\end{example}

We can follow the same idea used in Definition~\ref{def:cat-satisfiability} to introduce  concept unsatisfiability for   $\mathcal{ALC}_{\overline{\forall}}$ without referring to set interpretation. 

\begin{definition}[Category-theoretical unsatisfiability in $\mathcal{ALC}_{\overline{\forall}}$]\label{def:concept-sat}
Let $C$ be an $\mathcal{ALC}_{\overline{\forall}}$ concept, $\mathcal{O}$ an $\mathcal{ALC}_{\overline{\forall}}$ ontology.  $C$ is category-theoretically unsatifiable with respect to $\mathcal{O}$ if  there is an ontology category $\mathscr{C}_c\langle C, \mathcal{O}\rangle$ which has an arrow $C\longrightarrow \bot$.
\end{definition}

In the sequel, we propose an algorithm for checking satisfiability of an $\mathcal{ALC}_{\overline{\forall}}$ concept $C_0$ with respect to an  ontology $\mathcal{O}$. Similarly to the usual set theoretical-semantics, we can define  Negation Normal Form (NNF) $\mathsf{NNF}(C)$  of a concept object $C$, i.e. negations appear only in front of concept name object. It is possible to convert polynomially a concept object to its NNF by using the following properties resulting from Lemma~\ref{lem:neg-prop} and Definition~\ref{def:forall}.
\begin{enumerate}
    \item[] $\neg(C\sqcup D)\longleftrightarrows \neg C\sqcap \neg D$
    \item[] $\neg(C\sqcap D)\longleftrightarrows \neg C\sqcup \neg D$
    \item[] $\neg \forall R.C \longleftrightarrows \exists R.\neg C$
    \item[] $\neg \exists R.C \longleftrightarrows \forall R.\neg C$
    \item[] $\neg \neg C\longleftrightarrows  C$
    \item[] $\neg \top\longleftrightarrows  \bot$
    \item[] $\neg \bot\longleftrightarrows  \top$
\end{enumerate}

\begin{definition}[subconcepts]\label{def:subconcepts} Let $C_0$ be an $\mathcal{ALC}$ concept and $\mathcal{O}$ an $\mathcal{ALC}_{\overline{\forall}}$ ontology.
 A smallest set $\mathsf{sub}\langle C_0,\mathcal{O}\rangle$ of subconcepts   occurring in $C_0$ and $\mathcal{O}$ is defined as follows:
    \begin{enumerate}
    
    \item $C_0 \in\mathsf{sub}\langle C_0,\mathcal{O}\rangle$, and  $E,  F \in\mathsf{sub}\langle C_0,\mathcal{O}\rangle$ for each axiom $E\sqsubseteq F\in \mathcal{O}$;

    \item  If $E \sqcap F$ or  $E \sqcup F\in\mathsf{sub}\langle C_0,\mathcal{O}\rangle$ then $E, F\in \mathsf{sub}\langle C_0,\mathcal{O}\rangle$; 
    
    \item  If $\exists R.D $ or  $\forall R.D \in\mathsf{sub}\langle C_0,\mathcal{O}\rangle$ then
    $D\in \mathsf{sub}\langle C_0,\mathcal{O}\rangle$;
    
    \item If $C \in\mathsf{sub}\langle C_0,\mathcal{O}\rangle$ then $\mathsf{NNF}(\neg C)\in \mathsf{sub}\langle C_0,\mathcal{O}\rangle$.
\end{enumerate}

\end{definition}

To check satisfiability of an $\mathcal{ALC}_{\overline{\forall}}$ concept $C_0$ with respect to an  ontology $\mathcal{O}$, we initialize an ontology category $\mathscr{C}\langle C_0,\mathcal{O} \rangle$ and saturate it by applying the saturation rules in Table~\ref{tab:satRules} to $\mathscr{C}_c\langle C_0,\mathcal{O} \rangle$ until no rule is applicable. Each saturation rule in Table~\ref{tab:satRules} corresponds to a property given in the definitions of constructor semantics. Moreover, we adopt the following assumptions when designing the algorithm. 

\begin{enumerate}
    \item The collections $\mathsf{Ob}(\mathscr{C})$ and $\mathsf{Hom}(\mathscr{C})$ of  a category $\mathscr{C}$ built by the algorithm are considered as sets, i.e there are no duplicate in them.  
    
    \item When the algorithm adds an object $X$ to a category $\mathscr{C}$, it adds also an identity arrow $X\longrightarrow X$, and $\bot\longrightarrow X$, $X\longrightarrow \top$ where $\bot,\top$ are initial and terminal objects of  $\mathscr{C}$. In particular, if an object $X$ is added  to a concept ontology category, it adds  $\mathsf{NNF}(X)$ as well. For the sake of simplicity, we will not mention explicitly these arrows in the rules described in Table~\ref{tab:satRules}. 
\end{enumerate}

\begin{table}
    \centering
    \begin{tabular}{|p{12cm}|}
    \hline
\textbf{[$\bot$-rule]} 
 If $X\longrightarrow\mathsf{NNF}(\neg C)$, $X\longrightarrow C$ are arrows of $\mathscr{C}_c$,  and $X\longrightarrow \bot$ is not an arrow of $\mathscr{C}_c$, then we add an object $C \sqcap \mathsf{NNF}(\neg C)$ and   an arrow   $X\longrightarrow\bot$ to $\mathscr{C}_c$.\\
 \hline
\textbf{[$\bot_m$-rule]}  If there is an arrow   $E\sqcap F\longrightarrow \bot$ in $\mathscr{C}_c$, and  there is no arrow  $E \longrightarrow  \mathsf{NNF}(\neg F)$ or $F \longrightarrow  \mathsf{NNF}(\neg E)$ in  $\mathscr{C}_c$, then we add  arrows $E \longrightarrow  \mathsf{NNF}(\neg F)$ and $F \longrightarrow  \mathsf{NNF}(\neg E)$    to $\mathscr{C}_c$.
\\
 \hline
\textbf{[$\top$-rule]}
 If $\mathsf{NNF}(\neg C)\longrightarrow X$, $C\longrightarrow X$ are arrows of $\mathscr{C}_c$,  and $\top \longrightarrow X$ is not an arrow of $\mathscr{C}_c$, then we add an object $C \sqcup \mathsf{NNF}(\neg C)$ and an arrow  $\top \longrightarrow X$ to $\mathscr{C}_c$.
\\
 \hline
\textbf{[$\top_m$-rule]} If there is an arrow   $\top\longrightarrow E\sqcup F$ in $\mathscr{C}_c$, and  there is no arrow  $\mathsf{NNF}(\neg F) \longrightarrow  E$ or $\mathsf{NNF}(\neg E) \longrightarrow  F$ in  $\mathscr{C}_c$, then we add  arrows $ \mathsf{NNF}(\neg F) \longrightarrow E$ and $ \mathsf{NNF}(\neg E) \longrightarrow F$    to $\mathscr{C}_c$.
\\
\hline
\textbf{[$\neg$-rule]} If there is an arrow  $C \longrightarrow D$ in $\mathscr{C}_c$ and there is no arrow $\mathsf{NNF}(\neg D)  \longrightarrow \mathsf{NNF}(\neg C)$   in $\mathscr{C}_c$, then we add objects $\mathsf{NNF}(\neg C), \mathsf{NNF}(\neg D)$ and an arrow  $\mathsf{NNF}(\neg D)  \longrightarrow \mathsf{NNF}(\neg C)$  to  $\mathscr{C}_c$.\\
\hline
\textbf{[$\sqcap$-rule]} If there is an object $E \sqcap F$ of $\mathscr{C}_c$ and there are not arrows $E \sqcap F\longrightarrow E$ and $E \sqcap F\longrightarrow F$, then we add objects $E,F$ and arrows $E \sqcap F\longrightarrow E$, $E \sqcap F\longrightarrow F$ to  $\mathscr{C}_c$.\\
\hline
\textbf{[$\sqcap_m$-rule]} If there is an object $E\sqcap F$ and arrows  $X\longrightarrow E$,  $X\longrightarrow F$ in $\mathscr{C}_c$, and there is no arrow $X \longrightarrow E \sqcap F$ in $\mathscr{C}_c$, then we add  an arrow $X \longrightarrow E \sqcap F$ to $\mathscr{C}_c$.\\
\hline
\textbf{[$\sqcup$-rule]} If there is an object $E \sqcup F$ in $\mathscr{C}_c$ and there are not two arrows $E\longrightarrow E \sqcup F$ and $F\longrightarrow E \sqcup F$, then we add objects $E,F$ and  arrows $E\longrightarrow E \sqcup F$,  $F\longrightarrow E \sqcup F$ to  $\mathscr{C}_c$.\\
\hline
\textbf{[$\sqcup_m$-rule]} If there is an object $E\sqcup F$ and  arrows  $E\longrightarrow X$, $F\longrightarrow X$ in $\mathscr{C}_c$, and there is no arrow  $E \sqcup F \longrightarrow X$  in $\mathscr{C}_c$, then we add   an arrow $E \sqcup F \longrightarrow X$ to $\mathscr{C}_c$.\\
\hline
\textbf{[$\sqcup_\mathsf{dis}$-rule]}  If there are two objects $C \sqcap   (D\sqcup E)$ and $W$ in $\mathscr{C}_c$, and there is no arrow  $C \sqcap   (D\sqcup E)\longrightarrow W$ in $\mathscr{C}_c$, and  $\mathsf{check}(\mathscr{C}_c,C \sqcap   (D\sqcup E),W)$ returns ``$\mathsf{true}$", then we add an arrow  $C \sqcap   (D\sqcup E)\longrightarrow W$ to $\mathscr{C}_c$.\\
\hline
\textbf{[$\exists$-rule]} If there is an object $\exists R.D$ of $\mathscr{C}_c$, there is no object $R_{(\exists R.D)}$ of $\mathscr{C}_r$, then we add objects $R_{(\exists R.D)}, R$,  an arrow $R_{(\exists R.D)}\longrightarrow R$ to $\mathscr{C}_r$,  and add to  $\mathscr{C}_c$ objects $D$,  $\mathsf{cod}(R_{(\exists R.D)})$, $\mathsf{dom}(R_{(\exists R.D)})$, arrows $\mathsf{cod}(R_{(\exists R.D)})\longrightarrow D$,
$\mathsf{dom}(R_{(\exists R.D)})\longleftrightarrows \exists R.D$.\\
\hline
\textbf{[$\exists_m$-rule]} If there is an object $\exists R.D$ of $\mathscr{C}_c$ and  an object $R'$ of $\mathscr{C}_r$  such that $R' \longrightarrow R$ and $\mathsf{cod}(R') \longrightarrow D$, and there is no arrow $\mathsf{dom}(R')\longrightarrow\mathsf{dom}(R_{(\exists R.D)})$,  then we add to $\mathscr{C}_c$  an arrow   $\mathsf{dom}(R') \longrightarrow \mathsf{dom}(R_{(\exists R.D)})$.\\
\hline
\textbf{[$\forall$-rule]} If  there is an object  $\forall R.C$  in $\mathscr{C}_c$ and there are no arrows $\forall R.C \longleftrightarrows \mathsf{NNF}(\neg \exists R.\neg C)$ in $\mathscr{C}_c$, then we add objects $\exists R.\mathsf{NNF}(\neg C)$, $\mathsf{NNF}(\neg \exists R.\neg C)$ and  arrows   $\forall R.C \longleftrightarrows \mathsf{NNF}(\neg \exists R.\neg C)$ to $\mathscr{C}_c$. \\
\hline
\textbf{[$\mathsf{trans}$-rule]} If there are objects $X,Y,Z$ and arrows $X\longrightarrow Y$,  $Y\longrightarrow Z$ of  $\mathscr{C}_c$, there is no arrow  $X\longrightarrow Z$ of  $\mathscr{C}_c$, then we add an arrow  $X\longrightarrow Z$ to $\mathscr{C}_c$.\\
\hline
\textbf{[$\mathsf{dc}$-rule]} 
\noindent \textbf{(a)} 
If $R$ is an object of $\mathscr{C}_r$,
and  $X\longrightarrow\bot$  is  an arrow of $\mathscr{C}_c$ for some $X\in \{\mathsf{dom}(R), \mathsf{cod}(R)\}$, and $Y\longrightarrow\bot$  is not  an arrow of $\mathscr{C}_c$ for some $Y\in \{\mathsf{dom}(R), \mathsf{cod}(R)\}$,  then we add 
$Y\longrightarrow\bot$ to $\mathscr{C}_c$.\\
\noindent\textbf{(b)} If $R\longrightarrow R'$ is an arrow  of $\mathscr{C}_r$ and ${X\longrightarrow Y}$ is not an arrow of $\mathscr{C}_c$, for some ${(X, Y)\in \left\{\left(\mathsf{dom}(R),
\mathsf{dom}(R')\right), \left(\mathsf{cod}(R), \mathsf{cod}(R')\right)\right\}}$,  then we add to  $\mathscr{C}_c$ an arrow $X\longrightarrow Y$ to $\mathscr{C}_c$.\\
\hline
 \end{tabular}
    \caption{Saturation rules}
    \label{tab:satRules}
\end{table}

\begin{algorithm}[!h]
\SetKwInOut{Input}{Input}\SetKwInOut{Output}{Output}
\Input{$\langle C_0,  \mathcal{O}\rangle$ where $C_0$ is an $\mathcal{ALC}_{\overline{\forall}}$ concept and $\mathcal{O}$  an ontology} 
 
\Output{$\mathsf{true}$ or $\mathsf{false}$}
\BlankLine
Initialize two categories $\mathscr{C}_c$ and $\mathscr{C}_r$\;

Add  an object $C_0$  to $\mathscr{C}_c$\label{algo:addC}\;
\ForEach{$E\sqsubseteq F$ of $\mathcal{O}$\label{algo:l0}}{
 Add  objects $E, F$ and an arrow  $E \longrightarrow  F$ to   $\mathscr{C}_c$\;
}\label{algo:end-l0}

\While{there is a saturation rule $\mathbf{r}$ that is applicable to $\mathscr{C}_c$ and $\mathscr{C}_r$\label{algo:l1} }{
   Apply $\mathbf{r}$ to $\mathscr{C}_c$ and $\mathscr{C}_r$\;
   
}
\If{there is an arrow $C_0\longrightarrow \bot$ in $\mathscr{C}_c$\label{algo:c1}}{
       \textbf{return}$~\mathsf{false}$\;
   }\Else{
     \textbf{return}$~\mathsf{true}$\; 
   }
\caption{$\mathsf{isSatisfiable}\langle C_0,\mathcal{O} \rangle$ }\label{algo:consistency}
\end{algorithm}

\begin{algorithm}[!h]
\SetKwInOut{Input}{Input}\SetKwInOut{Output}{Output}
\Input{$\langle \mathscr{C}_c, (C_1\sqcup D_1)\sqcap \cdots \sqcap (C_n\sqcup D_n),W\rangle$   where $\mathscr{C}_c$  is an category and $(C_1\sqcup D_1)\sqcap \cdots \sqcap (C_n\sqcup D_n)$ and $W$ are  objects  of  $\mathscr{C}_c$} 
 
\Output{$\mathsf{true}$ or $\mathsf{false}$}
\BlankLine

\ForEach{$(X_1,\cdots,X_n)$ with $X_i\in \{C_i,D_i\}$\label{algo:check:l1}}{
 \If{there is no object $(X_{i_1}\sqcap \cdots \sqcap X_{i_k})$ in  $\mathscr{C}_c$ with $1\leq i_j\leq n$ such that there is an arrow $(X_{i_1}\sqcap \cdots \sqcap X_{i_k})\longrightarrow W$ in $\mathscr{C}_c$}{
    \textbf{return~false} \;
 }
} 
\textbf{return~true} \;
\caption{$\mathsf{check}\langle \mathscr{C}_c,(C_1\sqcup D_1)\sqcap \cdots \sqcap (C_n\sqcup D_n), W\rangle$ }\label{algo:check}
\end{algorithm}

\begin{lemma}[Complexity]\label{lem:complexity} Let $C$ be an $\mathcal{ALC}_{\overline{\forall}}$ concept and   $\mathcal{O}$ an ontology. Algorithm~\ref{algo:consistency}   runs in   polynomial space for the input $\tuple{C,\mathcal{O}}$.
\end{lemma}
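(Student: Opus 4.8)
The plan is to establish two facts and combine them: first, that at every moment during the execution of Algorithm~\ref{algo:consistency} the pair of categories $\langle\mathscr{C}_c,\mathscr{C}_r\rangle$ occupies space polynomial in $\norm{\tuple{C_0,\mathcal{O}}}$; and second, that deciding whether a saturation rule of Table~\ref{tab:satRules} is applicable and applying it — including the nested call to $\mathsf{check}$ (Algorithm~\ref{algo:check}) performed by $[\sqcup_\mathsf{dis}$-rule] — needs only polynomial working space on top of the current category. Since the final test in line~\ref{algo:c1} is a single lookup in $\mathsf{Hom}(\mathscr{C}_c)$, these two facts together yield the claim.

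The core is the first fact, which reduces to bounding the collection $\mathsf{Ob}(\mathscr{C}_c)\cup\mathsf{Ob}(\mathscr{C}_r)$ that the saturation loop can generate. I would prove, by induction on the sequence of rule applications starting from the initialization of lines~\ref{algo:addC}--\ref{algo:end-l0}, the invariant that every concept object of $\mathscr{C}_c$ is of one of three kinds: (i) a member of $\mathsf{sub}\langle C_0,\mathcal{O}\rangle$ or its $\mathsf{NNF}$-negation — a set of polynomial cardinality, since each member is essentially a substring of $\tuple{C_0,\mathcal{O}}$ (cf. the remark following Definition~\ref{def:alc}) and $\mathsf{NNF}$-negation preserves size; (ii) a synthetic object $\mathsf{dom}(R_{(\exists R.D)})$ or $\mathsf{cod}(R_{(\exists R.D)})$, or its $\mathsf{NNF}$-negation, with $\exists R.D\in\mathsf{sub}\langle C_0,\mathcal{O}\rangle$; or (iii) a conjunction $C\sqcap\mathsf{NNF}(\neg C)$ or a disjunction $C\sqcup\mathsf{NNF}(\neg C)$ with $C$ taken from kinds (i)--(ii); and that every role object is one of the polynomially many $R$, $R_\top$, $R_\bot$, $R_{(\exists R.D)}$ with $\exists R.D\in\mathsf{sub}\langle C_0,\mathcal{O}\rangle$. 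The inductive step checks each rule of Table~\ref{tab:satRules} in turn, using the De~Morgan and double-negation equivalences listed above Definition~\ref{def:subconcepts} (which, since every object is kept in NNF, render the ``add $\mathsf{NNF}(X)$ as well'' convention harmless and keep negations inside kinds (i)--(iii)).

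Two points carry the weight of this induction, and the second is the main obstacle. The first: $[\exists$-rule] is the only rule that introduces role objects of the form $R_{(\exists R.D)}$, and it fires only on existential restriction objects $\exists R.D$ already present, which by the invariant lie in kind (i); here the restriction to $\mathcal{ALC}_{\overline{\forall}}$ is essential, because $[\forall$-rule] merely records $\forall R.C\longleftrightarrows\mathsf{NNF}(\neg\exists R.\neg C)$ (Property~(\ref{all-arrow1})) and, Property~(\ref{all-arrow2}) having been dropped, no rule ever pushes a concept into the codomain of a role object — so the synthetic objects never gain fresh role nesting and stay indexed by the polynomially many existential restrictions of $\mathsf{sub}\langle C_0,\mathcal{O}\rangle$. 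The second: one must verify that $[\bot$-rule] and $[\top$-rule] never re-feed one of their own compound outputs of kind (iii) back into the position of $C$, which would break the invariant by producing a ``kind (iii) of a kind (iii)'' and, iterated, an unbounded chain of ever-larger objects. Such a re-feed would require the $\mathsf{NNF}$-negation of an already-compound object to be present as an object of $\mathscr{C}_c$; a short well-foundedness argument — examining the only ways a conjunction or disjunction can be added (initialization, $[\bot$-rule], $[\top$-rule], and the equivalences above Definition~\ref{def:subconcepts}) — shows this never happens, so kind (iii) stays ``one level deep''.

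With the invariant established, $|\mathsf{Ob}(\mathscr{C}_c)|$, $|\mathsf{Ob}(\mathscr{C}_r)|$ are polynomial and each object has polynomial size, whence $|\mathsf{Hom}(\mathscr{C}_c)|\le|\mathsf{Ob}(\mathscr{C}_c)|^2$ and $|\mathsf{Hom}(\mathscr{C}_r)|\le|\mathsf{Ob}(\mathscr{C}_r)|^2$ are polynomial with arrows of polynomial size, so the representation of $\langle\mathscr{C}_c,\mathscr{C}_r\rangle$ fits in polynomial space; and since the loop only adds objects and arrows from this fixed polynomial pool and never removes any, it terminates. For the second fact, testing the guard of any rule other than $[\sqcup_\mathsf{dis}$-rule] and performing its update is a polynomial-time scan of a polynomial-size structure, hence polynomial space; for $[\sqcup_\mathsf{dis}$-rule], the call $\mathsf{check}(\mathscr{C}_c,(C_1\sqcup D_1)\sqcap\cdots\sqcap(C_n\sqcup D_n),W)$ enumerates the $2^n$ tuples $(X_1,\dots,X_n)$ with $X_i\in\{C_i,D_i\}$ one at a time, performing for each a polynomial-space lookup for a conjunction object $X_{i_1}\sqcap\cdots\sqcap X_{i_k}$ of $\mathscr{C}_c$ with an arrow to $W$; reusing the same workspace across iterations, $\mathsf{check}$ runs in polynomial space — it is the source of exponential running time, but not of exponential space. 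Combining the two facts, Algorithm~\ref{algo:consistency} runs in polynomial space.
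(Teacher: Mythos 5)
Your proposal is correct and follows essentially the same route as the paper's proof: bound the objects ever created by tying each rule's additions back to $\mathsf{sub}\langle C_0,\mathcal{O}\rangle$ (plus the finitely many synthetic $\mathsf{dom}/\mathsf{cod}$ and $C\sqcap\mathsf{NNF}(\neg C)$-style objects), deduce a polynomial bound on $\mathsf{Hom}$, and observe that Algorithm~\ref{algo:check} enumerates its exponentially many tuples one at a time in reusable polynomial workspace. Your packaging as an explicit inductive invariant over the three kinds of objects, together with the remark that \textbf{[$\bot$-rule]}/\textbf{[$\top$-rule]} must not re-feed their own compound outputs, is a somewhat more careful rendering of the paper's per-rule count, but it is the same argument.
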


\begin{proof}  Since  there are at most two arrows between two objects, it 
suffices to determine the number of different objects added to $\mathscr{C}_c$
and $\mathscr{C}_r$. For this purpose, we analyse the behavior of each rule in
Table~\ref{tab:satRules} and determine the number of objects added by each 
such a rule. 

Let $n=\norm{\tuple{C_0,\mathcal{O}}}$ be the size of input,
\emph{i.e.} the number of bytes. By Definition~\ref{def:subconcepts}, we write
$\mathsf{sub}\tuple{C_0, \mathcal{O}}$, the set of sub-concepts. For 
Definition~\ref{def:subconcepts}.1, 3 and 4 they are sub-strings of 
$\tuple{C, \mathcal{O}}$, not all of sub-strings belongs to it, so we have 
at most $\ell\leq n(n+1)/2$ objects created from those three 
properties. From Definition~\ref{def:subconcepts}.2, we simply double
the number of object obtained from the other three point - note that 
$\mathsf{NNF}(\neg\neg C)$ \emph{is} $C$ - so in total $\norm{\mathsf{sub}
\tuple{C_0,\mathcal{O}}} = 2\ell = k \leq O(n^2)$. It should be noted that
every time we add a object to $\Catc$, we also add its negation, we won't 
mention it every time an object is added but it is done. Hence why we 
reason on $k$ instead of $\ell$ even when we index added objects on 
a object $C$ of $\mathsf{sub}\tuple{C_0,\mathcal{O}}$.

\begin{enumerate}
    \item \textbf{[Initialisation]}(line 3-5). It is the first loop of the 
    algorithm and corresponds to Definition~\ref{def:subconcepts}.1, 
    in the sense the only added objects by the algorithm at this stage are
    already part of $\mathsf{sub}\tuple{C_0,\mathcal{O}}$, \emph{i.e.} it 
    cannot add more object 
    than $\norm{\mathsf{sub}\tuple{C_0, \mathcal{O}}} = k\leq O(n^2)$ to
    $\Catc$.
    
    \item \textbf{[$\sqcap$-rule]}, \textbf{[$\sqcup$-rule]}. These two rules
    correspond to Definition~\ref{def:subconcepts}.3, objects added 
    to $\Catc$ are already part of $\mathsf{sub}\tuple{C_0,\mathcal{O}}$, 
    meaning there are $O(n^2)$ of them.
    
    \item \textbf{[$\exists$-rule]}. Objects $D$ added to $\Catc$ by this rule
    are already part of $\mathsf{sub}\tuple{C_0, \mathcal{O}}$ according to 
    Definition~\ref{def:subconcepts}.4, following the same reasoning as for the 
    previous rules, we have $O(n^2)$ of them. Remaining objects to be
    added to $\Catc$, namely $\mathsf{cod}(R_{(\exists R.D)})$ and 
    $\mathsf{dom}(R_{(\exists R.D)})$, are \emph{not} part of 
    $\mathsf{sub}\tuple{C_0, \mathcal{O}}$. However, they are only added once 
    per each occurrence of object of the form $\exists R.D$ in $\mathsf{sub}
    \tuple{C_0, \mathcal{O}}$, \emph{i.e.} the number of  object added is bounded by 
    $2k$. In total, this rule add a number of to $\Catc$ object smaller 
    than $3k\leq O(n^2)$. We leave the number of objects added to 
    $\mathscr{C}_R$ for later.

    \item \textbf{[$\forall$-rule]}. From 
    Definition~\ref{def:subconcepts}.4, $\mathsf{NNF}(\neg\forall R.D)
    \longleftrightarrows\exists R.\mathsf{NNF}(\neg D)$ is already part of 
    $\mathsf{sub}\tuple{C_0, \mathcal{O}}$. Only $\exists R.\neg D$ isn't 
    part of Definition~\ref{def:subconcepts}, however, it is linearly 
    bounded by the number of object of the form $\forall R.D$ which is
    smaller than $k$, hence this rule add at most $2k\leq O(n^2)$ objects 
    to $\Catc$ in total.

    \item \textbf{[$\bot$-rule]}, \textbf{[$\top$-rule]}. These rules
    only add one object of the type $C\sqcap\mathsf{NNF}(\neg C)$ or 
    $C\sqcup\mathsf{NNF}(\neg C)$ to $\Catc$ per object $C$, respectively. 
    Assume that we have applied the rule once on a single $X$ and added 
    $C\sqcap\mathsf{NNF}(\neg C)$ or $C\sqcup\mathsf{NNF}(\neg C)$ to $\Catc$. 
    If there is an other $Y$ such that $Y\longrightarrow C$ and 
    $Y\longrightarrow\mathsf{NNF}(\neg C)$ or $C\longrightarrow Y$ and 
    $\mathsf{NNF}(\neg C)\longrightarrow Y$, we can apply the rules nonetheless, 
    however, it won't add another occurrence of $C\sqcap\mathsf{NNF}(\neg C)$ or 
    $C\sqcup\mathsf{NNF}(\neg C)$ since we are dealing with sets so an 
    object can appear one and only one time - and there's already an 
    occurrence of it since we have applied the rule before. So the number of 
    object added by these rules is bounded by $k\leq O(n^2)$.
    
    \item \textbf{[$\neg$-rule]}. If we have an arrow $C\longrightarrow D$
    in $\Catc$, $C$ and $D$ are already in $\Catc$, consquently $\mathsf{NNF}
    (\neg C)$ and $\mathsf{NNF}(\neg D)$ as well. Since we are dealing with 
    sets, we are not adding any new objects.
    
    \item \textbf{[$\mathsf{trans}$-rule]}, \textbf{[$\sqcup_m$-rule]}, 
    \textbf{[$\sqcap_m$-rule]}, \textbf{[$\sqcup_{dis}$-rule]},
    \textbf{[$\bot_m$-rule]}, \textbf{[$\top_m$-rule]},\\
    \textbf{[$\exists_m$-rule]}, \textbf{[$\mathsf{dc}$-rule]}. These rules only 
    deal with arrows and no objects are added - we can ignore them.
    \item Algorithm~\ref{algo:check} ($\mathsf{check}$). This algorithm can be  implemented such that it runs in  polynomial space. For example, we can consider each vector $(X_1,\cdots,X_n)$  as a binary number $(b(X_1),\cdots,b(X_n))$ where $b(X_i)=1$ iff $X_i=C_i$ ($X_i\in \{C_i,D_i\}$). In this case, the loop from Line~\ref{algo:check:l1} can take binary numbers from $0$ to $2^{n}-1$, from which it can determine the vector $(X_1,\cdots,X_n)$.  
\end{enumerate}

For \textbf{[$\exists$-rule]} the number of objects $R_{(\exists R.D)}$
added to $\mathscr{C}_r$ is bounded by $\ell$. Objects $R$ are only added 
to $\mathscr{C}_r$ once per role name appearing in $\mathcal{T}$, it is 
also bounded by $\ell$. In summary, each rule adds to $\Catc$ a number of
objects bounded by $O(n^2)$, the number of objects of $\mathscr{C}_r$ is 
also bounded by 
$O(n^2)$, thus Algorithm~\ref{algo:consistency} runs in 
polynomial space.$\hfill\square$
\end{proof}

\begin{lemma}[soundness and completeness]\label{lem:sndn-comp} Let $C_0$ be an $\mathcal{ALC}_{\overline{\forall}}$ concept and   $\mathcal{O}$   an  ontology. Algorithm~\ref{algo:consistency}   returns ``$\mathsf{false}$" with the input $\tuple{C_0,\mathcal{O}}$ iff $C_0$ is unsatisfiable with respect to $\mathcal{O}$.
\end{lemma}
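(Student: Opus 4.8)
The plan is to prove the two implications of the ``iff'' separately, treating the ``$\mathsf{false}$''$\Rightarrow$unsatisfiable direction as a (routine but lengthy) soundness check and the unsatisfiable$\Rightarrow$``$\mathsf{false}$'' direction as the substantive completeness argument; termination of the saturation loop, needed for ``returns $X$'' to make sense, is already covered by the object-counting in the proof of Lemma~\ref{lem:complexity}.

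\textbf{Soundness (``$\mathsf{false}$'' $\Rightarrow$ unsatisfiable).} First I would verify that the pair $\langle\mathscr{C}_c,\mathscr{C}_r\rangle$ output by Algorithm~\ref{algo:consistency} is a genuine ontology category $\mathscr{C}_c\langle C_0,\mathcal{O}\rangle$ of $\mathcal{ALC}_{\overline{\forall}}$. This decomposes into: (i) it is a syntax category in the sense of Definition~\ref{def:syntax-cat} --- initial/terminal objects and identities come from the standing conventions preceding Table~\ref{tab:satRules}, composition is forced by \textbf{[$\mathsf{trans}$-rule]}, and the $\mathsf{dom}$/$\mathsf{cod}$ functor conditions by \textbf{[$\mathsf{dc}$-rule]} together with \textbf{[$\exists$-rule]}; (ii) it contains $C_0$ and the arrow $E\longrightarrow F$ for every axiom $E\sqsubseteq F\in\mathcal{O}$ (lines~\ref{algo:addC}--\ref{algo:end-l0}); and (iii) every saturation rule introduces only arrows required by, or provable consequences of, Definitions~\ref{def:disj}--\ref{def:forall} \emph{restricted to} $\mathcal{ALC}_{\overline{\forall}}$, i.e.\ excluding Property~(\ref{all-arrow2}). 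Point (iii) is a rule-by-rule inspection: \textbf{[$\sqcap$-/$\sqcup$-/$\exists$-/$\forall$-rule]} reproduce the generating arrows of Definitions~\ref{def:disj}, \ref{def:conj}, \ref{def:exist} and~(\ref{all-arrow1}); \textbf{[$\sqcap_m$-/$\sqcup_m$-/$\exists_m$-rule]} reproduce the (co)universal clauses~(\ref{conj1}), (\ref{disj02}), (\ref{ex-arrow03}); \textbf{[$\bot$-/$\top$-/$\neg$-/$\bot_m$-/$\top_m$-rule]} are instances of~(\ref{neg-bot}), (\ref{neg-top}) and of the consequences~(\ref{neg-dual}), (\ref{neg-disjoint}), (\ref{neg-conjoint}) of Lemma~\ref{lem:neg-prop}; and \textbf{[$\sqcup_{\mathsf{dis}}$-rule]} realizes (through $\mathsf{check}$) the arrow obtained from the distributivity clause~(\ref{conj2}) combined with~(\ref{disj01})--(\ref{disj02}), Lemma~\ref{lem:comas} and Lemma~\ref{lem:trans}, namely: if for every choice of disjuncts some sub-conjunction already maps to $W$, then $\bigsqcap_i(C_i\sqcup D_i)$ maps to $W$. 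Hence every arrow of the output is derivable, and when the algorithm returns ``$\mathsf{false}$'' the output contains $C_0\longrightarrow\bot$, so by Definition~\ref{def:concept-sat} $C_0$ is category-theoretically unsatisfiable.

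\textbf{Completeness (unsatisfiable $\Rightarrow$ ``$\mathsf{false}$'').} Here I would show that the algorithm's output is \emph{canonical}: it already contains every arrow between its own objects that can be derived in \emph{any} ontology category over $\langle C_0,\mathcal{O}\rangle$. Assume $C_0$ is unsatisfiable, so some ontology category $\mathscr{K}=\mathscr{C}_c\langle C_0,\mathcal{O}\rangle$ contains $C_0\longrightarrow\bot$; since this is not an axiom arrow, it arises from a finite derivation using only the clauses of Definitions~\ref{def:disj}--\ref{def:exist}, (\ref{all-arrow1}) and composition. Saturate the union of the object collections of $\mathscr{K}$ and of the algorithm's output to get a category $\mathscr{C}_c^{\ast}$; it still contains $C_0\longrightarrow\bot$. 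The heart of the argument is an \emph{elimination lemma}: for objects $X,Y$ occurring already in the algorithm's output, $X\longrightarrow Y$ in $\mathscr{C}_c^{\ast}$ implies $X\longrightarrow Y$ in the output. I would prove this by induction on the structural complexity of the auxiliary objects of $\mathscr{C}_c^{\ast}$ that the algorithm never creates: whenever a derived arrow between ``old'' objects is routed through such an object, peel off its outermost constructor and use the (co)universal property in the corresponding definition, together with Lemmas~\ref{lem:neg-prop}, \ref{lem:exist-prop}, \ref{lem:forall-prop}, to re-route the arrow through objects of strictly smaller complexity, ultimately through $\mathsf{sub}\langle C_0,\mathcal{O}\rangle$ and the finitely many auxiliary objects ($R_{(\exists R.D)}$ and its $\mathsf{dom}/\mathsf{cod}$, the objects $C\sqcap\mathsf{NNF}(\neg C)$, $C\sqcup\mathsf{NNF}(\neg C)$, $\exists R.\mathsf{NNF}(\neg C)$, and the sub-conjunctions probed by $\mathsf{check}$) that the algorithm does introduce. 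Applying the elimination lemma with $X=C_0$, $Y=\bot$ gives $C_0\longrightarrow\bot$ in the output, so the test on line~\ref{algo:c1} succeeds and the algorithm returns ``$\mathsf{false}$''.

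\textbf{Main obstacle.} The delicate part is exactly the elimination lemma: an arbitrary ontology category can contain conjunction, disjunction, negation, existential or nested universal objects built over subexpressions outside $\mathsf{sub}\langle C_0,\mathcal{O}\rangle$, and one must rule out that any of them is essential for deriving $C_0\longrightarrow\bot$. The induction needs a carefully chosen measure, since a fresh object $P\sqcap Q$ can acquire an arrow to $\bot$ non-structurally (e.g.\ via $P\longrightarrow\mathsf{NNF}(\neg Q)$), so the measure must decrease simultaneously on $P$, $Q$ and their negations; the case of a fresh $\exists R.D$ also requires tracking the fresh role and its $\mathsf{dom}/\mathsf{cod}$, and it is precisely the absence of Property~(\ref{all-arrow2}) that prevents such an object from interacting with a $\forall R.\cdot$ object and keeps the elimination finite. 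A secondary subtlety is to confirm that $\mathsf{check}$ is complete as well as sound for the distributed arrow --- that if $\bigsqcap_i(C_i\sqcup D_i)\longrightarrow W$ is derivable anywhere then in the saturated category every choice of disjuncts already admits a sub-conjunction arrow to $W$ --- so that \textbf{[$\sqcup_{\mathsf{dis}}$-rule]} neither under- nor over-generates.
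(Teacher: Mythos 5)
Your decomposition matches the paper's: soundness by checking that the saturated output is itself an ontology category containing $C_0\longrightarrow\bot$, so that Definition~\ref{def:concept-sat} applies, and completeness by showing that any arrow derivable between the algorithm's own objects in a witnessing ontology category is already present in the output. Where you diverge is in how the completeness core is organized. The paper fixes the \emph{smallest} ontology category $\mathscr{C}_c\langle C_0,\mathcal{O}\rangle$, defines an identity-like map $\pi$ from the algorithm's objects into it, and threads the two invariants (\ref{claim:1})--(\ref{claim:2}) through every rule application (a forward simulation), isolating the reflection direction of (\ref{claim:2}) in the auxiliary claim $(\dagger)$, which re-routes derivations around the intermediate conjunctions produced by distributivity. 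Your post-hoc elimination lemma, proved by induction on the complexity of the auxiliary objects of the saturated union, is the same mathematical content packaged as a normalization argument rather than a simulation; it is arguably more robust, since it does not presuppose minimality of the witnessing category, which is the one point the paper's $(\dagger)$ quietly relies on. Two smaller remarks. On the soundness side, the paper must extend the raw output to a category $\mathscr{C}'_c$ before Property~(\ref{conj2}) literally holds, because \textbf{[$\sqcup_{\mathsf{dis}}$-rule]} never creates the object $(C\sqcap D)\sqcup(C\sqcap E)$; your claim that the output ``is a genuine ontology category'' is slightly too strong as written, though your rule-by-rule inspection would surface the need for this extension. Finally, your concern about the completeness of $\mathsf{check}$ is exactly what the \textbf{[$\sqcup_{\mathsf{dis}}$-rule]} case of the paper's $\pi$-invariant establishes, so you have correctly located the load-bearing steps.
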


\begin{proof} ``$\Longleftarrow$". Let $\mathscr{C}_c$ be a category built by Algorithm~\ref{algo:consistency}  that returns ``$\mathsf{false}$". This implies that $\mathscr{C}_c$ has an arrow $C_0\longrightarrow \bot$. First, we  show that $\mathscr{C}_c$  satisfies all properties in Definitions~\ref{def:syntax-cat}-\ref{def:exist}  and Property~(\ref{all-arrow1}) in Definition~\ref{def:forall}.

\begin{enumerate}
    \item  Definition~\ref{def:syntax-cat}. When a saturating rule in Table~\ref{tab:satRules} adds a new  concept (resp. role) object, it adds also an identity arrow and arrows between the object and  $\bot$ (resp. $R_\bot$) and $\top$ (resp. $R_\top$). Thus, Properties~\ref{def:syntax-cat:1} and \ref{def:syntax-cat:2} in Definition~\ref{def:syntax-cat} hold in $\mathscr{C}_c$. Moreover, when Algorithm~\ref{algo:consistency} returns ``$\mathsf{true}$",  the loop from Line~\ref{algo:l1} terminates, and thus no rule is applicable. That means \textbf{[$\mathsf{trans}$-rule]} is not applicable. This implies that Property~\ref{def:syntax-cat:3}   in Definition~\ref{def:syntax-cat} holds in $\mathscr{C}_c$. Analogously,  \textbf{[$\mathsf{dc}$-rule]} is not applicable when Algorithm~\ref{algo:consistency} returns ``$\mathsf{true}$". Hence, Property~\ref{def:syntax-cat:4}   in Definition~\ref{def:syntax-cat} is ensured in $\mathscr{C}_c$,  and thus   in $\mathscr{C}'_c$. 
    
    \item  Definition~\ref{def:onto-cat}. By the assumption, for each object $X$ added to  $\mathscr{C}_c$, there is also $\mathsf{NNF}(\neg X)$ in $\mathscr{C}_c$. In this case,  non applicability of \textbf{[$\bot$-rule]} and \textbf{[$\top$-rule]}  guarantees that there are arrows $\mathsf{NNF}(\neg X)\sqcap X\longrightarrow \bot$, $\top \longrightarrow \mathsf{NNF}(\neg X)\sqcup X$ in   $\mathscr{C}_c$. Let $C\sqcap X\longrightarrow \bot$ (resp. $\top \longrightarrow C\sqcup X$) be an arrow in $\mathscr{C}_c$. Non applicability of \textbf{[$\bot_m$-rule]} and \textbf{[$\top_m$-rule]} implies that there are arrows $C\longrightarrow \mathsf{NNF}(\neg X)$, $X\longrightarrow \mathsf{NNF}(\neg C)$ (resp. $\mathsf{NNF}(\neg X)\longrightarrow C $, $\mathsf{NNF}(\neg C)\longrightarrow X$) in $\mathscr{C}_c$.
    
    \item  Definition~\ref{def:disj}.  Let $E, F$ and $E\sqcup F$ be objects of 
    $\mathscr{C}_c$. Thanks to \textbf{[$\sqcup$-rule]},  $\mathscr{C}_c$ has
    arrows $E\longrightarrow E\sqcup F$ and $F\longrightarrow E\sqcup F$ which  
    ensure Property~(\ref{disj01}) in  Definition~\ref{def:disj}.  Let  $E\longrightarrow X$ and $F\longrightarrow X$ be  arrows in $\mathscr{C}_c$, and $E\sqcup F$ be an object of $\mathscr{C}_c$.   Non-applicability of  \textbf{[$\sqcup_m$-rule]} ensures that there is  an arrow $E\sqcup F\longrightarrow X$ in $\mathscr{C}_c$. Hence,  Property~(\ref{disj02}) in  Definition~\ref{def:disj} holds. 
     
    \item     Definition~\ref{def:conj}.  Let $E, F$ and $E\sqcap F$ be objects of 
    $\mathscr{C}_c$. Thanks to \textbf{[$\sqcap$-rule]}, we  have 
    arrows $E\sqcap F\longrightarrow E$ and $E\sqcap F\longrightarrow F$ which      ensures Property~(\ref{conj01}) in  Definition~\ref{def:conj}. Assume that  $X\longrightarrow E$ and $X\longrightarrow F$ be  arrows in $\mathscr{C}_c$, and $E\sqcap F$ be an object of $\mathscr{C}_c$.   Non-applicability of  \textbf{[$\sqcap_m$-rule]} ensures that there is  an arrow $X\longrightarrow E\sqcap F$ in $\mathscr{C}_c$. Hence,  Property~(\ref{conj1}) in  Definition~\ref{def:conj} holds. Assume that  $C\sqcap (D\sqcup E)$, and $(C\sqcap D) \sqcup (C\sqcap E)$ be  objects in $\mathscr{C}_c$.    Due to  non-applicability of  \textbf{[$\sqcup$-rule]},  $\mathscr{C}_c$ has objects $C\sqcap D$, $C\sqcap E$  and  arrows $C\sqcap D \longrightarrow (C\sqcap D) \sqcup (C\sqcap E)$ and  $C\sqcap E\longrightarrow (C\sqcap D) \sqcup (C\sqcap E)$, i.e $\mathsf{check}(\mathscr{C}_c, C\sqcap (D\sqcup E), (C\sqcap D) \sqcup (C\sqcap E))$ returns ``$\mathsf{true}$". Due to non-applicability of  \textbf{[$\sqcup_{\mathsf{dis}}$-rule]},  $\mathscr{C}_c$ has an arrow $C\sqcap (D\sqcup E) \longrightarrow (C\sqcap D) \sqcup (C\sqcap E)$. Hence,    Property~(\ref{conj2}) in  Definition~\ref{def:conj} holds.

    \item   Definition~\ref{def:neg}.  Let $C,\mathsf{NNF}(\neg C), C\sqcap \mathsf{NNF}(\neg C), C\sqcup \mathsf{NNF}(\neg C), C\sqcap X, C\sqcup X$ be objects of     $\mathscr{C}_c$. Due to non-applicability of    \textbf{[$\bot$-rule]} and \textbf{[$\top$-rule]}, $\mathscr{C}_c$ has arrows $C\sqcap \mathsf{NNF}(\neg C)\longrightarrow \bot$, $\top \longrightarrow  C\sqcup \mathsf{NNF}(\neg C)$. Assume that   $\mathscr{C}_c$ has an arrow $C\sqcap X\longrightarrow \bot$ (resp. $\top \longrightarrow  C\sqcup X$). Due to non-applicability of    \textbf{[$\bot_m$-rule]} and \textbf{[$\top_m$-rule]}, $\mathscr{C}_c$ has arrows $C \longrightarrow \mathsf{NNF}(\neg X)$ (resp. $\mathsf{NNF}(\neg X) \longrightarrow  C$).  Hence,    Properties~(\ref{neg-bot})-(\ref{neg-min}) in  Definition~\ref{def:neg} hold.

    \item   Definition~\ref{def:exist}. Let $\exists R.C$ be an object of $\mathscr{C}_c$. Due to non-applicability of \textbf{[$\exists$-rule]}, it follows that $R_{(\exists R.C)},R$ are  objects, and  $R_{(\exists R.C)}\longrightarrow R$ is an arrow of $\mathscr{C}_r$; and $\mathsf{dom}(R_{(\exists R.C)}),\mathsf{cod}(R_{(\exists R.C)}),C$ are objects, and $\mathsf{dom}(R_{(\exists R.C)})\longleftrightarrows \exists R.C$, $\mathsf{cod}R_{(\exists R.C)})\longrightarrow C$  are arrows of $\mathscr{C}_c$. Let $R'$ be an object of $\mathscr{C}_r$ such that $R'\longrightarrow R$ is an  arrow of $\mathscr{C}_r$, and  $\mathsf{cod}(R')\longrightarrow C$  is an  arrow of $\mathscr{C}_c$. Due to non-applicability of \textbf{[$\exists_m$-rule]}, it follows that $\mathsf{dom}(R')\longrightarrow \mathsf{dom}(R_{(\exists R.C)})$   is an  arrow of $\mathscr{C}_c$. Hence, all properties in  Definition~\ref{def:exist} hold.
    
    \item Property~(\ref{all-arrow1}) in  Definition~\ref{def:forall}. If there is  an object $\forall R.C$ occurring in $\mathscr{C}_c$, it is converted in  $\neg \exists R.\neg C$ due to the assumption that all objects of $\mathscr{C}_c$ are in $\mathsf{NNF}$. Hence, Property~(\ref{all-arrow1}) holds.
\end{enumerate}

We have showed that $\mathscr{C}_c$ satisfy all properties in Definitions~\ref{def:syntax-cat}-\ref{def:exist}  and Property~(\ref{all-arrow1}) in Definition~\ref{def:forall} except that  Property~\ref{conj2} in Definition~\ref{def:conj}. Indeed, if $\mathscr{C}_c$ has an object $C\sqcap (D\sqcup E)$, it may not have an arrow $C\sqcap (D\sqcup E)\ \longrightarrow (C\sqcap D) \sqcup  (D\sqcap E)$ with object $(C\sqcap D) \sqcup  (D\sqcap E)$. 

To complete it, we define a category $\mathscr{C}'_c$ that is   an extension of $\mathscr{C}_c$ by adding objects and arrows  satisfying Property~(\ref{conj2}) in Definition~\ref{def:conj}, i.e if $\mathscr{C}'_c$ has an object $C\sqcap (D\sqcup E)$ then is has also an object $(C\sqcap D) \sqcup  (D\sqcap E)$.  Then we apply saturation rules in Table~\ref{tab:satRules} to $\mathscr{C}'_c$ until no rule is applicable. We can use the same argument above to show that $\mathscr{C}'_c$ satisfy all properties in Definitions~\ref{def:syntax-cat}-\ref{def:exist}  and Property~(\ref{all-arrow1}) in Definition~\ref{def:forall}. Since  $\mathscr{C}_c$ has  an arrow $C_0\longrightarrow \bot$,  $\mathscr{C}'_c$ does.  According to Definition~\ref{def:concept-sat}, $C_0$ is unsatisfiable in $\mathcal{ALC}_{\overline{\forall}}$ with respect to $\mathcal{O}$.

\noindent``$\Longrightarrow$". Assume that $\mathscr{C}_c\tuple{C_0,\mathcal{O}}$, $\mathscr{C}_r\tuple{C_0,\mathcal{O}}$ are \emph{smallest} concept and role ontology categories according to the definitions, i.e they have only necessary objects and arrows which satisfy the dfinition.  By hypothesis, $\mathscr{C}_c\tuple{C_0,\mathcal{O}}$ has an arrow $C_0\longrightarrow \bot$.
We use $\tuple{\mathscr{C}_c,\mathscr{C}_r}$   to denote the concept and role ontology categories built by Algorithm~\ref{algo:consistency}.We define a function $\pi$ from $\mathsf{Ob}(\mathscr{C}_c)\cup \mathsf{Ob}(\mathscr{C}_r)$ to $\mathsf{Ob}(\mathscr{C}_c\tuple{C_0,\mathcal{O}})\cup \mathsf{Ob}(\mathscr{C}_r\tuple{C_0,\mathcal{O}})$. 
Since  $\mathscr{C}_c\tuple{C_0,\mathcal{O}}$  and each $\mathscr{C}_c$ have objects $C_0$, and $E,F$ for all axiom $E\sqsubseteq F$ (Line~\ref{algo:addC} and the loop from Line~\ref{algo:l0}), $\pi$ can be initialized with $\pi(C_0)=C_0$ and  $\pi(E)=E$, $\pi(F)=F$. Thus, the following properties hold for all current $\pi$.
\begin{align}
    &\pi(X)=X \text{ for each object } X  \in \mathsf{Ob}(\mathscr{C}_c) \cup \mathsf{Ob}(\mathscr{C}_r)  \label{claim:1}\\
    &X\longrightarrow Y  \Longleftrightarrow \pi(X)\longrightarrow \pi(Y)\label{claim:2} 
\end{align}

We will extend $\pi$ such that  Properties~(\ref{claim:1}) and  (\ref{claim:2})   are preserved for each application of rules. Assume that these properties hold for the current $\pi$.

\begin{enumerate}
    \item \textbf{[$\bot$-rule]}. Assume that $\mathscr{C}_c$ has arrows $X\longrightarrow\mathsf{NNF}(\neg C)$, $X\longrightarrow C$, but it does not have an arrow $X\longrightarrow \bot$. By the hypothesis on   Properties~(\ref{claim:1}) and (\ref{claim:2}), we have  $\pi(X)=X$, $\pi(C)=C$,  $\pi(\mathsf{NNF}(\neg C))=\mathsf{NNF}(\neg C)$,    $\pi(X)\longrightarrow \pi(\mathsf{NNF}(\neg C))$, $\pi(X)\longrightarrow \pi(C)$. In this case,  \textbf{[$\bot$-rule]} adds to
  $\mathscr{C}_c$    objects $C \sqcap \mathsf{NNF}(\neg C)$,  and  arrows  $X\longrightarrow\bot$ and . Since  $\mathscr{C}_c\tuple{C_0,\mathcal{O}}$ is an ontology category, it has $\pi(C \sqcap \mathsf{NNF}(\neg C))$ and $\pi(X)\longrightarrow \pi(\bot)$.
  This implies that Property~(\ref{claim:1}) and the direction ``$\Longrightarrow$" of Property (\ref{claim:2}) are preserved. 
  
  \item \textbf{[$\sqcap$-rule]} and \textbf{[$\sqcup$-rule]}. These rules are applied to objects $C\sqcap D$ and $C\sqcup D$ for adding to  $\mathscr{C}_c$    the following arrows $C \sqcap D\longrightarrow C$, $C \sqcap D\longrightarrow D$, $C\longrightarrow C\sqcup D$, $D\longrightarrow C\sqcup D$.  Since   $\mathscr{C}_c\tuple{C_0,\mathcal{O}}$ is an ontology category, it has  $\pi(C \sqcap D)\longrightarrow \pi(C)$, $\pi(C \sqcap D)\longrightarrow \pi(D)$, $\pi(C)\longrightarrow \pi(C\sqcup D)$, $\pi(D)\longrightarrow \pi(C\sqcup D)$. Thus, Properties~(\ref{claim:1})-(\ref{claim:2}) are preserved.

     \item \textbf{[$\bot_m$-rule]}. Assume that $\mathscr{C}_c$ has an arrow $E\sqcap F\longrightarrow \bot$ and  it does not have an arrow $E \longrightarrow \mathsf{NNF}(\neg F)$ (or $F \longrightarrow \mathsf{NNF}(\neg E)$).  By the hypothesis on   Property~(\ref{claim:1}) and (\ref{claim:2}), $\mathscr{C}_c\tuple{C_0,\mathcal{O}}$ has  an arrow   $\pi(E\sqcap F)\longrightarrow \pi(\bot)$. In this case,  \textbf{[$\bot_m$-rule]} adds to   $\mathscr{C}_c$ an arrow  $E \longrightarrow \mathsf{NNF}(\neg F)$ (or $F \longrightarrow \mathsf{NNF}(\neg E)$). Since  $\mathscr{C}_c\tuple{C_0,\mathcal{O}}$ is an ontology category, it has an arrow $\pi(E)\longrightarrow \pi(\mathsf{NNF}(\neg F))$  (or $\pi(F)\longrightarrow \pi(\mathsf{NNF}(\neg E))$.   This implies that Properties~(\ref{claim:1})-(\ref{claim:2}) are preserved.
 
    \item \textbf{[$\top$-rule]} and \textbf{[$\top_m$-rule]}  Analogously.
    
    \item \textbf{[$\neg$-rule]}. Assume $\mathscr{C}_c$ has an arrow $C \longrightarrow   D$ and  it has no arrow $\mathsf{NNF}(\neg D) \longrightarrow \mathsf{NNF}(\neg C)$. By the hypothesis on   Properties~(\ref{claim:1}) and (\ref{claim:2}), there is an arrow   $\pi(C)\longrightarrow \pi(D)$. In this case,  \textbf{[$\neg$-rule]} adds to  $\mathscr{C}_c$ an arrow $\mathsf{NNF}(\neg D) \longrightarrow \mathsf{NNF}(\neg C)$. Since $\mathscr{C}_c\tuple{C_0,\mathcal{O}}$ is an ontology category, it has an arrow
    $\pi(\mathsf{NNF}(\neg D))\longrightarrow \pi(\mathsf{NNF}(\neg C))$.   Thus, Properties~(\ref{claim:1})- (\ref{claim:2}) are preserved.   
    
    \item \textbf{[$\sqcup_m$-rule]}. Assume $\mathscr{C}_c$ has an object $E\sqcup F$ and two arrows $E \longrightarrow  X$ and  $F \longrightarrow  X$  and  it has no arrow $E\sqcup F \longrightarrow X$.  By the hypothesis on   Properties~(\ref{claim:1}) and (\ref{claim:2}), there is an object $\pi(E\sqcup F)$ and arrows  $\pi(E) \longrightarrow  \pi(X)$ and  $\pi(F) \longrightarrow  \pi(X)$. In this case,  \textbf{[$\sqcup_m$-rule]} adds to  $\mathscr{C}_c$. Since $\mathscr{C}_c\tuple{C_0,\mathcal{O}}$ is an ontology category,  it has an arrow $\pi(E\sqcup F) \longrightarrow \pi(X)$. Thus, Properties~(\ref{claim:1})- (\ref{claim:2}) are preserved.   
    
    \item \textbf{[$\sqcap_m$-rule]}. Analogously.
    
    \item \textbf{[$\sqcup_\mathsf{dis}$-rule]}. Assume that $\mathscr{C}_c$ has   objects $C\sqcap (D\sqcup E)$,  $W$. It holds that $C\sqcap (D\sqcup E)$ can be rewritten as $C\sqcap (D\sqcup E)=(C_1\sqcup D_1)\sqcap \cdots \sqcap (C_n\sqcup D_n)$. For instance, if $C$ is not a conjunction, then $D,E$ and $C$ take respectively $C_1$, $D_1$, and  $(C_2\sqcup D_2)\sqcap \cdots \sqcap (C_n\sqcup D_n)$. Assume that $\mathsf{check}(\mathscr{C}_c,C\sqcap (D\sqcup E),W)$ returns ``$\mathsf{true}$" (otherwise the rule is not applicable and nothing is changed).
    This implies that for each vector $(X_1,\cdots,X_n)$ with $X_i\in \{C_i,D_i\}$,  $\mathscr{C}_c$ has  an arrow $X_{i_1}\sqcap \cdots \sqcap X_{i_j}\longrightarrow W$. By the hypothesis on   Properties~(\ref{claim:1}), $\mathscr{C}_c\tuple{C_0,\mathcal{O}}$ has  arrows  $\pi(X_{i_1}\sqcap \cdots \sqcap X_{i_j})\longrightarrow \pi(W)$, and thus $X_1\sqcap \cdots \sqcap X_n\longrightarrow \pi(W)$.  It follows that $\mathscr{C}_c\tuple{C_0,\mathcal{O}}$  has  an arrow $\bigsqcup_{X_i\in \{C_i,D_i\}}(X_1\sqcap \cdots \sqcap X_n)\longrightarrow \pi(W)$ due to Definition~\ref{def:disj}, and an arrow
    $\pi(C\sqcap (D\sqcup E))\longrightarrow  \pi(W)$ due to Property~(\ref{conj2}) in Definition~\ref{def:conj}.  In this case, \textbf{[$\sqcup_\mathsf{dis}$-rule]} adds to $\mathscr{C}_c$ an arrows $C\sqcap (D\sqcup E)\longrightarrow W$.     Thus, Properties~(\ref{claim:1}) and (\ref{claim:2}) are preserved.
 
    \item\textbf{[$\exists$-rule]}. Assume that $\mathscr{C}_c$ has  an object $\exists R.D$.   By the hypothesis on   Properties~(\ref{claim:1}) we have 
$\pi(\exists R.D)=\exists R.D$. In this case, \textbf{[$\exists$-rule]} adds to $\mathscr{C}_c$  a role object $R_{(\exists R.D)}$, concept objects $\mathsf{dom}(R_{(\exists R.D)})$, $\mathsf{cod}(R_{(\exists R.D)})$ with arrows $\mathsf{dom}(R_{(\exists R.D)})\longleftrightarrows \exists R.D$. Since  $\mathscr{C}_c\tuple{C_0,\mathcal{O}}$ is an ontology category, it has also  $R_{(\exists R.D)}$, concept objects $\mathsf{dom}(R_{(\exists R.D)})$, $\mathsf{cod}(R_{(\exists R.D)})$ with arrows $\mathsf{dom}(R_{(\exists R.D)})\longleftrightarrows \exists R.D$. We can extend $\pi$ with  $\pi(\mathsf{dom}(R_{(\exists R.D)}))=\mathsf{dom}(R_{(\exists R.D)})$,  $\pi(\mathsf{dom}(R_{(\exists R.D)}))=\mathsf{dom}(R_{(\exists R.D)})$, and $\pi(R_{(\exists R.D)})=R_{(\exists R.D)}$.   Thus, Properties~(\ref{claim:1}) and (\ref{claim:2}) are preserved.
    
    \item\textbf{[$\exists_m$-rule]}. Assume that $\mathscr{C}_c$ has  an object $\exists R.D$ and a role object $R'$ such that $R'\longrightarrow R$ and $\mathsf{cod}(R') \longrightarrow D$. By the hypothesis on   Properties~(\ref{claim:1}) and (\ref{claim:2}) we have 
$\pi(\exists R.D)=\exists R.D$, $\pi(\mathsf{cod}(R'))=\mathsf{cod}(R')$, $\pi(D)=D$ with  $\pi(\mathsf{cod}(R')) \longrightarrow \pi(D)$ and $\pi(R') \longrightarrow \pi(R)$.  In this case, \textbf{[$\exists_m$-rule]} adds to $\mathscr{C}_c$  an arrow $\mathsf{dom}(R') \longrightarrow \mathsf{dom}(R_{(\exists R.D)})$. Since   $\mathscr{C}_c\tuple{C_0,\mathcal{O}}$  is an ontology category, it has $\pi(\mathsf{dom}(R')) \longrightarrow \pi(\mathsf{dom}(R_{(\exists R.D)}))$. Hence, Properties~(\ref{claim:1}) and (\ref{claim:2}) are preserved.
    
    \item \textbf{[$\forall$-rule]}. Assume that $\mathscr{C}_c$ has an object $\forall R.C$, and it has   no arrows $\forall R.C\longleftrightarrows \neg \exists R.\neg C$.  By the hypothesis on   Properties~(\ref{claim:1}) and (\ref{claim:2}) we have 
$\pi(\exists R.D)=\exists R.D$.  In this case, \textbf{[$\exists_m$-rule]} adds to $\mathscr{C}_c$  arrows $\forall R.C\longleftrightarrows \neg \exists R.\neg C$. Since   $\mathscr{C}_c\tuple{C_0,\mathcal{O}}$ is an ontology category, it has and $\pi(\forall R.C)\longleftrightarrows \pi(\neg \exists R.\neg C)$. Hence, Properties~(\ref{claim:1}) and (\ref{claim:2}) are preserved.
    
    \item  \textbf{[$\mathsf{dc}$-rule]}. Analogously.
    
    \item \textbf{[$\mathsf{trans}$-rule]}. Assume $\mathscr{C}_c$ has arrows $C \longrightarrow   D$ and $D \longrightarrow   E$,  and  it has no arrow $C\longrightarrow E$. By the hypothesis on   Properties~(\ref{claim:1}) and (\ref{claim:2}), there is an arrow   $\pi(C)\longrightarrow \pi(D)$. In this case,
\end{enumerate}

To prove the direction ``$\Longleftarrow$" of Property~(\ref{claim:2}) for \textbf{[$\bot$-rule]} and  \textbf{[$\mathsf{trans}$-rule]}, we need an intermediate result. Assume that there is an object of the form $C=(C_1\sqcup D_1) \sqcap  \cdots \sqcap (C_n\sqcup D_n)$. If we apply Property~(\ref{conj2}) in Definition~\ref{def:conj} to $C$ and new objects of the same form resulting from these applications, then we can obtain a tree of objects denoted $T_C$ whose root is $C$, and if a node $C'$ of $T_C$ is of the form $C'=W\sqcap (U\sqcup V)$, then $C'$ has three successors $C'_0=(W\sqcap U) \sqcup (W\sqcap V)$, $C'_1=W\sqcap U$ and $C'_2=W\sqcap V$.     In this case, each object of the form $X_1\sqcap \cdots \sqcap X_n$ is a leaf of $T_C$ with $X_i\in \{C_i,D_i\}$. Note that if $\mathscr{C}_c$ has  an object of the form  $W\sqcap (U\sqcap V)$ and there is no object $(W\sqcap U)\sqcup (W\sqcap V)$, then no saturation rule adds  $(W\sqcap U)\sqcup (W\sqcap V)$ to $\mathscr{C}_c$. Moreover, we have $W\sqcap (U\sqcup V)\longleftrightarrows (W\sqcap U) \sqcup (W\sqcap V)$.    This implies that if there is some $Z\notin \mathsf{Ob}(\mathscr{C}_c)$ and $Z\in \mathsf{Ob}(\mathscr{C}_c\tuple{C_0,\mathcal{O}})$ and $Z$ is not of the form  $(W\sqcap U)\sqcup (W\sqcap V)$, then $Z$ must be a conjunction which is a node of a tree $T_{W\sqcap (U\sqcap V)}$. With such an object  $Z$, we show the following property.
\begin{align*}
& \pi(X)  \longrightarrow Z, Z  \longrightarrow \pi(Y)  \text{ in } \mathscr{C}_c\tuple{C_0,\mathcal{O}} \Longrightarrow X\longrightarrow Y \text{ in } \mathscr{C}_c ~(\dagger)    
\end{align*}
 
First, it is straightforward to show that $(\dagger)$ implies  
 the direction ``$\Longleftarrow$" of Property~(\ref{claim:2}) for    \textbf{[$\mathsf{trans}$-rule]}. We can applies $(\dagger)$ to $\pi(X)\longrightarrow Z \sqcap \mathsf{NNF}(\neg Z)$ to show the direction ``$\Longleftarrow$" of Property~(\ref{claim:2}) for    \textbf{[$\bot$-rule]}. We now show  $(\dagger)$.
 
Indeed, since $Z$ is a conjunction of the  form $C\sqcap D$, we have $\pi(X)  \longrightarrow Z$ is added to $\mathscr{C}_c\tuple{C_0,\mathcal{O}}$ by Definition~\ref{def:conj} if   $\mathscr{C}_c\tuple{C_0,\mathcal{O}}$ has arrows  $\pi(X)\longrightarrow \pi(C)$ and $\pi(X)\longrightarrow \pi(D)$. By  the hypothesis on   Properties~(\ref{claim:1}) and (\ref{claim:2}),  $\mathscr{C}_c$ has arrows $X\longrightarrow C$ and $X\longrightarrow D$. Since $Z\in \mathsf{Ob}(\mathscr{C}_c\tuple{C_0,\mathcal{O}}) \setminus \mathsf{Ob}(\mathscr{C}_c)$, it is not possible that $Z\sqsubseteq \pi(Y)$ is an axiom. Moreover, since $Z$ is an conjunction of the  form $C\sqcap D$, and $Z  \longrightarrow \pi(Y) $ is an arrow of $\mathscr{C}_c\tuple{C_0,\mathcal{O}}$, it follows that either $C=\mathsf{NNF}(\neg D)$, $\pi(Y)=\bot$, or $\mathscr{C}_c\tuple{C_0,\mathcal{O}}$ has an arrow $C  \longrightarrow \pi(Y)$ or $D  \longrightarrow \pi(Y)$.  By  the hypothesis on   Properties~(\ref{claim:1}) and (\ref{claim:2}),  $\mathscr{C}_c$ has an arrow $X\longrightarrow Y$. Hence, $(\dagger)$. is proved.

According to Lemma~\ref{lem:complexity}, Algorithm~\ref{algo:consistency} terminates. 
Due to  Properties~(\ref{claim:1}) and (\ref{claim:2}), $\pi$ is an injection. 
By hypothesis, $\mathsf{Hom}(\mathscr{C}_c\tuple{C_0,\mathcal{O}})$ contains an arrow $\pi(C_0)\longrightarrow \pi(\bot)$. By definition, the presence of $\pi(C_0)\longrightarrow \pi(\bot)$  must be due to the properties in   Definitions~\ref{def:syntax-cat}-\ref{def:exist}  and Property~(\ref{all-arrow1}) in Definition~\ref{def:forall}. By  Properties~(\ref{claim:1}) and (\ref{claim:2}) on the function $\pi$,  $\mathsf{Hom}(\mathscr{C}_c)$  contains an arrow $C_0\longrightarrow \bot$, and thus Algorithm~\ref{algo:consistency} return ``$\mathsf{false}$". This completes the proof. $\hfill\square$
\end{proof}

The following theorem is a consequence of Lemmas~\ref{lem:complexity} and \ref{lem:sndn-comp}.

\begin{theorem} Satisfiability of  an  $\mathcal{ALC}_{\overline{\forall}}$ concept with respect to an ontology can be decided in   polynomial space.
\end{theorem}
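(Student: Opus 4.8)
The plan is to read the result off as a corollary of the two preceding lemmas about Algorithm~\ref{algo:consistency}. First I would observe that, by Definition~\ref{def:concept-sat}, deciding satisfiability of an $\mathcal{ALC}_{\overline{\forall}}$ concept $C_0$ with respect to an ontology $\mathcal{O}$ amounts to deciding whether there exists an ontology category $\mathscr{C}_c\langle C_0,\mathcal{O}\rangle$ containing an arrow $C_0\longrightarrow\bot$. Lemma~\ref{lem:sndn-comp} states precisely that Algorithm~\ref{algo:consistency} on input $\langle C_0,\mathcal{O}\rangle$ returns ``$\mathsf{false}$'' iff such a category exists, i.e. iff $C_0$ is unsatisfiable with respect to $\mathcal{O}$; equivalently, it returns ``$\mathsf{true}$'' iff $C_0$ is satisfiable. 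Hence Algorithm~\ref{algo:consistency} is a correct decision procedure for the problem, and it only remains to bound its space consumption.

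Next I would invoke the space bound. Lemma~\ref{lem:complexity} establishes that the number of objects ever added to $\mathscr{C}_c$ and to $\mathscr{C}_r$ is $O(n^2)$, where $n=\norm{\langle C_0,\mathcal{O}\rangle}$, and that at most two arrows are kept between any pair of objects; thus the saturated pair of categories has size polynomial in $n$, and in particular the data structures maintained by the algorithm fit in polynomial space. Since every application of a saturation rule from Table~\ref{tab:satRules} strictly enlarges the (monotonically growing) collections $\mathsf{Ob}(\mathscr{C})$ or $\mathsf{Hom}(\mathscr{C})$, the while-loop of Line~\ref{algo:l1} can execute only polynomially many times, so the procedure terminates. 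Each individual rule application is testable and performable in polynomial space, including the call to $\mathsf{check}$ (Algorithm~\ref{algo:check}), which, as noted in the proof of Lemma~\ref{lem:complexity}, is implemented by enumerating the binary counters $0,\dots,2^{n}-1$ and reusing the same working space across iterations. A polynomial number of polynomial-space steps still runs in polynomial space, so Algorithm~\ref{algo:consistency} runs in polynomial space on $\langle C_0,\mathcal{O}\rangle$.

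Combining the two points, Algorithm~\ref{algo:consistency} decides satisfiability of an $\mathcal{ALC}_{\overline{\forall}}$ concept with respect to an ontology in polynomial space, which is exactly the claim. There is no genuine obstacle here: the result is a straightforward assembly of Lemmas~\ref{lem:complexity} and \ref{lem:sndn-comp}. The one point that requires a moment of care — and which is already discharged inside the proof of Lemma~\ref{lem:complexity} — is that the auxiliary routine $\mathsf{check}$ ranges over exponentially many tuples $(X_1,\dots,X_n)$ yet must never store them all simultaneously; the counter encoding keeps its working memory polynomial, and everything else is routine.
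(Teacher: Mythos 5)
Your proposal is correct and follows exactly the paper's own route: the theorem is stated there as an immediate consequence of Lemma~\ref{lem:complexity} (polynomial space bound for Algorithm~\ref{algo:consistency}) and Lemma~\ref{lem:sndn-comp} (soundness and completeness), which is precisely the assembly you carry out. Your added remarks on termination via monotone growth of $\mathsf{Ob}(\mathscr{C})$ and $\mathsf{Hom}(\mathscr{C})$ and on the counter-based implementation of $\mathsf{check}$ only make explicit details the paper leaves implicit.
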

 
\begin{example}
Consider the following
$\mathcal{ALC}_{\overline{\forall}}$ ontology :

$
\mathcal{O}
= \{A\sqsubseteq\exists R.C, A\sqsubseteq\forall R.D, D\sqsubseteq\neg C\}$.

Let us check whether $A$ is satisfiable  with respect to $\mathcal{O}$. According to Definition~\ref{def:concept-sat},  $A$ is not satisfiable with respect to $\mathcal{O}$ if there exists  $\mathscr{C}_c\langle{A,\mathcal{O}}\rangle$ that has an arrow $A\longrightarrow \bot$.  For this, we  derive  from the definitions the arrows in the following blocks (I)-(V), and add them to  $\mathscr{C}_c\langle{A,\mathcal{O}}\rangle$:

\begin{center}
\begin{tabular}{|c|l|l|}
\hline
(I)&$A\longrightarrow \exists R.C$& 
$A\longrightarrow \forall R.D$\\ 
&$D\longrightarrow \neg C$& \\
\hline
(II)&$C\sqcap \neg C\longrightarrow \bot$& $D\sqcap  C\longrightarrow   \bot$\\
&$C\longrightarrow \neg D$&\\
\hline
(III)&$\exists R.C\ \longleftrightarrows \mathsf{dom}(R_{(\exists R.C)})$&
$\mathsf{cod}(R_{(\exists R.C)})\longrightarrow C$\\
\hline
(IV)&$\forall R.D \longleftrightarrows \neg \mathsf{dom}(R_{(\exists R.\neg D)})$&$\exists R.\neg D\ \longleftrightarrows \mathsf{dom}(R_{(\exists R.\neg D)})$\\
&$\mathsf{cod}(R_{(\exists R.\neg D)})\longrightarrow \neg D$&\\
\hline
(V)&$\mathsf{dom}(R_{(\exists R. C)})\longrightarrow  \mathsf{dom}(R_{(\exists R.\neg D)})$& $A\longrightarrow \mathsf{dom}(R_{(\exists R.\neg D)})$\\
&$A\longrightarrow \neg \mathsf{dom}(R_{(\exists R.\neg D)})$&
$A\longrightarrow\bot$\\
\hline
\end{tabular}
\end{center}


Let's check that $\mathscr{C}_c\langle{A,\mathcal{O}}\rangle$  satisfies  Definitions~\ref{def:syntax-cat}-\ref{def:exist} and Property~(\ref{all-arrow1}) in Definition~\ref{def:forall} and it has an arrow $A\longrightarrow \bot$. Indeed,  the arrows in Block~(I) are  added to  $\mathscr{C}_c\langle{A,\mathcal{O}}\rangle$ by Definition~\ref{def:onto-cat}; those in Block~(II)  added   by Definition~\ref{def:neg}; those in Block~(III) added   by Definition~\ref{def:exist}; those in Block~(IV) added   by Definitions~\ref{def:forall} and \ref{def:exist}; and those in Block~(V) added   by Definitions~\ref{def:exist} and \ref{def:neg}. The other arrows   that do not directly contribute to adding $A\longrightarrow \bot$ are omitted.

Now, we apply Algorithm~\ref{algo:consistency} to $(\mathcal{A}, \mathcal{O})$ in order to find an arrow $A\longrightarrow\bot$. We omit most of   
objects and arrows that can be added by  Algorithm~\ref{algo:consistency}, and we focus on those which lead to adding $A\longrightarrow\bot$. 
\begin{center}
\begin{tabular}{|l|l|c|}
\hline
\textbf{Applied rule} & \textbf{Pre-requisite} & \textbf{Arrows (and objects)}\\
& & \textbf{added} to $\Catc$\\
\hline
\textbf{[Initialisation]} & None. & $A\longrightarrow\exists R.C\,(1)$\\ 
\textit{(Line 3 to 5)}& & $A\longrightarrow\forall R.D\,(2)$ \\
& & $D\longrightarrow \neg C\,(3)$ \\

\hline 
\textbf{[$\neg$-rule]} & Arrow (3) is in $\Catc$, 
& $C\longrightarrow\neg D\,(4)$\\  & but no arrow $C\longrightarrow\neg D$. & \\ 

\hline
\textbf{[$\exists$-rule]} & $\exists R.C$ in $\Catc$, & $\mathsf{dom}(R_{(\exists 
R.C)})  \longleftrightarrows\exists R.C\,(5)$\\ & but no object $R_{(\exists R.C)}$ in
$\mathscr{C}_r$. & $\mathsf{cod}(R_{(\exists  R.C)})\longrightarrow C\,(6)$\\ 

\hline 
\textbf{[$\forall$-rule]} &$\forall R.D$ in $\Catc$, &  $\forall R.D 
\longleftrightarrows\neg\exists R.\neg D\,(7)$\\ &  but no arrow $\forall R.D 
\longleftrightarrows\neg\exists R.\neg D$.& $\exists  R.\neg D\,(8)$\\ 

\hline  
\textbf{[$\exists$-rule]} & $\exists R.\neg D$ in $\Catc$, &  
$\mathsf{dom}(R_{(\exists R.\neg D)})\longleftrightarrows \exists R.\neg D\,(9)$\\
& but no object $R_{(\exists R.\neg D)}$ in $\mathscr{C}_r$.& 
    $\mathsf{cod}(R_{(\exists R.\neg D)})\longrightarrow\neg D\,(10)$\\
\hline

\textbf{[$\mathsf{trans}$-rule]} & We have arrows (4) and (6). & 
$\mathsf{cod}(R_{(\exists R.C)})\longrightarrow\neg D\,(11)$\\

\hline \textbf{[$\exists_m$-rule]} & We have arrow (11) and arrow & 
$\mathsf{dom}(R_{(\exists R.C)})\longrightarrow$ \\
&  $R_{(\exists R.C)}\longrightarrow R$, but no arrow & $\mathsf{dom}(R_{(\exists R.\neg D)})\,(12)$ \\
& $\mathsf{dom}(R_{(\exists R.C)})\longrightarrow\mathsf{dom}(R_{(\exists 
    R.\neg D)})$. & \\
    
\hline
\textbf{[$\mathsf{trans}$-rule]}& From arrow (1) to (5) and (1) & 
       $A\longrightarrow\mathsf{dom}(R_{(\exists R.\neg D)})\,(13)$ \\
& to (12) successively.& \\

\hline
\textbf{[$\mathsf{trans}$-rule]} & From arrow (2) to (7).& $A\longrightarrow
\neg\exists R.\neg D$\\
+\textup{[$\neg$-rule]} & On arrow (9).& $\neg\exists R.\neg D\longrightarrow
\neg\mathsf{dom}(R_{(\exists R.\neg D)})$\\
+\textbf{[$\mathsf{trans}$-rule]}& From arrow (2) to last arrow.& 
    $A\longrightarrow\neg\mathsf{dom}(R_{(\exists R.\neg D)})\,(14)$\\
    
\hline 
\textbf{[$\bot$-rule]} & We have arrow (13) and (14), & $A\longrightarrow\bot
\,(15)$ \\
& but no arrow $A\longrightarrow\bot$. & \\

\hline
\end{tabular}
\end{center}
The rules never remove any arrow, thus the algorithm will always be able 
to find (line 9) the arrow $A\longrightarrow\bot$ if it is added, hence 
the algorithm return false. Since all rules are deterministic, there is 
no category without $A\longrightarrow\bot$ that can be constructed.
\end{example}

\begin{example}
Consider the following $\mathcal{ALC}_{\overline{\forall}}$ concept $C_0 = (A\sqcup B)
\sqcap(C\sqcup D)\sqcap(E\sqcup F)$ and ontology :
\[
\mathcal{O}
= \{A\sqcap C\sqsubseteq\bot, A\sqcap D\sqsubseteq\bot, B\sqcap C\sqsubseteq\bot,
B\sqcap D\sqsubseteq\bot\}
\]
Let us check whether $C_0$ is satisfiable  with respect to $\mathcal{O}$. As for the previous
example, we have from Definition~\ref{def:concept-sat} that $C_0$ is not satisfiable with respect to $\mathcal{O}$ if there exists  $\mathscr{C}_c\langle{C_0,\mathcal{O}}\rangle$ that has an arrow $C_0\longrightarrow\bot$. We derive  from the definitions the arrows in the following blocks (I)-(V), and add them to  $\mathscr{C}_c\langle{C_0,\mathcal{O}}\rangle$. We rename a particular object for the sake
of clarity, it comes from Property~(\ref{conj2}) of 
Definition~\ref{def:conj} and is  a notation :
\begin{align*}
C_1 =&(A\sqcap C\sqcap E)\sqcup(B\sqcap C\sqcap E)\sqcup(A\sqcap D\sqcap E)\\
&\sqcup(B\sqcap D\sqcap E)\sqcup(A\sqcap C\sqcap F)\sqcup(B\sqcap C\sqcap F)\\
&\sqcup(A\sqcap D\sqcap F)\sqcup(B\sqcap D\sqcap F)
\end{align*}
we omit the intermediate arrows for the sake of clarity.
\begin{center}
\begin{tabular}{|c|l|l|l|}
\hline
(I)&$A\sqcap C\longrightarrow\bot$ & $A\sqcap D\longrightarrow\bot$ & $B\sqcap 
C\longrightarrow\bot$\\ 
& $B\sqcap D\longrightarrow\bot$ &  &\\
\hline
(II)& $C\longrightarrow\neg A$ & $D\longrightarrow\neg A$ & $C\longrightarrow\neg B$\\
& $D\longrightarrow\neg B$ &  & \\
\hline
(III)&$A\longrightarrow A\sqcup B$ & $B\longrightarrow A\sqcup B$ & 
$D\longrightarrow C\sqcup D$\\
& $C\longrightarrow C\sqcup D$ & $F\longrightarrow E\sqcup F$ & 
$E\longrightarrow E\sqcup F$\\
& $A\sqcap C\sqcap E\longrightarrow C_1$ & $B\sqcap C\sqcap E\longrightarrow C_1$ & 
$A\sqcap D\sqcap E\longrightarrow C_1$\\
& $A\sqcap C\sqcap F\longrightarrow C_1$ & $B\sqcap D\sqcap E\longrightarrow C_1$ & 
$B\sqcap C\sqcap F\longrightarrow C_1$\\
& $A\sqcap D\sqcap F\longrightarrow C_1$ & $B\sqcap D\sqcap F\longrightarrow C_1$ & 
$C_1\longrightarrow\bot$\\
\hline
(IV)& $C_0\longrightarrow A\sqcup B$ & $C_0\longrightarrow C\sqcup D$ & 
$C_0\longrightarrow E\sqcup F$\\
& $C_0\longrightarrow C_1$ & $A\sqcap C\sqcap E\longrightarrow A\sqcap C$
& $B\sqcap C\sqcap E\longrightarrow B\sqcap C$\\
& $A\sqcap D\sqcap E\longrightarrow A\sqcap D$ & $B\sqcap D\sqcap E\longrightarrow B\sqcap D$
& $A\sqcap C\sqcap F\longrightarrow A\sqcap C$\\
& $B\sqcap C\sqcap F\longrightarrow B\sqcap C$ & $A\sqcap D\sqcap F\longrightarrow A\sqcap D$
& $B\sqcap D\sqcap F\longrightarrow B\sqcap D$\\
\hline
(V)  & $A\sqcap C\sqcap E\longrightarrow\bot$ & $B\sqcap C\sqcap E\longrightarrow\bot$
&$A\sqcap D\sqcap E\longrightarrow\bot$ \\
& $B\sqcap D\sqcap E\longrightarrow\bot$ & $A\sqcap C\sqcap F\longrightarrow\bot$
& $B\sqcap C\sqcap F\longrightarrow\bot$ \\
& $A\sqcap D\sqcap F\longrightarrow\bot$ & $B\sqcap D\sqcap F\longrightarrow\bot$ 
& $C_0\longrightarrow\bot$\\
\hline
\end{tabular}
\end{center}
Let's check that $\mathscr{C}_c\langle{C_0,\mathcal{O}}\rangle$  satisfies  Definitions~\ref{def:syntax-cat}-\ref{def:exist} and Property~(\ref{all-arrow1}) in Definition~\ref{def:forall} and it has an arrow $C_0\longrightarrow \bot$. Indeed,  the arrows in Block~(I) are  added to  $\mathscr{C}_c\langle{C_0,\mathcal{O}}\rangle$ by Definition~\ref{def:onto-cat}; those in Block~(II)  added   by Definition~\ref{def:neg}; those in Block~(III) added   by Definition~\ref{def:disj}; those in Block~(IV) added   by Definitions~\ref{def:conj}; and those in Block~(V) added   by Definitions~\ref{def:syntax-cat}.

As before, we apply Algorithm~\ref{algo:consistency} to $(C_0, \mathcal{O})$ in order to find an arrow $C_0\longrightarrow\bot$. 

\begin{center}
\begin{tabular}{|l|l|c|}
\hline
\textbf{Applied rule} & \textbf{Pre-requisite} & \textbf{Arrows added to $\Catc$}\\
\hline
\textbf{[Initialisation]} & None. & $A\sqcap C\longrightarrow\bot\,(1)$\\ 
\textit{(Line 3 to 5)}& & $B\sqcap C\longrightarrow\bot\,(2)$ \\
& & $A\sqcap D\longrightarrow\bot\,(3)$ \\
& & $B\sqcap D\longrightarrow\bot\,(4)$ \\
\hline 
\hline
\textbf{Algorithm~\ref{algo:check}} & $\mathsf{check}(\Catc, C_0, \bot)$ & $C_0 = (A\sqcup 
B)\sqcap(C\sqcup D)\sqcap(E\sqcup F)$\\
\hline
1st loop & $A\sqcap C\sqcap E$ & (1), \textbf{[$\sqcap$-rule]} and 
\textbf{[$\mathsf{trans}$-rule]}\\
& & imply $A\sqcap C\sqcap E\longrightarrow\bot$\\
\hline
2nd loop & $B\sqcap C\sqcap E$ & (2), \textbf{[$\sqcap$-rule]} and 
\textbf{[$\mathsf{trans}$-rule]}\\
& & imply $B\sqcap C\sqcap E\longrightarrow\bot$\\
\hline
3rd loop & $A\sqcap D\sqcap E$ & (3), \textbf{[$\sqcap$-rule]} and 
\textbf{[$\mathsf{trans}$-rule]}\\
& & imply $A\sqcap D\sqcap E\longrightarrow\bot$\\
\hline
4th loop & $B\sqcap D\sqcap E$ & (4), \textbf{[$\sqcap$-rule]} and 
\textbf{[$\mathsf{trans}$-rule]}\\
& & imply $B\sqcap D\sqcap E\longrightarrow\bot$\\
\hline
5th loop & $A\sqcap C\sqcap F$ & (1), \textbf{[$\sqcap$-rule]} and 
\textbf{[$\mathsf{trans}$-rule]}\\
& & imply $A\sqcap C\sqcap F\longrightarrow\bot$\\
\hline
6th loop & $B\sqcap C\sqcap F$ & (2), \textbf{[$\sqcap$-rule]} and 
\textbf{[$\mathsf{trans}$-rule]}\\
& & imply $B\sqcap C\sqcap F\longrightarrow\bot$\\
\hline
7th loop & $A\sqcap D\sqcap F$ & (3), \textbf{[$\sqcap$-rule]} and 
\textbf{[$\mathsf{trans}$-rule]}\\
& & imply $A\sqcap D\sqcap F\longrightarrow\bot$\\
\hline
8th loop & $B\sqcap D\sqcap F$ & (4), \textbf{[$\sqcap$-rule]} and 
\textbf{[$\mathsf{trans}$-rule]}\\
& & imply $B\sqcap D\sqcap F\longrightarrow\bot$\\
\hline
End loop & For each $1\leq i\leq 3$ with & \\
& and  $X_i\in\{A, B\}, Y_i\in\{C,D\},$ & $\mathsf{check}$ returns\\
&  $Z_i\in\{E,D\}$, we have an arrow& \textbf{true}\\
&  $X_i\sqcap Y_i\sqcap Z_i\longrightarrow\bot$.&\\
\hline
\hline
\textbf{[$\sqcup_{\mathsf{dis}}$-rule]} & There is no arrow $C_0\longrightarrow\bot$ and 
& $C_0\longrightarrow\bot$\\
& $\mathsf{check}(\Catc, C_0, \bot)$ returns $\textbf{true}$. & \\
\hline
\end{tabular}
\end{center}

By the same arguments as the previous example, there is no category without 
$C_0\longleftrightarrows\bot$ that can be constructed, hence $C_0$ is unstatisfiable.
\end{example}
 
\section{Conclusion} We have presented a rewriting of the usual set-theoretical semantics of $\mathcal{ALC}$  by using categorial language.   Thanks to the modular representation of  category-theoretical semantics  composed of separate constraints, we identify a sublogic  of $\mathcal{ALC}$, namely $\mathcal{ALC}_{\overline{\forall}}$, and show that it is strictly from $\mathcal{ALC}$. We also proposed a {\sc{PSPACE}} non-deterministic algorithm  for checking  concept unsatisfiability  in  $\mathcal{ALC}_{\overline{\forall}}$, which implies that $\mathcal{ALC}_{\overline{\forall}}$ is {\sc{PSPACE}}. For future work, we will investigate the question whether $\mathcal{ALC}_{\overline{\forall}}$ is   {\sc{PSPACE}}-complete. This question is open because $\mathcal{ALC}_{\overline{\forall}}$ (with general TBoxes) may not be included in  $\mathcal{ALC}$ without TBox. Moreover, we believe that category-theoretical semantics can be extended to more expressive DLs with role constructors. For instance, role functionality can be expressed as \emph{monic and epic arrows} \cite{gol06}.

\bibliographystyle{elsarticle-num} \bibliography{main}
\end{document}